\documentclass[11pt, letterpaper]{amsart}
\usepackage{graphicx, amssymb, color}
\usepackage{amsmath}
\usepackage{enumerate}
\usepackage{dsfont}
\usepackage[round, sort&compress]{natbib}

\addtolength{\hoffset}{-1.95cm} \addtolength{\textwidth}{3.9cm}
\addtolength{\voffset}{-1.7cm}
\addtolength{\textheight}{2.4cm}

\newtheorem{thm}{Theorem}[section]
\newtheorem{cor}[thm]{Corollary}
\newtheorem{lem}[thm]{Lemma}
\newtheorem{prop}[thm]{Proposition}
\theoremstyle{definition}
\newtheorem{defn}[thm]{Definition}

\newtheorem{ass}[thm]{Assumption}

\theoremstyle{remark}
\newtheorem{rem}[thm]{Remark}
\newtheorem{exa}[thm]{Example}

\numberwithin{equation}{section}

\newcommand{\Real}{\mathbb R}

\newcommand{\F}{\mathcal{F}}

\newcommand{\prob}{\mathbb{P}}
\newcommand{\expec}{\mathbb{E}}

\newcommand{\pare}[1]{\left(#1\right)}
\newcommand{\bra}[1]{\left[#1\right]}

\newcommand{\define}[1]{{\textbf{#1}}}
\newcommand{\nada}[1]{}

\newcommand{\BMO}{\texttt{BMO}}
\DeclareMathOperator*\esssup{ess\,sup}
\DeclareMathOperator*\essinf{ess\,inf}
\newcommand{\cA}{\mathcal{A}}
\newcommand{\C}{\mathcal{C}}
\newcommand{\cD}{\mathcal{D}}
\newcommand{\fF}{\mathfrak{F}}

\newcommand{\cY}{\mathcal{Y}}
\newcommand{\cZ}{\mathcal{Z}}

\newcommand{\cV}{\mathcal{V}}
\newcommand{\cU}{\mathcal{U}}
\newcommand{\bU}{\mathbb{U}}
\newcommand{\bV}{\mathbb{V}}
\newcommand{\cW}{\mathcal{W}}


\begin{document}

\title{Convex duality for stochastic differential utility}\thanks{The authors are grateful to Sara Biagini and Paolo Guasoni for inspiring discussions.}

\author[]{Anis Matoussi}
\address[Anis Matoussi]{CMAP, Ecole Polytechnique, Paris, and Universit\'{e} du Maine, Le Mans}
\email{anis.matoussi@univ-lemans.fr}

\author[]{Hao Xing}
\address[Hao Xing]{Department of Statistics,
London School of Economics and Political Science,
10 Houghton st,
London, WC2A 2AE,
UK}
\email{h.xing@lse.ac.uk}

\begin{abstract}
 This paper introduces a dual problem to study a continuous-time consumption and investment problem with incomplete markets and stochastic differential utility. For Epstein-Zin utility, duality between the primal and dual problems is established. Consequently the optimal strategy of the consumption and investment problem is identified without assuming several technical conditions on market model, utility specification, and agent's admissible strategy. Meanwhile the minimizer of the dual problem is identified as the utility gradient of the primal value and is economically interpreted as the ``least favorable" completion of the market.
\end{abstract}

\keywords{Consumption investment optimization, Convex duality, Stochastic differential utility, Backward stochastic differential equation}

\date{\today}

\maketitle

\section{Introduction}\label{sec: intro}
Classical asset pricing theory in the representation agent framework assumes that the representative agent's preference is modeled by a time-additive Von Neumann-Morgenstein utility. This specification restricts the relationship between risk aversion and intertemporal substitutability, leading to a rich literature of asset pricing anomalies, such as low risk premium and high risk-free rate. To disentangle risk aversion and intertemporal substitutability, the notion of recursive utility was introduced by \cite{Kreps-Porteus}, \cite{Epstein-Zin}, \cite{Weil}, amongst others. Its continuous-time analogue, stochastic differential utility, was defined by \cite{Epstein} for deterministic setting and \cite{Duffie-Epstein} in stochastic environment. The connection  between recursive utility and stochastic differential utility has also been rigorously established by  \cite{Kraft-Seifried} recently. Recursive utility and its continuous-time analogue generalize time-additive utility and provide a flexible framework to tackle aforementioned asset pricing anomalies, cf. \cite{Bansal-Yaron},  \cite{Bhamra-et-al}, \cite{Benzoni-et-al}, amongst others.

The asset pricing theory for recursive utility and stochastic differential utility builds on the optimal consumption and investment problems. For Epstein-Zin utility, a specification widely used in aforementioned asset pricing applications, its continuous-time optimal consumption and investment problems have been studied by \cite{Schroder-Skiadas-JET, Schroder-Skiadas-SPA},  \cite{Chacko-Viceira}, \cite{KSS-FS}, \cite{Kraft-Seiferling-Seifried}, and \cite{Xing}. These studies mainly utilize stochastic control techniques, either Hamilton-Jacobi-Bellman equation (HJB) in Markovian setting or backward stochastic differential equation (BSDE) in non-Markovian setting, to tackle the optimization problem directly. We call this class of methods \emph{primal approach}. However, the HJB equations rising from these problems are typically nonlinear and BSDEs are usually nonstandard. Therefore current available results obtained via primal approach still come with unsatisfactory restrictions on either market model, utility specification, or agent's admissible action.

In contrast, when portfolio optimization problems for time-additive utility are considered, a martingale (or duality) approach was introduced by  \cite{Pliska}, \cite{Cox-Huang}, \cite{Karatzas-Lehoczky-Shreve},  \cite{Karatzas-Lehoczky-Shreve-Xu}, \cite{He-Pearson}. Instead of tackling the primal optimization problem directly, a dual problem was introduced whose solution leads to the solution of the primal problem via the first order condition. This dual approach allows to strip away unnecessary assumptions and solve portfolio optimization problems with minimal assumptions on market model and utility, cf. \cite{Kramkov-Schachermayer-99, Kramkov-Schachermayer-03} for terminal consumption, \cite{Karatzas-Zitkovic} for intertemporal consumption.

This paper proposes a dual problem for an optimal consumption and investment problem in incomplete markets with stochastic differential utility. It is a minimization problem of a convex functional of state price densities (deflators). Similar to the primal problem, the dual value process aggregates the state price density and future evolution of the dual value process. Hence the dual problem also takes a recursive form, we call it \emph{stochastic differential dual}. Similar to time-additive utility, solution of this dual problem can be economically interpreted as the \emph{least favorable completion} of the market, i.e., the agent's optimal portfolio does not consist of the fictitious assets which are introduced to complete the market, cf. \cite{He-Pearson} and \cite{Karatzas-Lehoczky-Shreve-Xu}.

In contrast to time-additive utility, the convex functional appearing in the dual problem does not follow directly from applying Fenchel-Legendre transformation to the utility function. Instead we utilize a variational representation of recursive utility, introduced by \cite{Geoffard}, \cite{ElKaroui-Peng-Quenez} and \cite{Dumas-Uppal-Wang}, to transform the primal problem to a minmax problem, which leads to a variational representation of the dual problem. This dual variational representation can be transformed back to a recursive form, when the aggregrator of the recursive utility is homothetic in the consumption variable. Therefore this approach works particularly well for Epstein-Zin utility, without any assumption on risk aversion and elasticity of intertemporal substitution (EIS).

The dual problem gives rise to an inequality between the primal value function and the concave conjugate of the dual value function. When this inequality is an identity, there is \emph{duality} between primal and dual problems, or there is \emph{no duality gap}. Consider market models whose investment opportunities are driven by some state variables. We obtain duality in two situations: 1) non-Markovian models with bounded market price of risk, together with all possibilities on risk aversion and EIS whose associated Epstein-Zin utilities are known to exist; 2) Markovian models with unbounded market price of risk, including Heston model and Kim-Omberg model, when risk aversion and EIS are both in excess of one. This later market and utility specification are widely used in aforementioned asset pricing applications.

The duality between primal and dual problems allow us to simultaneously verify the primal and dual optimizers. On the primal side, technical conditions on utility and market model are removed. In particular, in contrast to the permissible class of strategies considered in \cite{Xing}, the primal optimality is established in the standard admissible class, which consists of all nonnegative self-financing wealth processes. On the dual side, the super-differential of the primal value is identified as the minimizer of the dual problem, extending this well known result from time-additive utility to stochastic differential utility. In the primal approach, super-differential of the primal value was mainly identified via the \emph{utility gradient} approach by \cite{Duffie-Skiadas}. In this approach, one needs to show that the sum of the deflated wealth process and integral of the deflated consumption stream is a martingale for candidate optimal strategy. This martingale property now becomes a direct consequence of duality.

The remaining of the paper is organized as fellows. After the dual problem is introduced for general stochastic differential utility in Section \ref{sec: dual}, it is specified to Epstein-Zin utility. The main results are presented in Section \ref{sec: no gap} where duality is established for two market and utility settings. In the second setting, we first introduce two abstract conditions which lead to duality. These abstract conditions are then specified as explicit parameter conditions in two examples. All proofs are postponed to appendix.

\section{Dual problem}\label{sec: dual}
\subsection{General setting}
 Let $(\Omega, (\F_t)_{0\leq t\leq T}, \F, \prob)$ be a filtrated probability space whose filtration $(\F_t)_{0\leq t\leq T}$ satisfies the usual assumptions of completeness and right-continuity. Let $\C$ be the class of nonnegative progressively measurable processes defined on $[0,T]$. For $c\in \C$, $c_t$, $t<T$, represents the consumption rate at time $t$ and $c_T$ stands for the lump sum bequest consumption at time $T$. The preference over $\C$-valued consumption stream is described by a stochastic differential utility, cf. \cite{Duffie-Epstein}.

 \begin{defn}\label{def: SDU}
  Given a bequest utility $U_T: \Real \rightarrow \Real$ and an \define{aggregator} $f: (0,\infty) \times \Real \rightarrow \Real$, a \define{stochastic differential utility} for $c\in \C$ is a semimartingale $(U^c_t)_{0\leq t\leq T}$ satisfying
  \begin{equation}\label{eq:sdu}
   U^c_t = \expec_t \Big[U_T(c_T) + \int_t^T f(c_s, U^c_s) ds\Big], \quad \text{ for all } t\leq T.
  \end{equation}
  Here $\expec_t[\cdot]$ stands for the conditional expectation $\expec[\cdot | \F_t]$.
 \end{defn}
 We assume that any utility process is an element of a class of processes $\cU$. Such a class will be specified in the next section when we focus on a specific class of stochastic differential utilities. For $c\in \C$, if the associated stochastic differential utility $U^c$ exists and $U^c\in \cU$, we call $c$ \emph{admissible} and denote the class of admissible consumption streams by $\C_a$.

 Consider a model of financial market with assets $S= (S^0, S^1, \dots, S^n)$, where $S^0$ is the price of a riskless asset, $(S^1, \cdots, S^n)$ are prices for risky assets, and $S$ is assumed to be a semimartingale whose components are all positive.

 An agent, starting with an initial capital $w>0$, invests in this market by choosing a portfolio represented by a predictable, $S$-integrable process $\pi= (\pi^0, \pi^1, \dots, \pi^n)$. With $\pi^i_t$ representing the proportion of current wealth invested in asset $i$ at time $t$, $\pi^0_t = 1-\sum_{i=1}^n \pi^i_t$ is the proportion invested in the riskless asset. Given  an investment strategy $\pi$ and a consumption stream $c$, agent's wealth process $\cW^{(\pi, c)}$ follows
 \begin{equation}
  d \cW^{(\pi, c)}_t = \cW^{(\pi, c)}_{t-} \pi^\top_t \frac{dS_t}{S_{t-}} - c_t dt, \quad \cW^{(\pi, c)}_0= w.
 \end{equation}
 A pair of investment strategy and consumption stream $(\pi, c)$ is \emph{admissible} if $c\in \C_a$ and $\cW^{(\pi, c)}$ is nonnegative. This restriction outlaws doubling strategies and ensures existence of the associated stochastic differential utility. The class of admissible pairs is denoted by $\cA$.

 The agent aims to maximize her stochastic differential utility at time $0$ over all admissible strategies, i.e.,
 \begin{equation}\label{eq:primal}
  U_0 = \sup_{(\pi, c)\in \cA}U^c_0.
 \end{equation}
 We call \eqref{eq:primal} the \define{primal problem}. When $U^c$ is concave in $c$, there is an associated \define{dual problem}. In order to formulate the dual problem,
 we focus on a class of stochastic differential utility whose aggregator satisfies the following assumption.
 \begin{ass}\label{ass:convex}
  $f(c, u)$ is \emph{concave} in $c$ and \emph{convex} in $u$.\footnote{The case where $f(c,u)$ is concave in $u$ can be treated similarity, see Remark \ref{rem:f-concave} below.
  The convexity (resp. concavity) of $f(c,u)$ in $u$ implies preference for early (resp. late) resolution of uncertainty (cf. \cite{Kreps-Porteus} and \cite{Skiadas}).}
 \end{ass}
 The previous assumption leads to an alternative characterization of stochastic differential utility. This so called \define{variational representation} was first proposed by \cite{Geoffard} in a deterministic continuous-time setting, and extended by \cite{ElKaroui-Peng-Quenez} and \cite{Dumas-Uppal-Wang} to uncertainty. Let us recall the \define{felicity function} $F$, defined as the Fenchel-Legendre transformation of $f$ with respect to its second argument:
 \begin{equation}\label{eq:F}
  F(c, \nu) := \inf_{u\in \Real} (f(c,u)+ \nu u).
 \end{equation}
 Convex analysis implies that $F(c, \nu)$ is concave in $\nu$, $f$ and $F$ satisfy the duality relation
 \begin{equation}\label{eq:dual-fF}
  f(c, u) = \sup_{\nu \in \Real} (F(c, \nu) - \nu u).
 \end{equation}
 Moreover, one can show that $F(c, \nu)$ is concave in $c$.

 For the variational representation, depending on the integrability of $c$ and $U^c$, certain integrability assumption on the dual variable $\nu$ is needed, for example, \cite{ElKaroui-Peng-Quenez} and \cite{Dumas-Uppal-Wang} consider square integrability $\nu$ when $c$ and $U^c$ are both square integrable. Rather than imposing specific integrability condition, we work with the following class of admissible dual variables, in order to allow for a wide class of utility processes.
 \begin{defn}\label{def: dualvar}
  For a progressively measurable process $\nu$, denote
  \[\kappa^\nu_{s,t} := \exp\Big(-\int_s^t \nu_u du\Big), \quad \text{for } s,t\in [0,T].\]
  $\nu$ is \emph{admissible} if
  \begin{enumerate}
   \item[(i)] $U_t^{c, \nu} := \expec_t \big[\kappa^\nu_{t, T} U_T(c_T) + \int_t^T \kappa^\nu_{t,s} F(c_s, \nu_s) ds\big]<\infty$ for any $t\in [0,T]$ and $c\in \C_a$.
   \item[(ii)] When $U_0^{c, \nu}>-\infty$, then $\kappa^\nu_{0, \cdot}U^{c, \nu}$ is of class (D)\footnote{A progressively measurable process $X$ is of class (D) if $\{X_\tau \,|\, \tau \text{ is finite stopping time}\}$ is uniformly integrable.}.
   \item[(iii)] $\kappa^\nu_{0, \cdot} U$ is of class (D) for any $U \in \cU$.
  \end{enumerate}
  The class of admissible $\nu$ is denoted by $\cV$.
 \end{defn}

 The following result is a minor extension of \cite[Section 3.2]{ElKaroui-Peng-Quenez} and \cite[Theorem 2.1]{Dumas-Uppal-Wang}.
 \begin{lem}\label{lem:var-rep}
  Let Assumption \ref{ass:convex} holds. For any $c\in \C_a$, the following statements hold:
  \begin{enumerate}
   \item[(i)] $U^c_0 \geq \sup_{\nu\in \cV} U_0^{c, \nu}$.
   \item[(ii)] If $\nu^c := - f_u(c, U^c) \in \cV$, then the inequality in part (i) is an identity.
  \end{enumerate}
 \end{lem}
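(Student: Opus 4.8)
The plan is to reduce both parts to a (super)martingale comparison obtained after discounting $U^c$ by the multiplicative factor $\kappa^\nu$, in the spirit of \cite{Geoffard}, \cite{ElKaroui-Peng-Quenez} and \cite{Dumas-Uppal-Wang}; the one new ingredient is to replace their square-integrability hypotheses by the class (D)/finiteness conditions of Definition \ref{def: dualvar}. First I would put \eqref{eq:sdu} in differential form: with $\Lambda_t \dfn U^c_t + \int_0^t f(c_s, U^c_s)\, ds$, equation \eqref{eq:sdu} is exactly the statement that $\Lambda_t = \expec_t\big[U_T(c_T) + \int_0^T f(c_s, U^c_s)\, ds\big]$, so $\Lambda$ is a uniformly integrable martingale (its terminal value being integrable because $c\in\C_a$) and $dU^c_t = d\Lambda_t - f(c_t, U^c_t)\, dt$. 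Fix $\nu \in \cV$; we may assume $U^{c,\nu}_0 > -\infty$, otherwise the inequality in (i) is trivial.

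Since $t \mapsto \kappa^\nu_{0,t}$ is continuous, of finite variation, and solves $d\kappa^\nu_{0,t} = -\nu_t \kappa^\nu_{0,t}\, dt$, integration by parts gives
\[
 d\big(\kappa^\nu_{0,t} U^c_t\big) \;=\; -\kappa^\nu_{0,t}\big(f(c_t, U^c_t) + \nu_t U^c_t\big)\, dt \;+\; \kappa^\nu_{0,t}\, d\Lambda_t .
\]
By \eqref{eq:F} one has $f(c_t, U^c_t) + \nu_t U^c_t \ge F(c_t, \nu_t)$ pointwise; moreover, since $f(c,\cdot)$ is convex by Assumption \ref{ass:convex}, the map $u \mapsto f(c_t,u)+\nu_t u$ is minimized at $u = U^c_t$ precisely when $\nu_t = -f_u(c_t, U^c_t)$, so this inequality is an \emph{equality} exactly for $\nu = \nu^c$. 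Hence the process
\[
 L^\nu_t \;\dfn\; \kappa^\nu_{0,t} U^c_t + \int_0^t \kappa^\nu_{0,s} F(c_s, \nu_s)\, ds
\]
admits the decomposition $L^\nu_t = U^c_0 - A^\nu_t + \int_0^t \kappa^\nu_{0,s}\, d\Lambda_s$, where $A^\nu_t \dfn \int_0^t \kappa^\nu_{0,s}\big(f(c_s,U^c_s)+\nu_s U^c_s - F(c_s,\nu_s)\big)\,ds$ is nonnegative and nondecreasing; thus $L^\nu$ is a local supermartingale for every $\nu\in\cV$, and a local martingale when $\nu = \nu^c$ (since then $A^{\nu^c}\equiv 0$).

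Next I would upgrade $L^\nu$ to a true supermartingale on $[0,T]$. Localizing the stochastic integral $\int_0^\cdot \kappa^\nu_{0,s}\, d\Lambda_s$ along a sequence $\tau_n \uparrow T$ and taking expectations removes the martingale part and leaves $\expec[L^\nu_{\tau_n}] \le U^c_0$; condition (iii) of Definition \ref{def: dualvar}, applied to $U = U^c \in \cU$, makes $\{\kappa^\nu_{0,\tau_n} U^c_{\tau_n}\}_n$ uniformly integrable, while condition (i) guarantees that $\int_0^T \kappa^\nu_{0,s} F(c_s, \nu_s)\, ds$ is sufficiently integrable, so one may pass to the limit to obtain $\expec[L^\nu_T] \le L^\nu_0 = U^c_0$. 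Since $U^c_T = U_T(c_T)$ by \eqref{eq:sdu} at $t = T$, the quantity $\expec[L^\nu_T]$ equals $U^{c,\nu}_0$; taking the supremum over $\nu \in \cV$ proves (i). For (ii), when $\nu^c \in \cV$ the drift $A^{\nu^c}$ vanishes, so $L^{\nu^c}$ is a local martingale; conditions (ii) and (iii) of Definition \ref{def: dualvar} now supply the class (D) property of $\kappa^{\nu^c}_{0,\cdot} U^{c,\nu^c}$ and $\kappa^{\nu^c}_{0,\cdot} U^c$ needed to conclude that $L^{\nu^c}$ is a genuine martingale, whence $U^c_0 = \expec[L^{\nu^c}_T] = U^{c,\nu^c}_0$; combined with (i) this gives $\sup_{\nu\in\cV} U^{c,\nu}_0 = U^{c,\nu^c}_0 = U^c_0$.

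I expect the only genuine difficulty to be this passage from a local to a true (super)martingale: one must simultaneously control the boundary term $\kappa^\nu_{0,\tau_n} U^c_{\tau_n}$ and the (possibly sign-changing) drift integral $A^\nu_{\tau_n}$ as $n\to\infty$, and justify that the localized stochastic integral has zero expectation along $\tau_n$. This is exactly what the three conditions of Definition \ref{def: dualvar} are designed to make routine, and it is the one place where dropping the square-integrability assumptions of \cite{ElKaroui-Peng-Quenez} and \cite{Dumas-Uppal-Wang} requires extra care.
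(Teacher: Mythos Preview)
Your argument is essentially the paper's: integration by parts on $\kappa^\nu_{0,\cdot} U^c$, the Fenchel inequality $f(c,u)+\nu u\ge F(c,\nu)$ (with equality at $\nu=\nu^c$), and a class-(D) upgrade from local to true (super)martingale. The one organizational difference is that the paper, rather than working with your $L^\nu$, first subtracts the companion martingale $\kappa^\nu_{0,\cdot}U^{c,\nu}+\int_0^\cdot\kappa^\nu_{0,s}F(c_s,\nu_s)\,ds$ and works with the difference $\kappa^\nu_{0,\cdot}(U^c-U^{c,\nu})$; this process has terminal value zero and is of class (D) directly by conditions (ii) and (iii) of Definition~\ref{def: dualvar}, so the supermartingale inequality $U^c_0-U^{c,\nu}_0\ge 0$ is immediate.

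This reorganization is not cosmetic: it bypasses the step where you pass to the limit in $\expec\big[\int_0^{\tau_n}\kappa^\nu_{0,s}F(c_s,\nu_s)\,ds\big]$. Condition (i) together with $U_0^{c,\nu}>-\infty$ gives integrability of $\int_0^T\kappa^\nu_{0,s}F\,ds$, but without sign information on $F$ it does not give uniform integrability of the stopped integrals, so ``sufficiently integrable'' is doing real work you have not justified. To close that gap you would decompose $L^\nu_t=\kappa^\nu_{0,t}(U^c_t-U^{c,\nu}_t)+N_t$ with $N$ the uniformly integrable martingale above, and then invoke condition (ii) as well as (iii) to get class (D) of the first summand---which is exactly the paper's route.
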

 Let us now use the variational representation in Lemma \ref{lem:var-rep} part (ii) to derive the dual problem associated to \eqref{eq:primal}.
 When the assumption of Lemma \ref{lem:var-rep} part (ii) holds, the primal problem is transformed into
 \begin{equation}\label{eq:primal-rep}
 \begin{split}
  U_0 &= \sup_{(\pi, c)\in \cA} \sup_{\nu \in \cV} \expec\Big[\kappa^\nu_{0,T}U_T(c_T) + \int_0^T \kappa^\nu_{0,s} F(c_s, \nu_s) ds\Big]\\
  &= \sup_{\nu \in \cV} \sup_{(\pi, c)\in \cA} \expec\Big[\kappa^\nu_{0,T}U_T(c_T) + \int_0^T \kappa^\nu_{0,s} F(c_s, \nu_s) ds\Big].
 \end{split}
 \end{equation}
 For a given $\nu\in \cV$, the inner problem in the second line above can be considered as an optimization problem for a bequest utility $U_T$ and a time-additive intertemporal  utility $F(c, \nu)$, parameterized by $\nu$, which can be viewed as a fictitious discounting rate. To present the dual problem of this inner problem, we define the Fenchel-Legendre transform of $U_T$ and $F$ (with respect to its first argument):
 \begin{equation}\label{eq:V-G}
  V_T(d) := \sup_{c\in \Real} (U_T(c) - d\,c), \quad G(d, \nu) := \sup_{c\in \Real} (F(c, \nu) - d\, c),
 \end{equation}
 and recall the class of \define{state price densities} (\define{supermartingale deflators}):
 \[
  \cD := \{D \,|\, D_0=1, D\geq 0, D \cW^{(\pi, c)} + \int_0^\cdot D_s c_s ds \text{ is a supermartingale for all } (\pi, c)\in \cA\}.
 \]
 To exclude arbitrage opportunity, we assume
 \[
  \cD \neq \emptyset.
 \]
 Coming back to the second line in \eqref{eq:primal-rep}, using the dual problem of the inner problem, we obtain
 \begin{equation}\label{eq:dual-bdd}
 \begin{split}
  U_0 &\leq \sup_{\nu\in \cV} \inf_{y>0, D\in \cD}\left\{ \expec\Big[\kappa^\nu_{0,T} V_T((\kappa^\nu_{0,T})^{-1} y D_T) +\int_0^T \kappa^\nu_{0,s} G((\kappa^\nu_{0,s})^{-1} y D_s, \nu_s) ds\Big] + w \,y\right\}\\
  &\leq \inf_{y>0, D\in \cD} \sup_{\nu \in \cV} \left\{ \expec\Big[\kappa^\nu_{0,T} V_T((\kappa^\nu_{0,T})^{-1} y D_T) +\int_0^T \kappa^\nu_{0,s} G((\kappa^\nu_{0,s})^{-1} y D_s, \nu_s) ds\Big] + w \,y\right\}.
 \end{split}
 \end{equation}

Now the inner problem in the previous line can be viewed as a variational problem. In order to transform it back to a recursive form, we need to work with $U_T$ and $F$ which have the following homothetic property in $c$.
\begin{ass}\label{ass:homothetic}
 The bequest utility and the felicity function have representations
 \[
  U_T(c) = \tfrac{c^{1-\gamma}}{1-\gamma}\quad \text{and} \quad \quad F(c, \nu) = \tfrac{c^{1-\gamma}}{1-\gamma} F(\nu),
 \]
 where $1\neq \gamma>0$ is the \define{relative risk aversion} and $F$, overloaded with an univariate function, is positive on its effective domain and $\tfrac{F}{1-\gamma}$ is concave.
\end{ass}
The previous specification of $U_T$ and $F$ implies
\begin{equation}\label{eq:VG}
 V_T(d)= \tfrac{\gamma}{1-\gamma} d^{\tfrac{\gamma-1}{\gamma}} \quad \text{and} \quad G(d, \nu) = \tfrac{\gamma}{1-\gamma} d^{\tfrac{\gamma-1}{\gamma}} F(\nu)^{\tfrac{1}{\gamma}}.
\end{equation}
Come back to the second line in \eqref{eq:dual-bdd},
\begin{equation*}
\begin{split}
 &\kappa^\nu_{0,T} V_T((\kappa^\nu_{0,T})^{-1} y D_T) = (\kappa^\nu_{0,T})^{\tfrac{1}{\gamma}} V_T(yD_T) = \kappa^{\tfrac{\nu}{\gamma}}_{0,T} V_T(yD_T),\\
 &\kappa^\nu_{0,s} G((\kappa^\nu_{0,s})^{-1} y D_s, \nu_s) = (\kappa^\nu_{0,s})^{\tfrac{1}{\gamma}} G(y D_s, \nu_s) = \kappa^{\tfrac{\nu}{\gamma}}_{0,s} G(y D_s, \nu_s).
\end{split}
\end{equation*}
Combining the last two identities and the inner problem in the second line of \eqref{eq:dual-bdd}, we are motivated to introduce
\begin{equation}\label{eq:V^yDnu}
 V^{yD, \nu}_t := \expec_t \Big[\kappa^{\tfrac{\nu}{\gamma}}_{t, T} V_T(y D_T) + \int_t^T \kappa^{\tfrac{\nu}{\gamma}}_{t,s} G(y D_s, \nu_s) ds\Big].
\end{equation}
Therefore the second line in \eqref{eq:dual-bdd} takes the form
\begin{equation}\label{eq:dual ineq 0}
 U_0 \leq \inf_{y>0, D\in \cD} \sup_{\nu\in \cV} (V_0^{yD, \nu} + w y).
\end{equation}
To transfer this variational problem $\sup_{\nu\in \cV} V_0^{yD, \nu}$ back to a recursive form, we take Fenchel-Legendre transformation of $G$ with respect to its second variable, which requires the following
\begin{ass}\label{ass:G-concave}
 The function $G(d, \nu)$ is concave in $\nu$.\footnote{When the univariate function $F$ in Assumption \ref{ass:homothetic} is twice continuously differentiable, this assumption is equivalent to $(F')^2 + \tfrac{\gamma}{1-\gamma} F'' <0$.}
\end{ass}
This assumption allows us to define
\begin{equation}\label{eq:g}
 g(d, v) := \sup_{\nu \in \Real} (G(d, \nu) -\nu \,v),
\end{equation}
and introduce an analogue of stochastic differential utility for the dual problem.
\begin{defn}
 Suppose that $U_T$ and $F$ satisfy Assumptions \ref{ass:homothetic} and \ref{ass:G-concave}. A \define{stochastic differential dual} for $yD$ is a semimartingale $(V^{yD}_t)_{0\leq t\leq T}$ satisfying
 \begin{equation}\label{eq:sdd}
  V^{yD}_t = \expec_t \Big[V_T(yD_T) + \int_t^T g(y D_s, \tfrac{1}{\gamma} V^{yD}_s) ds\Big], \quad \text{ for all } t\leq T.
 \end{equation}
\end{defn}
Similar to stochastic differential utility, we denote by $\cD_a$ the class of state price density $D$ whose associated stochastic differential dual $V^{yD}$ exists for all $y>0$ and $V^{yD}\in \cU$. Moreover, we restrict $\cV$ such that $V_t^{yD, \nu}<\infty$ for any $D\in \cD_a, y>0$, $t\in [0,T]$, and $V^{yD, \nu}\in \cU$ when $V^{yD, \nu}>-\infty$. The same argument as in part (i) of Lemma \ref{lem:var-rep} then yields
\begin{lem}\label{lem:dual-val-rep}
 Let Assumptions \ref{ass:homothetic} and \ref{ass:G-concave} hold. For any $D \in \cD_a$ and $y>0$, we have $V_0^{yD}\geq \sup_{\nu\in \cV} V_0^{yD, \nu}$.
\end{lem}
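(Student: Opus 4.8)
The plan is to run the argument behind Lemma \ref{lem:var-rep}(i), now reading the stochastic differential dual \eqref{eq:sdd} in place of the stochastic differential utility: the triple $(f,F,U^c)$ is replaced by $(g,G,V^{yD})$ and the discount process $\kappa^{\nu}$ by $\kappa^{\tfrac{\nu}{\gamma}}$. Fix $D\in\cD_a$, $y>0$ and $\nu\in\cV$; we may assume $V_0^{yD,\nu}>-\infty$, as otherwise there is nothing to prove. The only structural input is the elementary inequality implied by the definition \eqref{eq:g} of $g$: for every $s$,
\[
 g\big(yD_s,\tfrac1\gamma V^{yD}_s\big)\;\geq\;G(yD_s,\nu_s)-\tfrac{\nu_s}{\gamma}\,V^{yD}_s .
\]

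Next I would turn \eqref{eq:sdd} into a semimartingale identity: since $D\in\cD_a$, $V^{yD}$ exists and $N_t\dfn V^{yD}_t+\int_0^t g(yD_s,\tfrac1\gamma V^{yD}_s)\,ds$ is a uniformly integrable martingale, so $dV^{yD}_t=-g(yD_t,\tfrac1\gamma V^{yD}_t)\,dt+dN_t$. Applying the product rule to $\kappa^{\tfrac{\nu}{\gamma}}_{0,t}V^{yD}_t$ (with $d\kappa^{\tfrac{\nu}{\gamma}}_{0,t}=-\tfrac{\nu_t}{\gamma}\kappa^{\tfrac{\nu}{\gamma}}_{0,t}\,dt$) and adding the running felicity term, the process
\[
 \Phi_t\dfn\kappa^{\tfrac{\nu}{\gamma}}_{0,t}\,V^{yD}_t+\int_0^t\kappa^{\tfrac{\nu}{\gamma}}_{0,s}\,G(yD_s,\nu_s)\,ds
\]
has drift $\kappa^{\tfrac{\nu}{\gamma}}_{0,t}\big(G(yD_t,\nu_t)-g(yD_t,\tfrac1\gamma V^{yD}_t)-\tfrac{\nu_t}{\gamma}V^{yD}_t\big)$, which is non-positive by the displayed inequality; hence $\Phi$ is a local supermartingale. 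Localizing with a sequence $\tau_n\uparrow T$ gives $V_0^{yD}=\Phi_0\geq\expec[\Phi_{\tau_n}]$ for every $n$.

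It then remains to let $n\to\infty$ in $\expec[\Phi_{\tau_n}]=\expec[\kappa^{\tfrac{\nu}{\gamma}}_{0,\tau_n}V^{yD}_{\tau_n}]+\expec\big[\int_0^{\tau_n}\kappa^{\tfrac{\nu}{\gamma}}_{0,s}G(yD_s,\nu_s)\,ds\big]$ and to recognize the limit as $V_0^{yD,\nu}$. Since $V^{yD}\in\cU$, the process $\kappa^{\tfrac{\nu}{\gamma}}_{0,\cdot}V^{yD}$ is of class (D) — the dual analogue of Definition~\ref{def: dualvar}(iii) — so $\{\kappa^{\tfrac{\nu}{\gamma}}_{0,\tau_n}V^{yD}_{\tau_n}\}$ is uniformly integrable and converges a.s.\ to $\kappa^{\tfrac{\nu}{\gamma}}_{0,T}V_T(yD_T)$, and the first expectation converges accordingly. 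For the running term I would split $G=G^+-G^-$ and pass to the limit monotonically in each piece. This last step is where the argument is delicate: because $G(yD_s,\nu_s)$ need not be bounded as $s\uparrow T$, and $V_0^{yD,\nu}$ may a priori equal $-\infty$, one must invoke precisely the finiteness and class-(D) conditions built into the admissibility of $\nu$ (the dual counterparts of Definition~\ref{def: dualvar}(i)--(ii)) to ensure that the two limits combine into the well-defined quantity $\expec\big[\kappa^{\tfrac{\nu}{\gamma}}_{0,T}V_T(yD_T)+\int_0^T\kappa^{\tfrac{\nu}{\gamma}}_{0,s}G(yD_s,\nu_s)\,ds\big]=V_0^{yD,\nu}$. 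Taking the supremum over $\nu\in\cV$ then yields the claim.
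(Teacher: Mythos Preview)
Your argument rests on the same inequality $g(yD_s,\tfrac1\gamma V^{yD}_s)\geq G(yD_s,\nu_s)-\tfrac{\nu_s}{\gamma}V^{yD}_s$ that drives the paper's proof, and it is correct. The paper does not write out a separate argument here; it simply invokes ``the same argument as in part~(i) of Lemma~\ref{lem:var-rep}'', and that argument is organized slightly differently from yours. Rather than tracking your process $\Phi_t=\kappa^{\nu/\gamma}_{0,t}V^{yD}_t+\int_0^t\kappa^{\nu/\gamma}_{0,s}G(yD_s,\nu_s)\,ds$, the paper works with the \emph{difference} $\kappa^{\nu/\gamma}_{0,\cdot}\big(V^{yD}-V^{yD,\nu}\big)$: from the martingale property of $V^{yD}+\int_0^\cdot g\,ds$ and of $\kappa^{\nu/\gamma}_{0,\cdot}V^{yD,\nu}+\int_0^\cdot\kappa^{\nu/\gamma}_{0,s}G\,ds$ one deduces that this difference is a local supermartingale; the class-(D) requirements in the definition of $\cV$ (together with $V^{yD},V^{yD,\nu}\in\cU$) upgrade it to a true supermartingale; and since its terminal value is \emph{identically zero}, the inequality $V^{yD}_0\geq V^{yD,\nu}_0$ follows immediately. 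The payoff of subtracting $V^{yD,\nu}$ \emph{before} localizing is exactly that the ``delicate'' step you flag disappears: there is no running integral whose limit must be identified, no need to split $G=G^+-G^-$, and no separate appeal to integrability of the two pieces. Your route lands at the same conclusion but carries one additional convergence argument that the paper's packaging avoids.
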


As a result, for any $y>0$, we call the following problem the \define{dual problem} of \eqref{eq:primal}.
\begin{equation}\label{eq:dual}
 V_0^y = \inf_{D\in \cD_a} V_0^{yD}.
\end{equation}
A diagram illustrating relationship between various functions introduced above is presented in Figure \ref{fig:transformation}, starting from the primal problem in the upper left corner and ending at the dual problem in the bottom left corner.
Combining \eqref{eq:primal-rep}, \eqref{eq:dual-bdd}, \eqref{eq:dual ineq 0}, and Lemma \ref{lem:dual-val-rep}, we now obtain the following inequality between primal and dual problems.

\begin{prop}\label{prop:duality-bd}
 Let Assumptions \ref{ass:convex}, \ref{ass:homothetic}, and \ref{ass:G-concave} hold, moreover, $\nu^c = -f_u(c, U^c)\in \cV$ for any $c\in \C_a$. Then
 \begin{equation}\label{eq:dual-ineq}
  \sup_{(\pi, c)\in \cA} U^c_0 \leq \inf_{y>0} (\inf_{D\in \cD_a}V^{yD}_0 + w \,y).
 \end{equation}
\end{prop}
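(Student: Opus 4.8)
The inequality \eqref{eq:dual-ineq} is obtained by making rigorous, one step at a time, the chain of bounds \eqref{eq:primal-rep}--\eqref{eq:dual ineq 0} together with Lemma \ref{lem:dual-val-rep}. \textbf{Step 1.} Fix $(\pi,c)\in\cA$; then $c\in\C_a$ and, by hypothesis, $\nu^c=-f_u(c,U^c)\in\cV$, so Lemma \ref{lem:var-rep}(ii) gives $U^c_0=\sup_{\nu\in\cV}U^{c,\nu}_0$ with $U^{c,\nu}_0=\expec[\kappa^\nu_{0,T}U_T(c_T)+\int_0^T\kappa^\nu_{0,s}F(c_s,\nu_s)\,ds]$. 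Taking $\sup_{(\pi,c)\in\cA}$ and interchanging the two suprema (immediate, since the two supremands do not interact) yields the first line of \eqref{eq:primal-rep}. If $U_0=-\infty$ there is nothing to prove, so assume $U_0>-\infty$.

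\textbf{Step 2.} Fix $\nu\in\cV$ and view the inner problem $\sup_{(\pi,c)\in\cA}U^{c,\nu}_0$ as a pure-consumption problem with bequest utility $U_T$ and intertemporal felicity $F(\cdot,\nu)$, ``discounted'' by $\kappa^\nu$. For $y>0$ and $D\in\cD$, apply the Fenchel--Young inequalities $U_T(c_T)\le V_T(d)+d\,c_T$ and $F(c_s,\nu_s)\le G(d_s,\nu_s)+d_s c_s$ with $d=(\kappa^\nu_{0,T})^{-1}yD_T$ and $d_s=(\kappa^\nu_{0,s})^{-1}yD_s$; after multiplying by $\kappa^\nu_{0,T}$ and $\kappa^\nu_{0,s}$ the cross terms collapse to $yD_Tc_T$ and $yD_sc_s$. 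Integrating in $s$, taking expectations, and using $c_T\le\cW^{(\pi,c)}_T$ (the bequest cannot exceed terminal wealth) together with the supermartingale property of $D\cW^{(\pi,c)}+\int_0^\cdot D_sc_s\,ds$, which gives $\expec[yD_Tc_T+\int_0^T yD_sc_s\,ds]\le yw$, one arrives at
\[
 U^{c,\nu}_0\le\expec\Big[\kappa^\nu_{0,T}V_T((\kappa^\nu_{0,T})^{-1}yD_T)+\int_0^T\kappa^\nu_{0,s}G((\kappa^\nu_{0,s})^{-1}yD_s,\nu_s)\,ds\Big]+wy.
\]
Conditions (i)--(iii) of Definition \ref{def: dualvar} and the defining property of $\cD$ are exactly what make all these conditional expectations and the time--expectation exchange legitimate, and let the degenerate case $U^{c,\nu}_0=-\infty$ be dismissed. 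Taking $\sup_{(\pi,c)}$, then $\inf_{y>0,D\in\cD}$, then $\sup_{\nu\in\cV}$ gives the first line of \eqref{eq:dual-bdd}, and the elementary minimax inequality $\sup_\nu\inf_{y,D}\le\inf_{y,D}\sup_\nu$ the second.

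\textbf{Step 3.} Invoke Assumption \ref{ass:homothetic}, so that $V_T$ and $G$ take the explicit forms \eqref{eq:VG}; the positive homogeneity of $d\mapsto d^{(\gamma-1)/\gamma}$ then gives the scaling identities $\kappa^\nu_{0,t}V_T((\kappa^\nu_{0,t})^{-1}yD_t)=\kappa^{\nu/\gamma}_{0,t}V_T(yD_t)$ and $\kappa^\nu_{0,t}G((\kappa^\nu_{0,t})^{-1}yD_t,\nu_t)=\kappa^{\nu/\gamma}_{0,t}G(yD_t,\nu_t)$. Substituting these into the second line of \eqref{eq:dual-bdd} and recalling the definition \eqref{eq:V^yDnu} of $V^{yD,\nu}$ converts that bound into \eqref{eq:dual ineq 0}, namely $U_0\le\inf_{y>0,D\in\cD}\sup_{\nu\in\cV}(V^{yD,\nu}_0+wy)$. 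Since $\cD_a\subseteq\cD$, restricting the infimum only enlarges the right-hand side, and Lemma \ref{lem:dual-val-rep} (applicable because Assumptions \ref{ass:homothetic} and \ref{ass:G-concave} hold) yields $\sup_{\nu\in\cV}V^{yD,\nu}_0\le V^{yD}_0$ for every $D\in\cD_a$ and $y>0$. Hence $\sup_{(\pi,c)\in\cA}U^c_0=U_0\le\inf_{y>0,D\in\cD_a}(V^{yD}_0+wy)=\inf_{y>0}\big(\inf_{D\in\cD_a}V^{yD}_0+wy\big)$, which is \eqref{eq:dual-ineq}.

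\textbf{Main obstacle.} Step 1, Step 3, and the minimax swap in Step 2 are either formal manipulations or direct appeals to the already-established Lemmas \ref{lem:var-rep} and \ref{lem:dual-val-rep}. The genuinely delicate point is the weak-duality bound for the inner time-additive problem in Step 2: the ``aggregated Fenchel--Young plus budget constraint'' argument must be carried out while tracking which of $U_T(c_T)$, $F(c_s,\nu_s)$, $V_T(\cdot)$, $G(\cdot,\nu)$ can be infinite, so that the pointwise inequalities may be added, the expectation and the $ds$-integral interchanged, and the cases where one side is $\pm\infty$ treated separately. The admissibility restrictions on $\nu$ in Definition \ref{def: dualvar}, together with the supermartingale characterization of $\cD$, are tailored precisely for this.
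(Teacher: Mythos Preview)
Your proposal is correct and follows essentially the same approach as the paper: the paper does not give a separate proof in the appendix but simply states that the proposition follows by ``combining \eqref{eq:primal-rep}, \eqref{eq:dual-bdd}, \eqref{eq:dual ineq 0}, and Lemma \ref{lem:dual-val-rep},'' which is precisely the chain of bounds you reconstruct in Steps~1--3. Your write-up supplies more detail than the paper on the Fenchel--Young/budget-constraint argument in Step~2 (including the implicit constraint $c_T\le\cW^{(\pi,c)}_T$ and the integrability caveats), but the logical skeleton is identical.
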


\begin{figure}[h]\label{fig:transformation}
\centering
\includegraphics[scale=0.55]{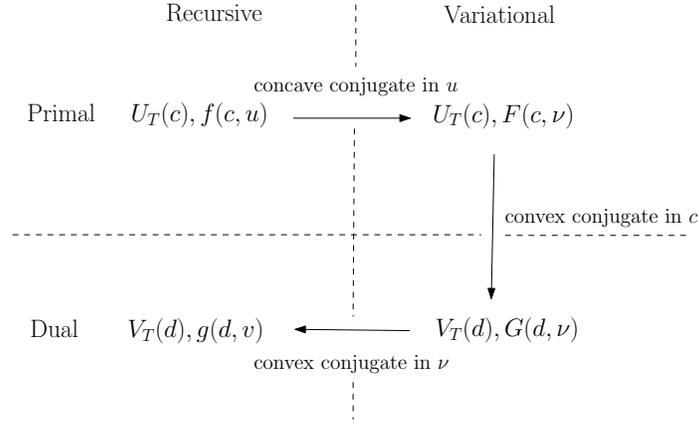}
\caption{Double Fenchel-Legendre transformation}
\end{figure}

\begin{rem}\label{rem:f-concave}
 When $f(c,u)$ is concave in $u$, we can replace the supremum (resp. infimum) in \eqref{eq:F}, \eqref{eq:dual-fF}, and \eqref{eq:g} by infimum (resp. supremum). Let Assumption \ref{ass:homothetic} holds where $\tfrac{F}{1-\gamma}$ is convex. On the other hand, since one can show $G(d, \nu)$ is convex in $\nu$, Assumption \ref{ass:G-concave} is no longer needed. Then the same statement of Proposition \ref{prop:duality-bd} holds when $\nu^{yD}:=-g_v(D, V^{yD}) \in \cV$ for any $y>0$ and $D\in \cD_a$.
\end{rem}

\subsection{Epstein-Zin preference}
The general setting described in the previous section will be specified to stochastic differential utility of \emph{Kreps-Porteus} or \emph{Epstein-Zin} type in this section. To describe this preference, let $\delta>0$ represent the discounting rate, $0<\gamma \neq 1$ be the relative risk aversion, and $0<\psi \neq 1$ be the \define{elasticity of intertemporal substitution} (EIS). Define the Epstein-Zin aggregator $f$ via
\begin{equation}\label{eq:EZ agg}
 f(c, u):= \delta \frac{c^{1-\tfrac{1}{\psi}}}{1-\tfrac{1}{\psi}} ((1-\gamma) u)^{1-\tfrac{1}{\theta}} - \delta \theta u, \quad \text{ for } c>0 \text{ and } (1-\gamma) u>0,
\end{equation}
where $\theta:= \tfrac{1-\gamma}{1-1/\psi}$.
We consider bequest utility $U_T(c) = \tfrac{c^{1-\gamma}}{1-\gamma}$ as in Assumption \ref{ass:homothetic}.

Direct calculations specify various functions defined in the previous section.
\begin{lem}\label{lem:EZ-fun}
 The following statements hold:
 \begin{enumerate}
  \item[(i)] $f(c,u)$ is concave in $c$, and convex in $u$ if and only if $\gamma \psi>1$.
  \item[(ii)]
  \[
   - f_u(c, u) = \delta (1-\theta) c^{1- \tfrac{1}{\psi}} ((1-\gamma)u)^{-\tfrac{1}{\theta}} + \delta \theta.
  \]
  \item[(iii)] When $\tfrac{\delta\theta - \nu}{\theta -1}>0$,
  \begin{align*}
   F(c, \nu) = \delta^\theta \frac{c^{1-\gamma}}{1-\gamma} \pare{\frac{\delta \theta - \nu}{\theta -1}}^{1-\theta}, \quad
   F_{\nu\nu}(c, \nu) = \delta^\theta \frac{\psi}{1-\gamma \psi} c^{1-\gamma} \pare{\frac{\delta \theta - \nu}{\theta -1}}^{-1-\theta}.
  \end{align*}
  Therefore Assumption \ref{ass:homothetic} holds if and only if $\gamma \psi>1$.
  \item[(iv)] When $\tfrac{\delta\theta - \nu}{\theta -1}>0$,
  \[
   G(d, \nu) = \delta^{\tfrac{\theta}{\gamma}} \frac{\gamma}{1-\gamma} d^{\tfrac{\gamma-1}{\gamma}} \pare{\frac{\delta \theta-\nu}{\theta-1}}^{\tfrac{1-\theta}{\gamma}}, \quad G_{\nu\nu}(d, \nu) = \delta^{\tfrac{\theta}{\gamma}} \frac{1}{\gamma(1-\gamma \psi)} d^{\tfrac{\gamma-1}{\gamma}} \pare{\frac{\delta \theta - \nu}{\theta -1}}^{\tfrac{1-\theta}{\gamma}-2}.
  \]
  Therefore Assumption \ref{ass:G-concave} holds if and only if $\gamma \psi>1$.
  \item[(v)] For $(1-\gamma)v >0$,
  \begin{align*}
   g(d, v) = \delta^\psi \frac{d^{1-\psi}}{\psi-1} ((1-\gamma) v)^{1-\tfrac{\gamma\psi}{\theta}} - \delta \theta v, \quad
   -g_v(d, v) &= \delta^\psi (1-\theta) d^{1-\psi} ((1-\gamma) v)^{-\tfrac{\gamma\psi}{\theta}} + \delta \theta.
  \end{align*}
 \end{enumerate}
\end{lem}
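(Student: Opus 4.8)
The statement is a list of five explicit computations, and my plan is to carry them out in sequence, reducing every manipulation to a short list of algebraic identities for $\theta=\tfrac{1-\gamma}{1-1/\psi}$, in particular
\[
 \big(1-\tfrac1\psi\big)\theta = 1-\gamma, \quad
 \big(1-\tfrac1\theta\big)(1-\gamma) = \tfrac1\psi-\gamma = -\big(1-\tfrac1\psi\big)(1-\theta), \quad
 1-\theta = \tfrac{\gamma\psi-1}{\psi-1}, \quad
 \gamma+\theta-1 = \tfrac{1-\gamma}{\psi-1}
\]
(and $(1-\theta)(\psi-1)=\gamma\psi-1$, a restatement of the third). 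For (i), concavity of $f(\cdot,u)$ is immediate, since $c\mapsto\tfrac{c^{1-1/\psi}}{1-1/\psi}$ has second derivative $-\tfrac1\psi c^{-1/\psi-1}<0$, the factor $((1-\gamma)u)^{1-1/\theta}$ is strictly positive on the domain $\{(1-\gamma)u>0\}$, and $-\delta\theta u$ is free of $c$. For convexity in $u$ I would differentiate \eqref{eq:EZ agg} twice in $u$; after extracting the manifestly positive $(1-\gamma)^2$ and $((1-\gamma)u)^{-1/\theta-1}$, the leftover sign is that of $-\tfrac1\theta\big(1-\tfrac1\theta\big)\tfrac{\delta}{1-1/\psi}c^{1-1/\psi}$, which the identities above collapse to $\operatorname{sgn}(\gamma\psi-1)$ — the sign factors coming from $1-\gamma$ and $1-1/\psi$ enter squared and cancel, so no case split on $\gamma\lessgtr1$ or $\psi\lessgtr1$ is needed. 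Item (ii) is a single $u$-derivative of \eqref{eq:EZ agg}, after which $\big(1-\tfrac1\theta\big)(1-\gamma)=-(1-\tfrac1\psi)(1-\theta)$ turns the coefficient into $-\delta(1-\theta)c^{1-1/\psi}$.

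For (iii), $F$ is the Fenchel--Legendre conjugate of $f$ in $u$ — an infimum under Assumption \ref{ass:convex} (the case $\gamma\psi>1$) and a supremum as in Remark \ref{rem:f-concave} (the case $\gamma\psi<1$) — but in both regimes $u\mapsto f(c,u)+\nu u$ is, respectively, convex or concave, so the same stationarity condition $f_u(c,u)=-\nu$ identifies the extremizer. Solving it via (ii) gives $((1-\gamma)u)^{-1/\theta}=\tfrac{\delta\theta-\nu}{\delta(\theta-1)}\,c^{-(1-1/\psi)}$, which has a solution with $(1-\gamma)u>0$ precisely when $\tfrac{\delta\theta-\nu}{\theta-1}>0$; substituting back and using $(1-\tfrac1\psi)\theta=1-\gamma$ to turn the $c$-exponent into $1-\gamma$ produces the claimed $F$, and two more $\nu$-derivatives give $F_{\nu\nu}$. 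On that domain every factor of $F_{\nu\nu}$ other than $\tfrac{\psi}{1-\gamma\psi}$ is positive, and since $F_{\nu\nu}(c,\nu)=c^{1-\gamma}\big(\tfrac{F(\nu)}{1-\gamma}\big)''$, the function $\tfrac{F}{1-\gamma}$ is concave in $\nu$ iff $\gamma\psi>1$; that is, Assumption \ref{ass:homothetic} holds iff $\gamma\psi>1$.

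For (iv) no new optimization is needed: with $F(c,\nu)=\tfrac{c^{1-\gamma}}{1-\gamma}F(\nu)$ and $F(\nu)=\delta^\theta\big(\tfrac{\delta\theta-\nu}{\theta-1}\big)^{1-\theta}$ from (iii), the formula for $G$ is \eqref{eq:VG} with this $F(\nu)$ inserted, and $G_{\nu\nu}$ follows by differentiating $\big(\tfrac{\delta\theta-\nu}{\theta-1}\big)^{(1-\theta)/\gamma}$ twice and regrouping constants with the $\theta$-identities; on the domain its only factor that is not manifestly positive is $\tfrac1{\gamma(1-\gamma\psi)}$, so Assumption \ref{ass:G-concave} holds iff $\gamma\psi>1$. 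Finally, for (v), $g$ is the conjugate of $G$ in $\nu$; the affine substitution $z=\tfrac{\delta\theta-\nu}{\theta-1}$ recasts it as $g(d,v)=-\delta\theta v+\sup_{z>0}\big(\tilde B\,z^{(1-\theta)/\gamma}+(\theta-1)vz\big)$ with $\tilde B=\tfrac{\gamma}{1-\gamma}\delta^{\theta/\gamma}d^{(\gamma-1)/\gamma}$; when $\gamma\psi>1$ the bracket is concave in $z$ and its stationary point lies in $(0,\infty)$ exactly when $(1-\gamma)v>0$, which is the origin of the domain restriction in (v). Substituting back and simplifying with the $\theta$-identities (notably $\gamma+\theta-1=\tfrac{1-\gamma}{\psi-1}$ and $(1-\theta)(\psi-1)=\gamma\psi-1$), the $d$-exponent becomes $1-\psi$ and the $v$-exponent becomes $1-\tfrac{\gamma\psi}{\theta}$, giving $g$; a last $v$-derivative yields $-g_v$, whose coefficient simplifies to $1-\theta$ via $1-\theta=\tfrac{\gamma\psi-1}{\psi-1}$.

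The computations are routine; the only genuine obstacle is the bookkeeping — tracking every exponent and multiplicative constant through the successive conjugations and their derivatives, and performing each sign determination on the region where the relevant power has a positive base, so that the ``if and only if $\gamma\psi>1$'' statements come out uniformly rather than through a thicket of sub-cases on $\gamma\lessgtr1$ and $\psi\lessgtr1$.
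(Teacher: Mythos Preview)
Your proposal is correct and matches the paper's approach: the paper states that Lemma \ref{lem:EZ-fun} follows from ``direct calculations'' and omits the proof entirely, so your detailed verification of each item via the $\theta$-identities is exactly what is implicitly claimed. If anything, you supply more than the paper does.
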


Let us now recall several sufficient conditions for the existence of Epstein-Zin utility.
\begin{prop}\label{prop:EZ-existence}
 Let the filtration $(\F_t)_{0\leq t\leq T}$ be the augmented filtration generated by some Brownian motion.
 \begin{enumerate}
  \item[(i)] \cite[Theorem 1]{Schroder-Skiadas-JET} When either $\gamma>1, 0<\psi<1$, or $0<\gamma<1, \psi>1$, for any $c\in \C$ such that $\expec[\int_0^T c^\ell_t dt + c_T^\ell]<\infty$ for all $\ell\in \Real$, there exists a unique $U^c$ such that $\expec[\esssup_t |U^c_t|^\ell] <\infty$ for every $\ell>0$. Moreover $U^c_0$ is concave in $c$.
  \item[(ii)] \cite[Propositions 2.2 and 2.4]{Xing} When $\gamma, \psi>1$, for any $c\in \C$ such that $\expec[\int_0^T c_t^{1-1/\psi}dt +  c_T^{1-\gamma}] <\infty$, there exists a unique $U^c$ of class (D). Moreover $U^c_0$ is concave in $c$.
 \end{enumerate}
\end{prop}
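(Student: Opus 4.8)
Since this statement collects two known existence results, the plan is to deduce each part from the cited theorem after verifying that the standing hypotheses here are compatible with those of the references.

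For part (i), I would invoke \cite[Theorem 1]{Schroder-Skiadas-JET}, which treats exactly the Epstein--Zin aggregator \eqref{eq:EZ agg} in both parameter regimes stated. That proof rewrites the defining relation \eqref{eq:sdu} through a power-type ordinal transformation of $U^c$, chosen so that the transformed utility process solves a BSDE whose driver grows at most affinely-plus-H\"older in the unknown; existence and uniqueness then follow from a fixed-point (or monotone-approximation) scheme together with a priori $L^p$ estimates, and the assumed finiteness of $\expec[\int_0^T c_t^\ell\,dt + c_T^\ell]$ for every $\ell\in\Real$ is precisely what gives the transformed data all moments, which propagates back to $\expec[\esssup_t|U^c_t|^\ell]<\infty$ for all $\ell>0$. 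Uniqueness is within the class defined by these moment bounds, i.e.\ the class $\cU$ alluded to after Definition \ref{def: SDU}.

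For part (ii), I would cite \cite[Propositions 2.2 and 2.4]{Xing}. When $\gamma,\psi>1$ one has $\theta<0$, so the constraint $(1-\gamma)u>0$ forces $u<0$ and the driver $f(c,u)$ in \eqref{eq:sdu} is singular as $u\uparrow0$; the reference passes to a normalized aggregator via a power-type ordinal transformation $U^c\mapsto\bigl((1-\gamma)U^c\bigr)^{1/\theta}$ (up to sign and normalizing constants), after which the driver is monotone and locally Lipschitz, then constructs sub- and super-solutions from the integrability $\expec[\int_0^T c_t^{1-1/\psi}\,dt + c_T^{1-\gamma}]<\infty$ and obtains existence, uniqueness, and the class-(D) property by monotone approximation and a comparison principle; concavity of $c\mapsto U^c_0$ is \cite[Proposition 2.4]{Xing}. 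The only things to check are that ``class (D)'' here agrees with the solution class of \cite{Xing} and that the present admissibility convention---a consumption rate on $[0,T)$ together with a lump-sum bequest $c_T$---is the one treated there; both are immediate.

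For the concavity claims there is also an internal route in the sub-regime $\gamma\psi<1$ of part (i), where $f$ is concave in $u$: Remark \ref{rem:f-concave} gives $U^c_0=\inf_{\nu\in\cV}U_0^{c,\nu}$ once $\nu^c:=-f_u(c,U^c)\in\cV$, and each $U_0^{c,\nu}$ is a positive-weighted aggregate of $U_T(c_T)$ and $F(c_s,\nu_s)$, both concave in $c$ by Lemma \ref{lem:EZ-fun}(iii), so concavity follows because a pointwise infimum of concave functionals is concave. When $\gamma\psi>1$ the corresponding representation is a supremum and does not by itself deliver concavity, so there one genuinely relies on the references. Accordingly, the only real obstacle is bookkeeping---reconciling function spaces and conventions across the two sources---rather than any new analytic input.
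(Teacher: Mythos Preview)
Your proposal is correct and matches the paper's treatment: the proposition is stated purely as a citation of \cite[Theorem 1]{Schroder-Skiadas-JET} and \cite[Propositions 2.2 and 2.4]{Xing}, and the paper supplies no independent proof. Your additional remarks on the transformations used in the references and the alternative concavity argument via Remark \ref{rem:f-concave} are accurate but go beyond what the paper itself provides.
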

\begin{rem}
 When $\theta<1$, \cite{Duffie-Lions} shows the existence of $U^c$ in a Markovian setting. When the assumption on filtration in Proposition \ref{prop:EZ-existence} is removed, \cite[Theorems 3.1 and 3,3]{Seiferling-Seifried} proves the statement of part (ii) for $c\in \C$ such that $\expec[\int_0^T c^\ell_t dt + c_T^\ell]<\infty$ for all $\ell\in \Real$.
\end{rem}

The previous result indicates that, for different values of $\gamma$ and $\psi$, Epstein-Zin utility exists and is of class (D). Therefore we set the class of process $\cU$ as
\[
 \cU := \{U\,|\, \text{progressively measurable}, (1-\gamma) U\geq 0, \text{ and is of class (D)}\}.
\]

On the dual side, the following result provides a sufficient condition on the existence of stochastic differential dual, implying $\cD_a \neq \emptyset$.
\begin{prop}\label{prop:sdd-existence}
 Let the filtration $(\F_t)_{0\leq t\leq T}$ be the augmented filtration generated by some Brownian motion. Consider the following equation for $V^{yD}$:
  \begin{equation}\label{eq:sdd-eq}
  V^{yD}_t = \expec_t\bra{\tfrac{\gamma}{1-\gamma} (yD_T)^{\tfrac{\gamma-1}{\gamma}} + \int_t^T \tfrac{\delta^\psi}{\psi-1} (yD_s)^{1-\psi} \pare{\tfrac{1-\gamma}{\gamma} V^{yD}_s}^{1-\tfrac{\gamma \psi}{\theta}} - \tfrac{\delta \theta}{\gamma} V^{yD}_s \,ds}, \quad \text{ for all } t\leq T.
 \end{equation}
 \begin{enumerate}
 \item[(i)] When either $\gamma>1, 0<\psi<1$, or $0<\gamma<1, \psi>1$, for any $y>0$ and $D\in \cD$ such that $\expec[\int_0^T D_t^{\ell} dt + D_T^\ell]<\infty$ for all $\ell\in \Real$, there exists a unique $V^{yD}$ satisfying $(1-\gamma) V^{y D}\geq 0$, \eqref{eq:sdd-eq}, and $\expec[\esssup_t |V^{yD}_t|^\ell]<\infty$ for every $\ell>0$.
 \item[(ii)] When $\gamma, \psi>1$, for any $y>0$, $D\in \cD$ such that $\expec[\int_0^T D_t^{1-\psi} dt + D_T^{(\gamma-1)/\gamma}]<\infty$, there exists a unique $V^{yD}$ of class (D) satisfying $(1-\gamma) V^{yD}\geq 0$ and \eqref{eq:sdd-eq}.
 \end{enumerate}

\end{prop}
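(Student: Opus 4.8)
The plan is to recognize equation \eqref{eq:sdd-eq} as a quadratic-growth BSDE driven by the Brownian motion generating the filtration, and to port the two existence results of Proposition \ref{prop:EZ-existence} (i.e. \cite{Schroder-Skiadas-JET} and \cite{Xing}) to this dual setting. The first step is to perform the same substitution that turns an Epstein-Zin utility equation into a tractable BSDE: write $V^{yD}_t$ in logarithmic/power coordinates so that the driver's dependence on the value process becomes Lipschitz (or monotone) in the transformed variable while the driver's dependence on the control $Z$ becomes quadratic. Concretely, set $v_t := \bigl(\tfrac{1-\gamma}{\gamma}\bigr) V^{yD}_t$ (so $(1-\gamma)V^{yD}\geq 0$ corresponds to $v\geq 0$); in case (ii) where $\gamma,\psi>1$ one has $1-\gamma<0$ and $1-\gamma\psi<0$, and the exponent $1-\gamma\psi/\theta = 1-1/\psi$ lies in $(0,1)$, so the $v^{1-1/\psi}$ term is concave and the $-\tfrac{\delta\theta}{\gamma}v$ term supplies the needed monotonicity; a further exponential change of variable $\mathbf{v}_t = v_t^{p}$ for a suitable power $p$ linearizes the power nonlinearity at the cost of a quadratic term in $Z$, exactly as in \cite[Section 2]{Xing}. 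Thus the strategy is: (1) reduce \eqref{eq:sdd-eq} to a BSDE with a generator that is quadratic in $Z$ and monotone in $Y$; (2) verify the integrability hypothesis on $D$ guarantees the terminal condition $\tfrac{\gamma}{1-\gamma}(yD_T)^{(\gamma-1)/\gamma}$ and the inhomogeneous forcing term $\tfrac{\delta^\psi}{\psi-1}(yD_s)^{1-\psi}$ have the integrability demanded by the relevant a priori estimates; (3) invoke the corresponding fixed-point/comparison theorem to obtain existence and uniqueness in the stated class, and translate back.

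For case (i), where $\gamma\psi$ need not exceed $1$ and one of the monotonicity signs is unfavorable, I would instead mimic \cite[Theorem 1]{Schroder-Skiadas-JET}: the hypothesis $\expec[\int_0^T D_t^\ell dt + D_T^\ell]<\infty$ for \emph{all} $\ell\in\Real$ is precisely the condition that makes the Schroder–Skiadas iteration converge in $L^\ell$ for every $\ell$, since both $D_T^{(\gamma-1)/\gamma}$ and $D_s^{1-\psi}$ then have moments of all orders. The key observation is that the dual aggregator $g(d,v)$ in Lemma \ref{lem:EZ-fun}(v) has exactly the same structural form as the primal Epstein-Zin aggregator $f(c,u)$ in \eqref{eq:EZ agg} — it is again of the shape ``power times power minus linear'' — with $(\gamma,\psi,\delta,\theta)$ replaced by the dual parameters; in fact one checks that $(1/\gamma, \psi, \delta^{\psi/\ldots}, \ldots)$ plays the role of $(\gamma,\psi,\delta,\theta)$, and the sign condition ``$\gamma>1,\psi<1$ or $\gamma<1,\psi>1$'' is invariant under this duality. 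Hence existence and uniqueness follow by \emph{applying the cited primal results to the dual aggregator}, after checking that $g$ satisfies their standing hypotheses (the analogue of Assumption \ref{ass:convex}-type regularity, continuity, and the one-sided growth bounds). This reduces the proof to a bookkeeping verification rather than a new BSDE argument.

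The main obstacle I anticipate is twofold. First, in case (ii) the exponent $1-\gamma\psi/\theta$ and the coefficient $\tfrac{\delta^\psi}{\psi-1}$ have signs that must be tracked carefully to confirm that the transformed generator is genuinely monotone (not merely locally so) and that the forcing term enters with the sign needed for the a priori $L^1$/class-(D) bound in \cite{Xing}; getting the direction of the comparison principle right is where sign errors would hide. Second, one must confirm that the solution produced actually lies in $\cU$, i.e. is of class (D) in case (ii) or has all moments in case (i) — this requires an a priori estimate showing $V^{yD}$ is sandwiched between explicit sub/supersolutions built from the integrable data $D$, analogous to the ``$0\leq$ primal utility $\leq$ explicit bound'' estimates in the primal existence proofs. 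Once these bounds are in place, uniqueness follows from the same comparison argument applied in both directions. I would organize the appendix proof as: Lemma (structural identification of $g$ with a dual Epstein-Zin aggregator and invariance of the parameter regimes), then two short subsections invoking \cite{Schroder-Skiadas-JET} and \cite{Xing} respectively, each preceded by the moment/integrability check on the transformed data.
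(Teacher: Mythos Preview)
Your proposal is essentially correct and matches the paper's approach: the paper also rewrites \eqref{eq:sdd-eq} as a BSDE, performs a preliminary change of variable $Y_t=\tfrac{1-\gamma}{\gamma}e^{-\delta\theta t/\gamma}V^{yD}_t$ to absorb the linear term, and then invokes \cite[Theorem A2]{Schroder-Skiadas-JET} for case (i) and \cite[Proposition 2.2]{Xing} for case (ii), exactly as you outline.

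Two details need correction. First, your exponent computation in case (ii) is wrong: with $\gamma,\psi>1$ one has $\theta<0$, so $\gamma\psi/\theta<0$ and therefore $1-\gamma\psi/\theta>1$, not $1-1/\psi\in(0,1)$. The concavity claim is thus false, but this does not matter: what \cite{Xing} requires is that the generator be \emph{decreasing} in the value variable, and since the coefficient $\delta^\psi\theta/(\gamma\psi)$ is negative while $Y>0$, the term $\tfrac{\theta}{\gamma\psi}\delta^\psi(yD_s)^{1-\psi}Y_s^{1-\gamma\psi/\theta}$ is indeed decreasing in $Y$. Second, you have the power transformation attached to the wrong case. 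In the paper, case (ii) requires \emph{no} further change of variable beyond the one above; monotonicity alone puts the BSDE in the scope of \cite[Proposition 2.2]{Xing}. The power substitution $\mathcal Y=Y^{\gamma\psi/\theta}$ is instead what handles case (i): it converts the transformed BSDE into one with driver $\tfrac12(\tfrac{\theta}{\gamma\psi}-1)\mathcal Z^2/\mathcal Y$ plus an inhomogeneous term, which is precisely the quadratic form treated in \cite[Equation (A7), Theorem A2]{Schroder-Skiadas-JET}. With these two swaps your bookkeeping goes through.
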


For variational representations, we choose
\[
 \cV:= \{\nu \,|\, \text{progressively measurable and } \nu \geq \delta \theta\}.
\]
The choice of $\cV$ implies that that $\kappa^\nu_{s,t}$ is bounded for any $\nu\in \cV$ and $0\leq s\leq t\leq T$. Now we are ready to report the main result of this section.

\begin{thm}\label{thm:EZ-dual}
 Consider the Epstein-Zin utility whose aggregator $f(c,u)$ is convex in $u$, i.e., $\gamma\psi>1$. Then the inequality \eqref{eq:dual-ineq} holds under following parameter specification:
 \begin{enumerate}
  \item[(i)] $0<\gamma<1, \gamma \psi>1$;
  \item[(ii)] $\gamma, \psi>1$.
 \end{enumerate}
\end{thm}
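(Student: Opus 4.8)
The plan is to obtain Theorem~\ref{thm:EZ-dual} directly from Proposition~\ref{prop:duality-bd}, since the latter's conclusion is precisely \eqref{eq:dual-ineq}. So the proof reduces to checking its hypotheses in each regime: that Assumptions~\ref{ass:convex}, \ref{ass:homothetic}, \ref{ass:G-concave} hold, and that $\nu^c:=-f_u(c,U^c)\in\cV$ for every $c\in\C_a$. The three assumptions are read off from Lemma~\ref{lem:EZ-fun}: under $\gamma\psi>1$ (in force in both (i) and (ii)), part~(i) gives concavity of $f$ in $c$ and convexity in $u$ (Assumption~\ref{ass:convex}); part~(iii) gives the homothetic form $F(c,\nu)=\tfrac{c^{1-\gamma}}{1-\gamma}F(\nu)$ with $F>0$ on its effective domain and $\tfrac{F}{1-\gamma}$ concave, which together with the chosen bequest $U_T(c)=\tfrac{c^{1-\gamma}}{1-\gamma}$ is Assumption~\ref{ass:homothetic}; and part~(iv) gives concavity of $G$ in $\nu$ (Assumption~\ref{ass:G-concave}).

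The core step is $\nu^c\in\cV$. By Lemma~\ref{lem:EZ-fun}(ii),
\[
 \nu^c_t \;=\; \delta(1-\theta)\,c_t^{\,1-1/\psi}\big((1-\gamma)U^c_t\big)^{-1/\theta}\;+\;\delta\theta,
\]
and since $c\geq 0$ and $U^c\in\cU$ forces $(1-\gamma)U^c\geq 0$, the first term has the sign of $1-\theta$, so $\nu^c_t\geq\delta\theta$ as soon as $\theta\leq 1$. I would then check $\theta=\tfrac{1-\gamma}{1-1/\psi}<1$ in each case: in (i), $0<\gamma<1$ and $\psi>1/\gamma>1$ make numerator and denominator both positive, whence $\theta<1\Leftrightarrow 1-\gamma<1-1/\psi\Leftrightarrow\gamma\psi>1$; in (ii), $1-\gamma<0<1-1/\psi$ gives $\theta<0<1$ directly. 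Together with progressive measurability of $\nu^c$ (inherited from $c$ and the semimartingale $U^c$) this yields $\nu^c\in\cV=\{\nu\text{ progressively measurable},\ \nu\geq\delta\theta\}$. The same computation pinpoints why the regime $\gamma>1$, $0<\psi<1$, $\gamma\psi>1$ is left out: there $\theta>1$, the sign reverses, and $\nu^c<\delta\theta$ in general.

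The delicate part — and what I expect to be the main obstacle — is to confirm that the explicitly chosen $\cV$ actually meets the structural requirements underpinning the variational representations, namely conditions (i)--(iii) of Definition~\ref{def: dualvar} together with the finiteness and class-(D) conditions invoked for Lemma~\ref{lem:dual-val-rep}: that $U^{c,\nu}_t$ and $V^{yD,\nu}_t$ are finite and that $\kappa^\nu_{0,\cdot}U$, $\kappa^\nu_{0,\cdot}U^{c,\nu}$, and $V^{yD,\nu}$ are of class~(D). Here $\nu\geq\delta\theta$ makes $\kappa^\nu_{s,t}$ and $\kappa^{\nu/\gamma}_{s,t}$ uniformly bounded on $[0,T]$, so multiplication by these weights preserves integrability and uniform integrability; the integrability of $c$ (resp.\ $D$) feeding these estimates is exactly what is encoded in $\C_a$ (resp.\ $\cD_a$) through Propositions~\ref{prop:EZ-existence} and \ref{prop:sdd-existence}, which is where the split into the two regimes (all moments finite versus class~(D)) is genuinely used. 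One should also record that $\nu^c$ is real-valued, i.e.\ that $(1-\gamma)U^c$ does not vanish on a set of positive measure — automatic in regime (i) for streams with nondegenerate bequest, while degenerate streams contribute $\nu^c\equiv\delta\theta$ and cause no difficulty. Once this bookkeeping is in place, Proposition~\ref{prop:duality-bd} applies and delivers \eqref{eq:dual-ineq}.
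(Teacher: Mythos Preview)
Your overall strategy matches the paper's: reduce to Proposition~\ref{prop:duality-bd}, verify Assumptions~\ref{ass:convex}--\ref{ass:G-concave} via Lemma~\ref{lem:EZ-fun}, and check $\nu^c\geq\delta\theta$ from the sign of $1-\theta$. The treatment of case~(ii) is essentially complete and coincides with the paper's, since there $F,G,V_T\leq 0$ make $U^{c,\nu},V^{yD,\nu}\leq 0$ automatically finite, and the bounds for class~(D) follow by sandwiching $\kappa^\nu_{0,t}U^{c,\nu}_t$ between $\expec_t[\kappa^\nu_{0,T}U_T(c_T)+\int_0^T\kappa^\nu_{0,s}F(c_s,\nu_s)\,ds]$ and $\expec_t[\kappa^\nu_{0,T}U_T(c_T)]$.

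The gap is in case~(i). There $0<\theta<1$ forces $F,G,V_T\geq 0$, so finiteness of $U^{c,\nu}_t=\expec_t[\kappa^\nu_{t,T}U_T(c_T)+\int_t^T\kappa^\nu_{t,s}F(c_s,\nu_s)\,ds]$ is \emph{not} free: the integrand is nonnegative and $F(c,\nu)=\delta^\theta\frac{c^{1-\gamma}}{1-\gamma}\big(\frac{\nu-\delta\theta}{1-\theta}\big)^{1-\theta}$ is unbounded in $\nu\in\cV$, so no moment condition on $c$ alone (in particular nothing ``encoded in $\C_a$ through Proposition~\ref{prop:EZ-existence}'') controls it. The paper closes this with a supermartingale argument: since $U^c+\int_0^\cdot f(c_s,U^c_s)\,ds$ is a martingale and $f(c,u)\geq F(c,\nu)-\nu u$, It\^o's formula makes $\kappa^\nu_{0,\cdot}U^c+\int_0^\cdot\kappa^\nu_{0,s}F(c_s,\nu_s)\,ds$ a local supermartingale; localizing, using the class~(D) property of $U^c\in\cU$ (which is all that $\C_a$ provides) and monotone convergence on the nonnegative integrand yields $\expec_t[\int_t^T\kappa^\nu_{t,s}F(c_s,\nu_s)\,ds]<\infty$, hence $U^{c,\nu}_t<\infty$. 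The same device handles $V^{yD,\nu}$. Your side remark on $(1-\gamma)U^c$ vanishing is not addressed in the paper and is not needed once the argument is organized this way.
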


\begin{rem}
 When $\gamma=1/\psi$, Epstein-Zin utility reduces to time-additive utility with constant relative risk aversion $\gamma$. Then \eqref{eq:sdd-eq} reduces to the following standard form of the dual problem
 \[
  V^{yD}_t = \expec_t \bra{\tfrac{\gamma}{1-\gamma} e^{-\tfrac{\delta}{\gamma} T} (y D_T)^{\tfrac{\gamma-1}{\gamma}} + \int_t^T \delta^{\tfrac{1}{\gamma}} \tfrac{\gamma}{1-\gamma} e^{-\tfrac{\delta}{\gamma} s} (y D_s)^{\tfrac{\gamma-1}{\gamma}} ds}.
 \]
\end{rem}

\section{Main results}\label{sec: no gap}
\subsection{Candidate optimal strategies}
For Epstein-Zin utility and a wide class of financial models, we will show that the inequality \eqref{eq:dual-ineq} is actually an identity, i.e., there is \emph{no duality gap}. Moreover we will identify $(\pi^*, c^*)$ and $(y^*, D^*)$ such that
\begin{equation}\label{eq:no gap}
 \max_{(\pi, c)\in \cA} U^c_0 = U^{c^*}_0 = V_0^{y^* D^*} + w y^* = \min_{y>0}(\min_{D\in \cD_a} V_0^{yD} + w y).
\end{equation}
Therefore, $(\pi^*, c^*)$ (resp. $D^*$) is the optimizer for the primal (resp. dual) problem, and $y^*$ is the Lagrangian multiplier.

We will work with models with Brownian noise. Let $(\mathcal{F}_t)_{0\leq t\leq T}$ be the argumented filtration generated by a $k+n$-dimensional Brownian motion $B = (W, W^\bot)$, where $W$ (resp. $W^\bot$) represents the first $k$ (resp. last $n$) components. We will also use $(\F^W_t)_{0\leq t\leq T}$ (resp. $(\F^{W^\bot}_t)_{0\leq t\leq T}$) as the argumented filtration generated by $W$ (resp. $W^\bot$).
Consider a model of financial market where assets $S= (S^0, S^1, \dots, S^n)$ have the dynamics
\begin{equation}\label{eq:SDE S}
\begin{split}
 dS^0_t = S^0_t r_t dt, \quad
 dS^i_t = S^i_t \Big[(r_t + \mu^i_t) dt + \sum_{j=1}^n \sigma^{ij}_t dW^{\rho, j}_t\Big], \quad i=1, \dots, n.
\end{split}
\end{equation}
Here $r, \mu, \sigma$ and $\rho$ are $\F^W$-adapted processes valued in $\Real, \Real^n, \Real^{n\times n}, \Real^{n\times k}$, respectively, and satisfy $\int_0^T |\alpha_t|^2 dt<\infty$ a.s. for $\alpha = r, \mu, \sigma, \rho$, and $\sigma \sigma'$ is assumed to be invertible. The $n$-dimensional Brownian motion $W^\rho$ is defined as $W^\rho:= \int_0^\cdot \rho_s dW_s + \int_0^\cdot \rho^\bot_s dW^\bot_s$ for a $\Real^{n\times n}$-valued process $\rho^\bot$ satisfying $\rho \rho' + \rho^\bot (\rho^\bot)' = 1_{n\times n}$ (the $n$-dimensional identity matrix). Then $W^\rho$ and $W$ has (instantaneous) correlation $\rho$. For $(\pi, c)\in \mathcal{A}$, $\mathcal{W}^{(\pi, c)}$ follows
\begin{equation}\label{eq:SDE W}
 d\mathcal{W}^{(\pi, c)}_t = \mathcal{W}^{(\pi, c)}_t [(r_t + \pi'_t \mu_t) dt + \pi'_t \sigma_t dW^\rho_t] - c_t dt.
\end{equation}

Consider the \define{primal} and \define{dual value processes} defined as
\[
 \mathbb{U}^{c}_t := \esssup_{(\tilde{\pi}, \tilde{c})\in \cA(\pi, c, t)} U^{\tilde{c}}_t \quad \text{ and } \quad  \mathbb{V}^{yD}_t :=\essinf_{\tilde{D}\in \cD_a(D, t)} V^{y\tilde{D}}_t,
\]
where
\[
 \cA(\pi, c, t) := \{(\tilde{\pi}, \tilde{c})\in \cA \,:\, (\tilde{\pi}, \tilde{c}) = (\pi, c) \text{ on } [0,t]\}, \quad \cD_a(D, t) := \{\tilde{D}\in \cD_a \,:\, \tilde{D} = D \text{ on } [0,t]\}.
\]
Due to the homothetic property of Epstein-Zin utility, we speculate that $\mathbb{U}^c$ and $\mathbb{V}^{yD}$ have the following decomposition:
\begin{equation}\label{eq:value-proc}
 \mathbb{U}^{c}_t = \tfrac{1}{1-\gamma} (\cW^{(\pi, c)}_t)^{1-\gamma} e^{Y^p_t} \quad \text{ and } \quad \mathbb{V}^{yD}_t = \tfrac{\gamma}{1-\gamma} (y D_t)^{\frac{\gamma-1}{\gamma}} e^{Y^d_t/\gamma},
\end{equation}
for some processes $Y^p$ and $Y^d$. Let us derive the dynamic equations that $Y^p$ and $Y^d$ satisfy via the \emph{martingale principle}: $\bU^c + \int_0^\cdot f(c_s, \bU^c_s) ds$ (resp. $\bV^{y D} + \int_0^\cdot g(yD_s, \tfrac{1}{\gamma} \bV^{yD}_s) ds$) is a supermartingale (resp. submartingale) for arbitrary $(\pi, c)$ (resp. $D$) and is a martingale for the optimal one. For Markovian models, the martingale principle is a reformulation of the \emph{dynamic programming principle}. For the non-Markovian models, it can be considered as the dynamic programming for BSDEs, cf., eg. \cite{Hu-Imkeller-Muller}.

\begin{lem}\label{lem:mart-prin}
 The ansatz \eqref{eq:value-proc} and the martingale principle imply that both $(Y^p, Z^p)$ and $(Y^d, Z^d)$, for some $Z^p$ and $Z^d$, satisfies the BSDE
 \begin{equation}\label{eq:BSDE Y}
  Y_t = \int_t^T H(Y_s, Z_s) ds -\int_t^T Z_s dW_s, \quad t\in [0,T],
 \end{equation}
 where $H: \Omega \times \Real^n \times \Real^{n\times n} \rightarrow \Real$ is given by
 \begin{equation}\label{eq:H}
  H(y,z) := \tfrac12 z M_t z' + \tfrac{1-\gamma}{\gamma} \mu_t' \Sigma^{-1}_t \sigma_t \rho_t z' + \theta \tfrac{\delta^\psi}{\psi} e^{-\tfrac{\psi}{\theta} y} + h_t - \delta \theta.
 \end{equation}
 Here, suppressing the subscript $t$,
 \begin{equation}\label{eq:func def}
  \Sigma:= \sigma \sigma', \quad M:= 1_{k\times k} + \tfrac{1-\gamma}{\gamma} \rho' \sigma' \Sigma^{-1} \sigma \rho, \quad \text{and} \quad h:= (1-\gamma) r + \tfrac{1-\gamma}{2\gamma} \mu' \Sigma^{-1} \mu.
 \end{equation}
 The function $H$, interpreted as the Hamilton of the primal and dual optimization problem, has the following representation:
 \begin{equation}\label{eq:H-rep}
 \begin{split}
  H(y,z) =& (1-\gamma) r_t - \delta \theta + \tfrac12 |z|^2 + (1-\gamma) \sup_{\overline{c}} \Big[-\overline{c} + \delta e^{-\tfrac{1}{\theta} y} \tfrac{1}{1-\tfrac{1}{\psi}} \overline{c}^{1-\tfrac{1}{\psi}}\Big]\\
  &+ (1-\gamma) \sup_{\pi} \Big[-\tfrac{\gamma}{2} \pi' \Sigma_t \pi + \pi' (\mu_t + \sigma_t \rho_t z')\Big]\\
  =& (1-\gamma) r_t -\delta \theta + \theta \tfrac{\delta^\psi}{\psi} e^{-\tfrac{\psi}{\theta} y} + \tfrac{1}{2\gamma} |z|^2\\
  &+ (1-\gamma) \inf_{\mu_t + \sigma_t \rho_t \xi' + \sigma_t \rho^\bot_t \eta'=0} \Big[\tfrac{1}{2\gamma} (|\xi|^2 + |\eta|^2) - \tfrac{1}{\gamma} \xi z'\Big].
 \end{split}
 \end{equation}
 Their optimizers, evaluated at $(Y^p, Z^p)$ for the primal problem and $(Y^d, Z^d)$ for the dual problem, are
 \begin{equation}\label{eq:optimizer}
 \begin{split}
  &\pi^*_t = \tfrac{1}{\gamma} \Sigma^{-1}_t (\mu_t + \sigma_t \rho_t (Z^p_t)'), \quad \tfrac{c_t^*}{\mathcal{W}_t^{\pi^*, c^*}} = \overline{c}^*_t= \delta^\psi e^{-\tfrac{\psi}{\theta} Y^p_t},\\
  &\xi^*_t = -( \mu_t' + Z^d \rho'_t \sigma'_t) \Sigma^{-1}_t \sigma_t \rho_t + Z^d_t, \quad
  \eta^*_t = -(\mu_t' + Z^d \rho'_t \sigma'_t) \Sigma^{-1}_t \sigma_t \rho_t^\bot,\\
  & dD^*_t / D^*_t = -r_t dt + (-(\mu_t'+ Z^d \rho_t' \sigma_t') \Sigma_t^{-1} \sigma_t dW^\rho_t + Z^d_t dW_t)= - r_t dt + (\xi^*_t dW_t + \eta^*_t dW^\bot_t).
 \end{split}
 \end{equation}
\end{lem}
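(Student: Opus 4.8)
The plan is to feed the ansatz \eqref{eq:value-proc} into the martingale principle and read off the required drift conditions via It\^o's formula. Write $\bar c_t := c_t/\cW^{(\pi,c)}_t$, and posit that $Y^p$ and $Y^d$ have semimartingale decompositions $dY^p_t = -\Gamma^p_t\,dt + Z^p_t\,dW_t$ and $dY^d_t = -\Gamma^d_t\,dt + Z^d_t\,dW_t$, with martingale parts only along $W$ (consistent with the claim that both solve \eqref{eq:BSDE Y}); it then suffices to show $\Gamma^p = H(Y^p,Z^p)$ and $\Gamma^d = H(Y^d,Z^d)$.

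For the primal side I would apply It\^o to $\bU^c_t = \tfrac1{1-\gamma}(\cW^{(\pi,c)}_t)^{1-\gamma}e^{Y^p_t}$ using \eqref{eq:SDE W}, collect the $dt$-terms, and factor out $\bU^c_t$. Homotheticity of the Epstein--Zin aggregator together with $\theta = \tfrac{1-\gamma}{1-1/\psi}$ gives $f(c_t,\bU^c_t) = (\cW^{(\pi,c)}_t)^{1-\gamma}[\,\cdot\,]$, so the drift of $\bU^c + \int_0^\cdot f(c_s,\bU^c_s)\,ds$ becomes $\bU^c_t$ times an expression of the shape $(1-\gamma)r_t - \delta\theta + \tfrac12|Z^p_t|^2 - \Gamma^p_t$ plus $(1-\gamma)$ times the two bracketed integrands (in $\bar c_t$ and in $\pi_t$) that appear in the first representation of \eqref{eq:H-rep}. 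The martingale principle forces this to be $\le 0$ for every $(\pi,c)$, with equality at the optimum; since the $\bar c$- and $\pi$-dependence is separated, respectively concave and quadratic, the maximizers are the first-order points $\bar c^*_t = \delta^\psi e^{-\tfrac\psi\theta Y^p_t}$ and $\pi^*_t = \tfrac1\gamma\Sigma^{-1}_t(\mu_t + \sigma_t\rho_t(Z^p_t)')$ of \eqref{eq:optimizer}, and vanishing of the drift at the optimum yields $\Gamma^p = H(Y^p,Z^p)$ in the first form of \eqref{eq:H-rep}. Evaluating the two suprema explicitly --- using $(1-\gamma)\tfrac1{\psi-1} = \tfrac\theta\psi$ for the consumption piece and $\sigma_t'\Sigma_t^{-1}\sigma_t = 1_{n\times n}$ for the portfolio piece --- collapses the expression to \eqref{eq:H}.

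For the dual side I would first identify the relevant deflators: in this Brownian market a strictly positive $D$ for which $D\cW^{(\pi,c)} + \int_0^\cdot D_s c_s\,ds$ is a supermartingale for all $(\pi,c)\in\cA$ necessarily has $dD_t/D_t = -r_t\,dt + \xi_t\,dW_t + \eta_t\,dW^\bot_t$ with the no-arbitrage constraint $\mu_t + \sigma_t\rho_t\xi_t' + \sigma_t\rho^\bot_t\eta_t' = 0$ (the drift of $D\cW^{(\pi,c)} + \int D c$ equals $D\cW^{(\pi,c)}\pi_t'$ times that vector, so its vanishing for all $\pi$ gives the constraint, and taking $(\pi,c)=(0,0)$ forces the $-r_t$ drift). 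Applying It\^o to $\bV^{yD}_t = \tfrac\gamma{1-\gamma}(yD_t)^{(\gamma-1)/\gamma}e^{Y^d_t/\gamma}$ and using the homotheticity of $g$ (so $g(yD_t,\tfrac1\gamma\bV^{yD}_t)$ again factors through $(yD_t)^{(\gamma-1)/\gamma}e^{Y^d_t/\gamma}$, carrying the $e^{-\tfrac\psi\theta Y^d_t}$ nonlinearity), the drift of $\bV^{yD} + \int_0^\cdot g(yD_s,\tfrac1\gamma \bV^{yD}_s)\,ds$ becomes $\bV^{yD}_t$ times a bracket that is affine in $\tfrac1{2\gamma}(|\xi_t|^2+|\eta_t|^2) - \tfrac1\gamma\xi_t(Z^d_t)'$. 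The martingale principle (submartingale for every feasible $(\xi,\eta)$, martingale for the optimal one) pins down $\Gamma^d$; because the prefactor $\tfrac\gamma{1-\gamma}$ and the coefficient of this quadratic both change sign across $\gamma=1$, in either regime the extremum is the constrained \emph{infimum} in the second representation of \eqref{eq:H-rep}. Solving that quadratic program by a Lagrange multiplier --- using that $\sigma_t$ is invertible and $[\rho_t\,|\,\rho^\bot_t][\rho_t\,|\,\rho^\bot_t]' = 1_{n\times n}$ --- produces $\xi^*_t$ and $\eta^*_t$ as in \eqref{eq:optimizer}, hence $D^*$, and the optimal value reassembles, via $\sigma_t^{-1} = \sigma_t'\Sigma_t^{-1}$, into exactly $\tfrac12 z M_t z' + \tfrac{1-\gamma}\gamma\mu_t'\Sigma_t^{-1}\sigma_t\rho_t z' + h_t$; so $\Gamma^d = H(Y^d,Z^d)$ as well.

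The work is almost entirely bookkeeping once It\^o is applied; this is a formal derivation from the posited ansatz, the rigorous verification being deferred. The step I expect to be most delicate is the dual side: making the sign accounting (submartingale versus supermartingale, the $(1-\gamma)$ and $\tfrac\gamma{1-\gamma}$ prefactors) yield a genuine infimum uniformly in $\gamma\gtrless 1$, and then executing the constrained quadratic minimization over $(\xi,\eta)$ and recognizing that its optimal value rebuilds the matrix expression in \eqref{eq:H} --- this is where the two representations in \eqref{eq:H-rep} get reconciled, whereas on the primal side the reconciliation is easier because the suprema there are unconstrained.
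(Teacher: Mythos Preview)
Your proposal is correct and follows essentially the same approach as the paper: apply It\^o to the ansatz on each side, use the martingale principle to extract the drift condition and hence the primal supremum and dual constrained infimum in \eqref{eq:H-rep}, and solve the latter by a Lagrange multiplier to obtain $(\xi^*,\eta^*)$. The paper organizes the dual It\^o computation in the discussion preceding the proof (your identification of the deflator dynamics and constraint matches \eqref{eq:SDE D} and the paragraph around \eqref{eq:Ito dual}), and its formal proof focuses on the Lagrangian step, but the content is the same.
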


In what follows, we will make the previous heuristic argument rigorous by starting from the BSDE \eqref{eq:BSDE Y} and showing that it admits a solution $(Y,Z)$. Replacing $(Y^p, Z^p)$ and $(Y^d, Z^d)$ in \eqref{eq:optimizer} by $(Y,Z)$, we call the resulting processes $(\pi^*, c^*)$ and $D^*$ the candidate optimal strategies for the primal and dual problem, respectively.
The candidate optimal strategy for the primal problem has been documented in various settings, cf. \cite[Theorem 2 and 4]{Schroder-Skiadas-JET} for complete markets, \cite[Equation (4.4)]{KSS-FS}, \cite[Theorem 6.1]{Kraft-Seiferling-Seifried}, and \cite[Equation (2.14)]{Xing} for Markovian models. The form for $D^*$ can be obtained via the utility gradient approach, cf. \cite[Equation (35)]{Duffie-Epstein-pricing}, \cite[Theorem 2.2]{Duffie-Skiadas}, and \cite[Equation (4)]{Schroder-Skiadas-JET}; see also Corollary \ref{cor:utility graident} below. The novelty here is to relate $D^*$ and the minimization problem in \eqref{eq:H-rep}. To understand this minimization problem, we start with the following class of state price densities:
\begin{equation}\label{eq:SDE D}
 dD_t/ D_t = -r_t dt + \xi_t dW_t + \eta_t dW^\bot_t, \quad \text{for some } \xi, \eta.
\end{equation}
This form ensures that $D \mathcal{W}^{(0,0)}$, where $\mathcal{W}^{(0,0)}$ is the wealth process of no investment and consumption, is a supermartingale. In general, $D \mathcal{W}^{(\pi, c)}$ satisfies
\[
 d D_t \cW^{(\pi, c)}_t = D_t \cW^{(\pi, c)}_t \pi'_t (\mu_t + \sigma_t \rho_t \xi_t' + \sigma_t \rho^\bot_t \eta_t') dt - D_t c_t dt + \text{ local martingale}.
\]
Therefore, $D\in \cD$ necessarily implies that $\mu + \sigma \rho \xi' + \sigma \rho^\bot \eta' =0$, which is the constraint for the minimization problem in \eqref{eq:H-rep}. On the other hand, calculation shows
\begin{equation}\label{eq:Ito dual}
\begin{split}
 d D_t^{\tfrac{\gamma-1}{\gamma}} &= \tfrac{\gamma-1}{\gamma} D_t^{\tfrac{\gamma-1}{\gamma}} \big[-r_t - \tfrac{1}{2r} (|\xi_t|^2 + |\eta_t|^2)\big] dt+ \tfrac{\gamma-1}{\gamma} D_t^{\tfrac{\gamma-1}{\gamma}} (\xi_t dW_t + \eta_t dW^\bot_t),\\
 de^{Y^d_t/\gamma} &= e^{Y^d_t/\gamma} \big[-\tfrac{1}{\gamma} H(Y^d_t, Z^d_t) + \tfrac{1}{2\gamma^2} |Z^d_t|^2\big] dt + e^{Y^d_t/\gamma} \tfrac{Z^d_t}{\gamma} dW_t.
\end{split}
\end{equation}
Therefore the drift of $\tfrac{\gamma}{1-\gamma} (yD)^{\tfrac{\gamma-1}{\gamma}} e^{Y^d/\gamma} + \int_0^\cdot g(yD_s, \tfrac{1}{1-\gamma} (yD_s)^{\tfrac{\gamma-1}{\gamma}} e^{Y^d_s/\gamma}) ds$ reads (after suppressing the subscript $t$)
\[
 \tfrac{1}{1-\gamma} (yD)^{\tfrac{\gamma-1}{\gamma}} e^{Y^d/\gamma} \Big\{(1-\gamma) r -\delta \theta + \theta \tfrac{\delta^\psi}{\psi} e^{-\tfrac{\psi}{\theta} Y^d} + \tfrac{1}{2\gamma} |Z^d|^2 + (1-\gamma) \big[\tfrac{1}{2\gamma}(|\xi|^2 + |\eta|^2) - \tfrac{1}{\gamma} \xi (Z^d)'\big]-H(Y^d, Z^d)\Big\}.
\]
Then the martingale principle implies that the previous drift is nonnegative, leading to the minimization problem in \eqref{eq:H-rep}. Solving this constrained minimization problem via the Lagrangian multiplier method, we obtain its minimizer in \eqref{eq:optimizer}.

\subsection{Models with bounded market price of risk}
We will verify in this section the identity \eqref{eq:no gap}, hence confirm the optimality of $(\pi^*, c^*)$ and $D^*$. To avoid technicality clouds the idea of proofs, we start from the following restriction on model coefficients.
\begin{ass}\label{ass:bdd-coeff}
 The processes $r$ and $\mu' \Sigma^{-1} \mu$ are both bounded.
\end{ass}
This assumption allows non-Markovian models, but requires the \emph{market price of risk} $\sqrt{\mu' \Sigma^{-1}\mu}$ to be bounded. Markovian models with unbounded market price of risk will be discussed in the next section, where more technical conditions will be imposed. We will also assume the same restriction on utility parameters $\gamma$ and $\psi$ as in Theorem \ref{thm:EZ-dual}.

\begin{lem}\label{lem:BSDE Y bdd}
 Suppose that either $0<\gamma<1$, $\gamma \psi >1$, or $\gamma, \psi>1$, and that Assumption \ref{ass:bdd-coeff} holds. Then \eqref{eq:BSDE Y} admit a solution $(Y,Z)$ such that $Y$ is bounded and $Z\in H_{\BMO}$\footnote{$Z\in H_{\BMO}$ if $\sup_{\tau} \|\expec_\tau[\int_\tau^T |Z_s|^2 ds]\|_{\mathbb{L}^\infty}<\infty$, where $\tau$ is chosen from the set of $\F$-stopping times}.
\end{lem}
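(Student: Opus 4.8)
The plan is to establish existence of a bounded solution with $Z \in H_{\BMO}$ by combining a priori estimates with an approximation/truncation scheme, exploiting the quadratic-growth structure of $H$. First I would record the structural properties of $H$ from \eqref{eq:H} and \eqref{eq:H-rep}: under Assumption \ref{ass:bdd-coeff}, the coefficients $M_t$, $\mu_t'\Sigma_t^{-1}\sigma_t\rho_t$, and $h_t$ are uniformly bounded, while the nonlinearity in $y$ is $\theta \tfrac{\delta^\psi}{\psi} e^{-(\psi/\theta)y}$. A crucial sign observation: since $\gamma\psi>1$, one has $\theta = \tfrac{1-\gamma}{1-1/\psi}$, and in both parameter regimes ($0<\gamma<1,\gamma\psi>1$ giving $\theta>0$; and $\gamma,\psi>1$ giving $\theta>0$ as well, with $\psi/\theta>0$), so the map $y\mapsto e^{-(\psi/\theta)y}$ is decreasing. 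This monotonicity is what yields one-sided a priori bounds on $Y$: evaluating the generator at constants and using comparison, $Y$ is bounded below by a constant determined by the infimum of $h-\delta\theta$ and bounded above by a constant determined by the supremum of the same quantity together with the $e^{-(\psi/\theta)y}$ term at $y=$ lower bound.

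Second, I would set up a truncated BSDE: replace $e^{-(\psi/\theta)y}$ by $e^{-(\psi/\theta)(y\vee (-N))\wedge N}$ or, more cleanly, replace the whole generator $H$ by $H_N(y,z) = H\big((y\vee a)\wedge b, z\big)$ where $[a,b]$ is the a priori interval just identified, and simultaneously note $H_N$ is of quadratic growth in $z$ uniformly. Existence of a bounded solution $(Y_N,Z_N)$ with $Z_N\in H_{\BMO}$ for quadratic BSDEs with bounded terminal condition (here terminal value $0$) and generator of the form $\tfrac12 z M_t z' + (\text{bounded linear in }z) + (\text{bounded})$ follows from the classical theory of Kobylanski / Tevzadze / Briand--Hu; I would cite this. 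Then I would show that any such bounded solution in fact takes values in $[a,b]$ by a comparison argument against the constant sub/supersolutions, so the truncation is inactive and $(Y_N,Z_N)$ solves the original \eqref{eq:BSDE Y}. The $H_{\BMO}$ bound on $Z$ is then extracted from the standard energy estimate: apply Itô to $e^{\lambda Y}$ (or to $Y^2$) for suitable $\lambda$, use boundedness of $Y$ and of the coefficients, and localize to obtain $\sup_\tau \|\expec_\tau[\int_\tau^T |Z_s|^2 ds]\|_{\bL^\infty}<\infty$.

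The main obstacle I anticipate is handling the exponential nonlinearity $e^{-(\psi/\theta)y}$ cleanly together with the quadratic term in $z$ — in particular, making the a priori interval $[a,b]$ genuinely explicit and making sure the comparison theorem applies despite the generator being only locally Lipschitz in $y$ (it is smooth, hence locally Lipschitz, so on the bounded interval $[a,b]$ it is globally Lipschitz in $y$, which suffices, but this needs to be said). A secondary subtlety is that in the regime $0<\gamma<1$ the sign of $1-\gamma$ flips relative to $\gamma>1$, which affects the sign of the quadratic form $zM_tz'$ only through $M_t = 1_{k\times k} + \tfrac{1-\gamma}{\gamma}\rho'\sigma'\Sigma^{-1}\sigma\rho$; I would check $M_t$ is uniformly positive definite (or at least that $H$ retains the $\tfrac{1}{2\gamma}|z|^2$-type control visible in the second representation in \eqref{eq:H-rep}, which is manifestly of the right sign since $\gamma>0$), so the quadratic BSDE machinery applies in both cases. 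Once these points are in place, the conclusion is immediate.
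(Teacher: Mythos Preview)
Your overall architecture---truncate the exponential nonlinearity, invoke Kobylanski-type existence for quadratic BSDEs with bounded data, use comparison to show the truncation is inactive, then extract the $H_{\BMO}$ bound by an energy estimate---is exactly the paper's approach. But there is a genuine sign error that would derail your a priori bounds as written.

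You assert that $\theta>0$ in both regimes. This is false: when $\gamma,\psi>1$ one has $1-\gamma<0$ and $1-1/\psi>0$, hence $\theta=\tfrac{1-\gamma}{1-1/\psi}<0$. Consequently $-\psi/\theta>0$, so $y\mapsto e^{-(\psi/\theta)y}$ is \emph{increasing}, not decreasing, and the coefficient $\theta\tfrac{\delta^\psi}{\psi}$ in front of it is negative. The two parameter regimes therefore require opposite one-sided bounds: in the case $0<\gamma<1,\ \gamma\psi>1$ (where indeed $\theta\in(0,1)$ and the exponential term is positive) a uniform \emph{lower} bound on $Y^n$ controls $e^{-(\psi/\theta)Y^n}$; in the case $\gamma,\psi>1$ (where $\theta<0$ and the exponential term is negative) one instead needs a uniform \emph{upper} bound on $Y^n$, obtained by comparing with the BSDE whose generator is $\tfrac12|z|^2+h_{\max}-\delta\theta$. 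The paper splits into precisely these two cases for this reason. Your single monotonicity claim does not cover the second case, and the two-sided interval $[a,b]$ you propose cannot be produced by the mechanism you describe there.

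A minor further point: the paper first absorbs the linear-in-$z$ drift term $\tfrac{1-\gamma}{\gamma}\mu'\Sigma^{-1}\sigma\rho\,z'$ via a Girsanov change to $\overline{\prob}$, which cleans up the generator before truncation. This is not essential (the term is bounded under Assumption~\ref{ass:bdd-coeff}), but it makes the comparison arguments and the final transfer of the $H_{\BMO}$ property back to $\prob$ more transparent.
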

Having establish a solution $(Y, Z)$ to \eqref{eq:BSDE Y}, we define
\begin{equation}\label{eq:opt str}
\begin{split}
 \pi^*_t = \tfrac{1}{\gamma} \Sigma^{-1}_t (\mu_t + \sigma_t \rho_t Z_t'), & \qquad \tfrac{c^*_t}{\mathcal{W}_t^{(\pi^*, c^*)}} = \delta^\psi e^{-\tfrac{\psi}{\theta} Y_t},\\
 dD^*_t/ D^*_t = -r_t dt + (-\gamma (\pi^*_t)' \sigma_t dW^\rho_t + ZdW_t), & \qquad y^* = w^{-\gamma} e^{Y_0},
\end{split}
\end{equation}
and present the main result of this section.

\begin{thm}\label{thm:opt str bdd}
Suppose that either $0<\gamma<1$, $\gamma \psi >1$, or $\gamma, \psi>1$, and that Assumption \ref{ass:bdd-coeff} holds. Then, for $\pi^*, c^*, D^*$, and $y^*$ defined in \eqref{eq:opt str},
\begin{equation}\label{eq:dual-eq}
 \max_{(\pi, c)\in \cA} U^c_0 = U^{c^*}_0 = V_0^{y^* D^*} + w y^* = \min_{y>0}(\min_{D\in \cD_a} V_0^{yD} + w y).
\end{equation}
Therefore $(\pi^*, c^*)$ is the optimal strategy for the primal problem, $D^*$ is the optimal state price density for the dual problem, and $y^*$ is the Lagrangian multiplier.
\end{thm}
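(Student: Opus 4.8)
The plan is to prove Theorem \ref{thm:opt str bdd} by establishing a chain of inequalities that, combined with Proposition \ref{prop:duality-bd} (i.e.\ \eqref{eq:dual-ineq}), pinches everything into equalities. The key observations are: (1) from the ansatz \eqref{eq:value-proc} at $t=0$ with the initial wealth $\cW^{(\pi^*,c^*)}_0=w$ and $D^*_0=1$, we have $U^{c^*}_0 = \tfrac{1}{1-\gamma} w^{1-\gamma} e^{Y_0}$ and $V^{y^* D^*}_0 = \tfrac{\gamma}{1-\gamma}(y^*)^{(\gamma-1)/\gamma} e^{Y_0/\gamma}$; with the choice $y^* = w^{-\gamma}e^{Y_0}$ a direct computation gives $V^{y^* D^*}_0 + w y^* = \tfrac{\gamma}{1-\gamma} w^{1-\gamma} e^{Y_0} + w^{1-\gamma} e^{Y_0} = \tfrac{1}{1-\gamma} w^{1-\gamma} e^{Y_0} = U^{c^*}_0$, so the two middle quantities in \eqref{eq:dual-eq} already agree. (2) It therefore suffices to show $U^{c^*}_0 \geq \sup_{(\pi,c)\in\cA} U^c_0$ (primal optimality of $(\pi^*,c^*)$) and $V^{y^*D^*}_0 \geq \inf_{D\in\cD_a} V^{yD}_0$ is not needed — rather, by \eqref{eq:dual-ineq} we have $\sup_\cA U^c_0 \leq \inf_{y,D}(V^{yD}_0 + wy) \leq V^{y^*D^*}_0 + wy^* = U^{c^*}_0 \leq \sup_\cA U^c_0$, so the only genuinely new thing to prove is that $(\pi^*,c^*)\in\cA$ and $U^{c^*}_0 = \tfrac{1}{1-\gamma}w^{1-\gamma}e^{Y_0}$, together with $D^*\in\cD_a$ and $V^{y^*D^*}_0 = \tfrac{\gamma}{1-\gamma}(y^*)^{(\gamma-1)/\gamma}e^{Y_0/\gamma}$.

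First I would verify the admissibility and value identities on the primal side. Using \eqref{eq:SDE W} with the feedback controls \eqref{eq:opt str}, the wealth $\cW^{(\pi^*,c^*)}$ solves a linear SDE driven by processes built from $Z$ (which is in $H_{\BMO}$ by Lemma \ref{lem:BSDE Y bdd}) and the bounded coefficients of Assumption \ref{ass:bdd-coeff}; hence $\cW^{(\pi^*,c^*)}$ is a well-defined positive process, so $(\pi^*,c^*)$ respects the nonnegativity constraint. One then checks $c^*\in\C_a$, i.e.\ $U^{c^*}$ exists and lies in $\cU$: since $Y$ is bounded, $c^*/\cW^{(\pi^*,c^*)}$ is bounded, and the relevant integrability (class (D), or the moment bounds of Proposition \ref{prop:EZ-existence}) follows from $\BMO$/John–Nirenberg estimates on the stochastic exponentials appearing in $\cW^{(\pi^*,c^*)}$. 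Next, plugging the ansatz \eqref{eq:value-proc} into \eqref{eq:sdu}: by Itô and the BSDE \eqref{eq:BSDE Y} together with the fact that $(\pi^*,c^*)$ achieves the supremum in the Hamiltonian representation \eqref{eq:H-rep}, the process $\tfrac{1}{1-\gamma}(\cW^{(\pi^*,c^*)})^{1-\gamma}e^{Y} + \int_0^\cdot f(c^*_s, \cdot)\,ds$ is a genuine (local) martingale, and the $\BMO$/class (D) bounds upgrade it to a true uniformly integrable martingale; taking conditional expectation at $0$ recovers $U^{c^*}_0 = \tfrac{1}{1-\gamma}w^{1-\gamma}e^{Y_0}$. (For arbitrary $(\pi,c)\in\cA$ one gets a supermartingale by \eqref{eq:H-rep}, which re-proves $U^c_0 \leq U^{c^*}_0$ directly, though strictly this is already implied by the pinching above.)

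Symmetrically, on the dual side I would verify $D^*\in\cD_a$ and the value identity for $V^{y^*D^*}$. From \eqref{eq:opt str}, $dD^*_t/D^*_t = -r_t\,dt + \xi^*_t dW_t + \eta^*_t dW^\bot_t$ with $\xi^*,\eta^*$ as in \eqref{eq:optimizer}; the constraint $\mu + \sigma\rho(\xi^*)' + \sigma\rho^\bot(\eta^*)' = 0$ holds by construction, which is exactly what makes $D^*\cW^{(\pi,c)} + \int_0^\cdot D^*_s c_s ds$ a supermartingale for every $(\pi,c)\in\cA$ (the drift identity displayed before \eqref{eq:Ito dual}), so $D^*\in\cD$; the integrability needed for $D^*\in\cD_a$ (existence of the stochastic differential dual $V^{y^*D^*}\in\cU$) again follows from $Z\in H_{\BMO}$ and bounded coefficients via Proposition \ref{prop:sdd-existence}. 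Then, using the Itô expansions \eqref{eq:Ito dual} and the drift computation following them, the fact that $(\xi^*,\eta^*)$ achieves the infimum in \eqref{eq:H-rep} forces the drift of $\tfrac{\gamma}{1-\gamma}(y^*D^*)^{(\gamma-1)/\gamma}e^{Y/\gamma} + \int_0^\cdot g(\cdot)\,ds$ to vanish, so this is a martingale and $V^{y^*D^*}_0 = \tfrac{\gamma}{1-\gamma}(y^*)^{(\gamma-1)/\gamma}e^{Y_0/\gamma}$. Combining with the elementary identity from step (1) and the sandwich via \eqref{eq:dual-ineq} yields \eqref{eq:dual-eq}; the optimality claims are then immediate since $V^{y^*D^*}_0 \geq \inf_{D\in\cD_a}V^{yD}_0$ forces $D^*$ to attain the dual infimum, and similarly for $(\pi^*,c^*)$.

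The main obstacle I expect is the integrability bookkeeping: turning the local-martingale identities into true (uniformly integrable) martingale identities so that conditional expectations can be taken. This requires controlling negative and positive moments of the stochastic exponential defining $\cW^{(\pi^*,c^*)}$ and of $D^*$, uniformly enough to land in class (D) (case $\gamma,\psi>1$) or in the $L^\ell$-for-all-$\ell$ class (case $0<\gamma<1$); the tool is the John–Nirenberg inequality for $\BMO$ martingales, which gives exponential moments of $\int Z\,dW$, combined with boundedness of $r$ and $\mu'\Sigma^{-1}\mu$. A secondary subtlety is that the supermartingale property of $\bU^c + \int f\,ds$ for \emph{every} admissible $(\pi,c)$ — needed if one wants the direct comparison rather than relying solely on \eqref{eq:dual-ineq} — requires a localization argument plus a lower-bound (Fatou) step, since an arbitrary admissible consumption need not have the strong integrability enjoyed by $c^*$; but as noted, the pinch via Proposition \ref{prop:duality-bd} circumvents this, so the essential work is confined to the two candidate strategies.
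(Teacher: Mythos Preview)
Your proposal is correct and follows essentially the same route as the paper: define the candidate value processes $\bU^*$ and $\bV^{y*}$ via the ansatz, use It\^o together with the BSDE \eqref{eq:BSDE Y} and the optimality in \eqref{eq:H-rep} to see that these processes plus their respective running integrals are local martingales, upgrade to true martingales using that $Y$ is bounded and $Z\in H_{\BMO}$ (the paper invokes Kazamaki's criterion, which is the clean packaging of the John--Nirenberg estimate you mention), conclude $(\pi^*,c^*)\in\cA$, $D^*\in\cD_a$, and the value identities at $t=0$; then pinch with \eqref{eq:dual-ineq}. Your diagnosis of the main obstacle---the class~(D) bookkeeping for the stochastic exponential $Q$---is exactly where the paper spends its effort, and your observation that the direct supermartingale comparison for arbitrary $(\pi,c)$ is unnecessary once \eqref{eq:dual-ineq} is available matches the paper's logic.
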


As a direct consequence of Theorem \ref{thm:opt str bdd}, the minimizer $D^*$ of the dual problem is identified as the super-differential of the primal value function, coming from the utility gradient approach, cf. \cite{Duffie-Epstein-pricing}, \cite{Duffie-Skiadas}.

\begin{cor}\label{cor:utility graident}
 The state price density $D^*$ satisfies
 \begin{equation}\label{eq:utility gradient}
  D^*_t = w^\gamma e^{-Y_0} \exp\Big[\int_0^t \partial_u f(c^*_s, U^{c^*}_s) ds\Big] \partial_c f(c^*_t, U^{c^*}_t), \quad t\in[0,T].
 \end{equation}
 Moreover, when assumptions of Theorem \ref{thm:opt str bdd} hold, $\mathcal{W}^{(\pi^*, c^*)} D^* + \int_0^\cdot D^*_s c^*_s ds$ is a martingale.
\end{cor}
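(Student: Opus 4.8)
The plan is to obtain \eqref{eq:utility gradient} by directly combining the decomposition \eqref{eq:opt str} of $D^*$ with the explicit form of the Epstein--Zin aggregator derivatives in Lemma \ref{lem:EZ-fun}(ii), and then to deduce the martingale property as a by-product of the no-gap identity \eqref{eq:dual-eq}. First I would write out $\partial_c f(c^*_t, U^{c^*}_t)$ and $\partial_u f(c^*_t, U^{c^*}_t)$ using \eqref{eq:EZ agg}: since $\partial_c f(c,u) = \delta c^{-1/\psi}((1-\gamma)u)^{1-1/\theta}$ and $-\partial_u f(c,u) = \delta(1-\theta) c^{1-1/\psi}((1-\gamma)u)^{-1/\theta} + \delta\theta$ (Lemma \ref{lem:EZ-fun}(ii)), I would substitute the ansatz $(1-\gamma)U^{c^*}_t = (\cW^{(\pi^*,c^*)}_t)^{1-\gamma} e^{Y_t}$ coming from \eqref{eq:value-proc}, together with the feedback form $c^*_t/\cW^{(\pi^*,c^*)}_t = \delta^\psi e^{-\frac{\psi}{\theta} Y_t}$ from \eqref{eq:opt str}. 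This reduces $\partial_c f(c^*_t, U^{c^*}_t)$ to an explicit expression of the form $(\cW^{(\pi^*,c^*)}_t)^{-\gamma}$ times an exponential of $Y_t$, and similarly turns $\int_0^t \partial_u f(c^*_s, U^{c^*}_s)\,ds$ into an explicit integral.

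The second step is to compute $D^*_t$ from its SDE in \eqref{eq:opt str} and match. From $dD^*_t/D^*_t = -r_t\,dt + (-\gamma(\pi^*_t)'\sigma_t\,dW^\rho_t + Z_t\,dW_t)$, stochastic exponential and the definition $\pi^*_t = \frac1\gamma \Sigma_t^{-1}(\mu_t + \sigma_t\rho_t Z_t')$ give $D^*_t$ in closed form. In parallel I would apply It\^o's formula to $\frac1{1-\gamma}(\cW^{(\pi^*,c^*)}_t)^{1-\gamma} e^{Y_t}$, using the wealth dynamics \eqref{eq:SDE W}, the optimizer identities and the BSDE \eqref{eq:BSDE Y} for $(Y,Z)$, to see that along the optimal strategy the drift of $\bU^{c^*} + \int_0^\cdot f(c^*_s,\bU^{c^*}_s)\,ds$ vanishes; this is exactly the content of $(Y^p,Z^p)=(Y,Z)$ solving \eqref{eq:BSDE Y} with the primal Hamiltonian attaining its supremum at $(\pi^*,c^*)$. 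Matching the drift and diffusion coefficients obtained from these two computations against \eqref{eq:opt str} yields \eqref{eq:utility gradient}; the normalizing constant $w^\gamma e^{-Y_0}$ is pinned down by $D^*_0 = 1$ and $\cW^{(\pi^*,c^*)}_0 = w$.

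For the martingale claim, rather than verify integrability of the It\^o corrections by hand, I would argue from \eqref{eq:dual-eq}. Since $D^*\in\cD$, the process $M_t := \cW^{(\pi^*,c^*)}_t D^*_t + \int_0^t D^*_s c^*_s\,ds$ is a nonnegative supermartingale, so $\expec[M_T] \le M_0 = w$. On the other hand, from the Fenchel equalities that are tight at the optimizers—$V_T(y^*D^*_T) = U_T(c^*_T) - y^*D^*_T c^*_T$ and the analogous pointwise identities for $F$, $G$ at $\nu^{c^*}$—together with the no-gap identity $V_0^{y^*D^*} + wy^* = U^{c^*}_0$, one recovers $\expec[y^* M_T] = \expec[y^* \cW^{(\pi^*,c^*)}_T D^*_T + \int_0^T y^* D^*_s c^*_s\,ds] = y^* w$, i.e. $\expec[M_T] = w = M_0$. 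A nonnegative supermartingale with $\expec[M_T] = \expec[M_0]$ is a martingale, which gives the conclusion.

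The main obstacle I anticipate is the bookkeeping in the second step: one must verify that the exponents of $\cW^{(\pi^*,c^*)}$ and the coefficients of $Y$, $Z$ produced by It\^o's formula on $(\cW^{(\pi^*,c^*)})^{1-\gamma}e^{Y}$ match those produced by substituting the feedback controls into the aggregator derivatives, and this hinges on using the BSDE \eqref{eq:BSDE Y} and the explicit Hamiltonian representation \eqref{eq:H-rep} to cancel the $H(Y,Z)$ terms. The martingale half is comparatively soft once \eqref{eq:dual-eq} is in hand, since it only uses the supermartingale property of $D^*\cW + \int D c$ and the sharpness of the duality.
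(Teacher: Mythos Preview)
Your proposal is correct and follows essentially the same route as the paper. For the identity \eqref{eq:utility gradient}, both you and the paper substitute the ansatz $(1-\gamma)U^{c^*}_t=(\cW^{(\pi^*,c^*)}_t)^{1-\gamma}e^{Y_t}$ and the feedback form of $c^*$ into the aggregator derivatives to obtain an expression proportional to $(\cW^{(\pi^*,c^*)}_t)^{-\gamma}e^{Y_t}$, and then verify via It\^o's formula and the BSDE \eqref{eq:BSDE Y} that this expression satisfies the SDE for $D^*$ in \eqref{eq:opt str} (the paper applies It\^o directly to $(\cW^{(\pi^*,c^*)})^{-\gamma}e^{Y}$ and checks the drift reduces to $-r$, which is exactly the matching you describe). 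For the martingale claim, your argument is identical to the paper's: the no-gap identity \eqref{eq:dual-eq} forces equality throughout the chain \eqref{eq:dual-bdd}, in particular saturating the budget inequality so that $\expec[M_T]=w=M_0$, and a supermartingale with constant expectation is a martingale.
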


\subsection{Models with unbounded market price of risk}
Many widely used market models in the asset pricing literature come with unbounded market price of risk; for example, Heston model in \cite{Chacko-Viceira}, \cite{Kraft}, and \cite{Liu}, Kim-Omberg model in \cite{Kim-Omberg} and \cite{Wachter-port}. To obtain similar result as Theorem \ref{thm:opt str bdd} and Corollary \ref{cor:utility graident}, we focus on the utility specification
$\gamma, \psi>1$,
and work with Markovian models, whose investment opportunities are driven by a state variable $X$ satisfying
\begin{equation}\label{eq:SDE X}
 dX_t = b(X_t) dt + a(X_t) dW_t.
\end{equation}
Here $X$ takes value in an open domain $E \subseteq \Real^k$, $b: E\rightarrow \Real^k$ and $a: E\rightarrow \Real^{k\times k}$. Given functions $r: E \rightarrow \Real$, $\mu: E \rightarrow \Real^n, \sigma: E\rightarrow \Real^{n\times n}$, and $\rho: E \rightarrow \Real^{n\times k}$, the processes $r, \mu, \sigma, \rho$ in \eqref{eq:SDE S} are corresponding functions evaluated at $X$. Instead of Assumption \ref{ass:bdd-coeff}, these model coefficients satisfy the following assumptions.

\begin{ass}\label{ass:gen-coeff}
 $r, \mu, \sigma, b, a$, and $\rho$ are all locally Lipschitz in $E$; $A:= aa'$ and $\Sigma= \sigma \sigma'$ are positive definite in any compact subdomain of $E$; dynamics of \eqref{eq:SDE X} does not reach the boundary of $E$ in finite time; moreover $r + \tfrac{1}{2\gamma} \mu' \Sigma^{-1} \mu$ is bounded from below on $E$.
\end{ass}
The regularity of coefficients and the nonexplosion assumption ensure that the dynamics for $X$ is wellposed, i.e., \eqref{eq:SDE X} admits a unique $E$-valued strong solution $(X_t)_{0\leq t\leq T}$. The assumption on the lower bound of  $r + \tfrac{1}{2\gamma} \mu' \Sigma^{-1} \mu$ allows for unbounded market price of risk and is readily satisfied when $r$ is bounded from below.

To present analogue of Theorem \ref{thm:opt str bdd} and Corollary \ref{cor:utility graident}, let us first introduce two sets of abstract conditions, which will be verified in two classes of models below.

\begin{ass}\label{ass:over P}
$\,$
\begin{itemize}
 \item[(i)] $\tfrac{d\overline{\prob}}{d\prob}= \mathcal{E}\big(\int \tfrac{1-\gamma}{\gamma} \mu' \Sigma^{-1} \sigma \rho(X_s) dW_s\big)$ defines a probability measure $\overline{\prob}$ equivalent to $\prob$;
 \item[(ii)] $\expec^{\overline{\prob}}\big[\int_0^T h(X_s) ds\big]>-\infty$, where $h$ comes from \eqref{eq:func def}.
\end{itemize}
\end{ass}
When all model coefficients are bounded, as in Assumption \ref{ass:bdd-coeff}, Assumption \ref{ass:over P} is automatically satisfied. When the market price of risk is unbounded, the last part of Assumption \ref{ass:gen-coeff} and $\gamma>1$ combined imply that $h$ is bounded from above by $h_{\max} := \max_{x\in E} h(x)$, but is not bounded from below. Nevertheless Assumption \ref{ass:over P} allows us to transform \eqref{eq:BSDE Y} under $\overline{\prob}$ and present the following result from \cite[Proposition 2.9]{Xing}.

\begin{lem}\label{lem:BSDE Y}
 Let Assumptions \ref{ass:gen-coeff} and \ref{ass:over P} hold. For $\gamma, \psi>1$, \eqref{eq:BSDE Y} admits a solution $(Y, Z)$ such that, for any $t\in [0,T]$,
  \begin{multline}\label{eq: Y bdd}
         \expec^{\overline{\prob}}_t \Big[\int_t^T h(X_s) \, ds\Big] - \delta \theta(T-t) +\theta \frac{\delta^\psi}{\psi} e^{(\delta \psi-\frac{\psi}{\theta} h_{\max}) T} (T-t) \leq Y_t \leq\\
          \leq -\delta \theta (T-t)+\log\expec^{\overline{\prob}}_t\Big[ \exp{\Big(\int_t^T h(X_s)\,ds\Big)}\Big],
        \end{multline}
 and $\expec^{\overline{\prob}}[\int_0^T |Z_s|^2 ds] <\infty$.
 In particular, since $h\leq h_{max}$, $Y$ is bounded from above.
\end{lem}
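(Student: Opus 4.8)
The plan is to establish the BSDE \eqref{eq:BSDE Y} under the equivalent measure $\overline{\prob}$ by exploiting the structure of the generator $H$ in \eqref{eq:H}. First I would perform a Girsanov change of measure: under $\overline{\prob}$ defined in Assumption \ref{ass:over P}(i), the process $\overline{W}_t := W_t - \int_0^t \tfrac{1-\gamma}{\gamma} \rho'\sigma'\Sigma^{-1}\mu(X_s)\,ds$ is a Brownian motion, and \eqref{eq:BSDE Y} becomes $Y_t = \int_t^T \overline{H}(Y_s, Z_s)\,ds - \int_t^T Z_s\,d\overline{W}_s$ where $\overline{H}(y,z) = \tfrac12 z M_t z' + \theta\tfrac{\delta^\psi}{\psi} e^{-\psi y/\theta} + h_t - \delta\theta$ — this is precisely the reduction carried out in \cite{Xing}. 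Since $\gamma,\psi>1$, one has $\theta>0$ and the matrix $M = 1_{k\times k} + \tfrac{1-\gamma}{\gamma}\rho'\sigma'\Sigma^{-1}\sigma\rho$ is positive definite with $\tfrac1\gamma 1_{k\times k} \leq M \leq 1_{k\times k}$; the exponential term $e^{-\psi y/\theta}$ is convex and decreasing in $y$ and is bounded above when $y$ is bounded below.

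The existence part is then a direct citation of \cite[Proposition 2.9]{Xing}, as the statement already indicates; the work is in verifying its hypotheses under Assumptions \ref{ass:gen-coeff} and \ref{ass:over P} and in recording the explicit two-sided bound \eqref{eq: Y bdd}. For the upper bound, I would compare $Y$ with the supersolution obtained by dropping the nonnegative quadratic term $\tfrac12 z M z'$ and the term $\theta\tfrac{\delta^\psi}{\psi}e^{-\psi y/\theta}\geq 0$: the linear backward equation $\overline{Y}_t = \int_t^T (h(X_s) - \delta\theta)\,ds - \int_t^T Z_s\,d\overline{W}_s$ has solution $\overline{Y}_t = -\delta\theta(T-t) + \expec^{\overline{\prob}}_t[\int_t^T h(X_s)\,ds]$, and a comparison argument (using $e^{-\psi y/\theta}\geq 0$ and $zMz'\geq0$) gives $Y_t \leq \overline{Y}_t$ — but to get the cleaner form with $\log\expec^{\overline{\prob}}_t[\exp(\int_t^T h\,ds)]$ in \eqref{eq: Y bdd} one instead applies the exponential transform $e^{Y}$ together with Jensen's inequality, absorbing the quadratic term via It\^o's formula on $e^{Y_t}$. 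For the lower bound, since $h \leq h_{\max}$, the term $\theta\tfrac{\delta^\psi}{\psi}e^{-\psi Y_s/\theta}$ is bounded below once one has the upper bound on $Y$; plugging the upper bound $Y_s \leq -\delta\theta(T-s) + \log\expec^{\overline{\prob}}_s[\cdots] \leq (h_{\max}-\delta\theta)(T-s) \vee 0 \leq h_{\max}T$ into $e^{-\psi Y_s/\theta}$ yields $e^{-\psi Y_s/\theta} \geq e^{-\psi h_{\max} T/\theta}$ after incorporating the $-\delta\theta$ shift (giving the $e^{(\delta\psi - \psi h_{\max}/\theta)T}$ factor), and then dropping the quadratic term and taking conditional expectation gives the stated lower bound. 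The $\overline{\prob}$-square-integrability of $Z$ follows from the a priori estimate obtained by applying It\^o to $|Y|^2$ (or a suitable bounded transform) between the two-sided bounds.

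The main obstacle is handling the quadratic growth in $z$ without the comfort of bounded coefficients: $h$ is unbounded below, so $Y$ is not bounded and $Z$ need not be $\mathrm{BMO}$, which is exactly why one cannot simply invoke the Kobylanski/Tevzadze theory as in Lemma \ref{lem:BSDE Y bdd}. The resolution, following \cite{Xing}, is that after the exponential change of variable $P := e^{Y}$ the quadratic term is partially linearized — $dP$ picks up a term of the form $P\cdot(\tfrac12 Z(M - 1_{k\times k})Z')\,dt$, and since $M - 1_{k\times k} = \tfrac{1-\gamma}{\gamma}\rho'\sigma'\Sigma^{-1}\sigma\rho \leq 0$ for $\gamma>1$, this extra term has a favorable sign, so $P$ is a submartingale-type object dominated by $\expec^{\overline{\prob}}[\exp(\int h\,ds)]$; combined with the upper bound $h\leq h_{\max}$ this controls everything. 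A secondary technical point is the localization needed because $X$ lives on an open domain $E$ and coefficients are only locally Lipschitz: one works on stopping times exiting compact subdomains, uses the nonexplosion hypothesis in Assumption \ref{ass:gen-coeff} to pass to the limit, and uses Assumption \ref{ass:over P}(ii) to ensure the limiting conditional expectations are finite. Since all of this is precisely the content of \cite[Proposition 2.9]{Xing}, I would present Lemma \ref{lem:BSDE Y} essentially as a restatement, with the bounds \eqref{eq: Y bdd} derived by the comparison and exponential-transform arguments sketched above.
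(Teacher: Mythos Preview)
Your overall plan matches the paper's treatment: the lemma is indeed lifted directly from \cite[Proposition 2.9]{Xing}, and your sketch of the Girsanov reduction to $\overline{H}(y,z)=\tfrac12 zMz'+\theta\tfrac{\delta^\psi}{\psi}e^{-\psi y/\theta}+h_t-\delta\theta$ together with comparison arguments for the two-sided bound is the right architecture. However, there is a sign error that propagates through most of your reasoning and, if followed literally, would make the comparison steps go the wrong way.

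You write ``Since $\gamma,\psi>1$, one has $\theta>0$.'' In fact $\theta=\tfrac{1-\gamma}{1-1/\psi}<0$ in this regime (cf.\ the paragraph after \eqref{eq: fF} in the paper, and the proof of Lemma \ref{lem:BSDE Y bdd}). Consequently the exponential term $\theta\tfrac{\delta^\psi}{\psi}e^{-\psi y/\theta}$ is \emph{nonpositive}, not nonnegative; and since $-\psi/\theta>0$, the map $y\mapsto e^{-\psi y/\theta}$ is \emph{increasing}, not decreasing, so it is bounded above when $y$ is bounded \emph{above}. For the upper bound, the correct argument is: drop the nonpositive term $\theta\tfrac{\delta^\psi}{\psi}e^{-\psi y/\theta}\leq 0$ and use $zMz'\leq|z|^2$ (from $M\leq 1_{k\times k}$) to get the larger generator $\tfrac12|z|^2+h_t-\delta\theta$, whose solution via the exponential transform is $-\delta\theta(T-t)+\log\expec^{\overline{\prob}}_t[\exp(\int_t^T h\,ds)]$. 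For the lower bound, the upper bound on $Y$ gives an \emph{upper} bound on $e^{-\psi Y_s/\theta}$ (not the lower bound you state), which after multiplication by $\theta<0$ yields the lower bound $\theta\tfrac{\delta^\psi}{\psi}e^{(\delta\psi-\psi h_{\max}/\theta)T}$ on the exponential term; then dropping the nonnegative quadratic and taking conditional expectation gives the left side of \eqref{eq: Y bdd}. Once you flip the sign of $\theta$ throughout, your sketch becomes correct and coincides with the argument in \cite{Xing}.
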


Having constructed $(Y,Z)$, $(\pi^*, c^*)$ and $D^*$ in \eqref{eq:opt str} are well defined. To verify their optimality, let us introduce an operator $\fF$. For $\phi\in C^2(E)$,
\begin{equation}\label{eq: fF}
 \fF[\phi]:= \tfrac12 \sum_{i,j=1}^k A_{ij} \partial^2_{x_i x_j} \phi + \Big(b+ \tfrac{1-\gamma}{\gamma} a \rho' \sigma' \Sigma^{-1} \mu\Big)' \nabla \phi + \tfrac12 \nabla \phi' a M a' \nabla \phi + h,
\end{equation}
where the dependence on $x$ is suppressed on both sides. To understand this operator, note that the solution $(Y,Z)$ to \eqref{eq:BSDE Y} is expected to be Markovian, i.e., there exists a function $u: [0,T]\times E \rightarrow \Real$ such that $Y= u(\cdot, X)$. Then the BSDE \eqref{eq:BSDE Y} corresponds the following PDE:
\[
 \partial_t u + \fF[u] + \theta \tfrac{\delta^\psi}{\psi} e^{-\tfrac{\psi}{\theta}u} -\delta \theta=0, \quad u(T, x) = 0.
\]
Since $\theta<0$ when $\gamma, \psi>1$, moreover $Y$, hence $u$, is bounded from above, therefore the last two terms in the previous PDE are bounded, then $\fF$ is the unbounded part of the spatial operator.

\begin{ass}\label{ass: phi}
 There exists $\phi\in C^2(E)$ such that
 \begin{enumerate}
  \item[(i)] $\lim_{n\rightarrow \infty} \inf_{x\in E\setminus E_n} \phi(x) = \infty$, where $(E_n)_n$ is a sequence of open domains in $E$ satisfying $\cup_{n} E_n = E$, $\overline{E}_n$ compact, and $\overline{E}_n \subset E_{n+1}$, for each $n$;
  \item[(ii)] $\fF[\phi]$ is bounded from above on $E$.
 \end{enumerate}
\end{ass}

The function $\phi$ in the previous assumption is called a \emph{Lyapunov} function. Its existence facilities to prove that certain exponential local martingale is in fact a martingale, leading to the following result.

\begin{thm}\label{thm:opt str unbdd}
 Suppose that $\gamma, \psi>1$, and that Assumptions \ref{ass:gen-coeff}, \ref{ass:over P}, \ref{ass: phi} hold. Then the statements of Theorem \ref{thm:opt str bdd} and Corollary \ref{cor:utility graident} hold.
\end{thm}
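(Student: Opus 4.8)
The plan is to reduce Theorem~\ref{thm:opt str unbdd} to Theorem~\ref{thm:opt str bdd} by a localization argument, so that the only genuinely new work is controlling the unbounded market price of risk via the Lyapunov function $\phi$. Throughout, $(Y,Z)$ is the solution to \eqref{eq:BSDE Y} provided by Lemma~\ref{lem:BSDE Y}, and $(\pi^*,c^*)$, $D^*$, $y^*$ are the associated candidate strategies from \eqref{eq:opt str}. Since $Y$ is bounded from above (Lemma~\ref{lem:BSDE Y}), the consumption ratio $c^*_t/\cW^{(\pi^*,c^*)}_t = \delta^\psi e^{-\frac{\psi}{\theta}Y_t}$ is bounded away from $0$ but possibly unbounded above; nevertheless $\int_0^T c^*_s/\cW_s\,ds$ is controlled because $-\frac{\psi}{\theta}Y_s$ is bounded above along trajectories, which will be enough to make $\cW^{(\pi^*,c^*)}$ a genuine (strictly positive) wealth process and $c^*\in\C_a$; here one uses Proposition~\ref{prop:EZ-existence}(ii) and Proposition~\ref{prop:sdd-existence}(ii) to verify admissibility of $c^*$ and $D^*$ respectively, checking the integrability conditions $\expec[\int_0^T (D^*_t)^{1-\psi}dt + (D^*_T)^{(\gamma-1)/\gamma}]<\infty$ from the lower bound on $r+\frac{1}{2\gamma}\mu'\Sigma^{-1}\mu$ and $\gamma,\psi>1$.

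The core of the argument is the chain of (in)equalities in \eqref{eq:no gap}. For the \emph{weak duality} direction, $\sup_{(\pi,c)\in\cA}U^c_0 \le \inf_{y>0}(\inf_{D\in\cD_a}V^{yD}_0 + wy)$, I would invoke Proposition~\ref{prop:duality-bd} (equivalently Theorem~\ref{thm:EZ-dual}), which holds once $\nu^c=-f_u(c,U^c)\in\cV$; since $\cV=\{\nu:\nu\ge\delta\theta\}$ and Lemma~\ref{lem:EZ-fun}(ii) gives $-f_u(c,u) = \delta(1-\theta)c^{1-1/\psi}((1-\gamma)u)^{-1/\theta}+\delta\theta \ge \delta\theta$ (using $1-\theta>0$ when $\gamma,\psi>1$), this is immediate. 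For the matching inequalities I would show (a) $U^{c^*}_0 \ge \frac{1}{1-\gamma}w^{1-\gamma}e^{Y_0}$ by verifying via It\^o's formula and the BSDE \eqref{eq:BSDE Y} that $t\mapsto \frac{1}{1-\gamma}(\cW^{(\pi^*,c^*)}_t)^{1-\gamma}e^{Y_t} + \int_0^t f(c^*_s,\cdot)ds$ is a true martingale, hence $\bU^{c^*}_t$ has exactly this form; and (b) $V^{y^*D^*}_0 = \frac{\gamma}{1-\gamma}(y^*D^*_0)^{(\gamma-1)/\gamma}e^{Y_0/\gamma} = \frac{\gamma}{1-\gamma}w^{1-\gamma}e^{Y_0}$ by the analogous computation using \eqref{eq:Ito dual}, the representation \eqref{eq:H-rep} of $H$, and the fact that $(\xi^*,\eta^*)$ attains the constrained infimum. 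Together with $y^* = w^{-\gamma}e^{Y_0}$ one gets $V^{y^*D^*}_0 + wy^* = \frac{1}{1-\gamma}w^{1-\gamma}e^{Y_0} \le U^{c^*}_0$, closing the loop against weak duality; the signs require care in the two regimes $0<\gamma<1$ versus $\gamma>1$, but here only $\gamma>1$ is relevant.

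The main obstacle, and the only place Assumption~\ref{ass: phi} enters, is upgrading the two \emph{local} martingales in (a) and (b) to true martingales, because the relevant stochastic exponentials involve the unbounded integrand $\mu'\Sigma^{-1}(\cdot)(X)$ and hence Novikov/Kazamaki fail. The plan is the standard Lyapunov localization: for the sequence $(E_n)$ in Assumption~\ref{ass: phi}(i), let $\tau_n := \inf\{t: X_t\notin E_n\}\wedge T$; on $[0,\tau_n]$ all coefficients are bounded so the stopped processes are martingales by Theorem~\ref{thm:opt str bdd}'s internal estimates (or directly). Applying It\^o to $e^{\phi(X_t)}$ (or $\phi(X_t)$) and using $\fF[\phi]\le C$ from Assumption~\ref{ass: phi}(ii) shows $\expec[e^{\phi(X_{t\wedge\tau_n})}]$ grows at most exponentially in $t$, uniformly in $n$, while Assumption~\ref{ass: phi}(i) forces $\phi(X_{\tau_n})\ge \inf_{E\setminus E_n}\phi\to\infty$ on $\{\tau_n<T\}$; hence $\prob(\tau_n<T)\to 0$ and, more quantitatively, one gets uniform integrability of the stopped martingales. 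Then dominated/monotone convergence passes to the limit $n\to\infty$, delivering the true martingale property and hence the two exact identities $\bU^{c^*}_t = \frac{1}{1-\gamma}(\cW^{(\pi^*,c^*)}_t)^{1-\gamma}e^{Y_t}$ and the corresponding one for $\bV^{y^*D^*}$. The remaining bookkeeping --- that the candidate $D^*$ indeed lies in $\cD_a$, that $\cW^{(\pi^*,c^*)}D^* + \int_0^\cdot D^*_sc^*_s\,ds$ is a (super)martingale and in fact a martingale along the optimal strategy (yielding Corollary~\ref{cor:utility graident}), and the explicit form \eqref{eq:utility gradient} via Lemma~\ref{lem:EZ-fun}(ii) --- then follows exactly as in the bounded case.
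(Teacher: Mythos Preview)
Your overall architecture is right --- weak duality from Theorem~\ref{thm:EZ-dual}, then close the gap by showing $\bU^*_t=\tfrac{1}{1-\gamma}(\cW^{(\pi^*,c^*)}_t)^{1-\gamma}e^{Y_t}$ and $\bV^{y*}_t=\tfrac{\gamma}{1-\gamma}(yD^*_t)^{(\gamma-1)/\gamma}e^{Y_t/\gamma}$ are genuine (class~(D)) solutions of \eqref{eq:U*eqn}--\eqref{eq:V*eqn}. The paper does exactly this, but much more economically: Remark~\ref{rem:-Y/theta bdd} records that the whole proof of Theorem~\ref{thm:opt str bdd} only needs (i) $-Y/\theta$ bounded from above and (ii) the stochastic exponential $Q$ in \eqref{eq:Q} to be a martingale. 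Condition~(i) is immediate from Lemma~\ref{lem:BSDE Y} ($Y$ bounded above) and $\theta<0$; condition~(ii) is quoted from \cite[Lemma~B.2]{Xing}, whose proof is where the Lyapunov function of Assumption~\ref{ass: phi} is actually used. So no separate localization of Theorem~\ref{thm:opt str bdd} is performed.

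There are two genuine gaps in your sketch. First, you propose to check $c^*\in\C_a$ and $D^*\in\cD_a$ via the integrability hypotheses of Propositions~\ref{prop:EZ-existence}(ii) and~\ref{prop:sdd-existence}(ii). Remark~\ref{rem:relax cond} explains that this is precisely what one wants to \emph{avoid}: those moment conditions translate into extra parameter restrictions (cf.\ \cite[Propositions~3.2(ii), 3.4(ii)]{Xing}) and need not hold here. The paper instead constructs $\bU^*$ and $\bV^{y*}$ directly by \eqref{eq:WeY}--\eqref{eq:DeY}, shows they are class~(D) (from the boundedness of $e^{-\psi Y/\theta}$ and the martingale property of $Q$), and verifies \eqref{eq:U*eqn}--\eqref{eq:V*eqn} by localization; this \emph{is} the proof that $c^*\in\C_a$, $D^*\in\cD_a$. (Incidentally, you have the direction backwards: since $-\psi/\theta>0$ and $Y$ is bounded above, the ratio $c^*/\cW^*=\delta^\psi e^{-\psi Y/\theta}$ is bounded \emph{from above}, possibly tending to $0$.)

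Second, your Lyapunov step does not go through as written. The operator $\fF$ in \eqref{eq: fF} is \emph{not} the $\prob$-generator of $X$: relative to $\mathcal L=\tfrac12\mathrm{tr}(A D^2)+b'\nabla$, it carries an extra drift $\tfrac{1-\gamma}{\gamma}a\rho'\sigma'\Sigma^{-1}\mu$, a quadratic gradient term $\tfrac12\nabla\phi' aMa'\nabla\phi$, and the additive $h$. Hence applying It\^o to $\phi(X)$ or $e^{\phi(X)}$ under $\prob$ does not produce $\fF[\phi]$, and the bound $\fF[\phi]\le C$ gives you nothing directly about $\expec[e^{\phi(X_{t\wedge\tau_n})}]$. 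The role of $\fF$ is different: it is tailored so that $\fF[\phi]\le C$ is exactly a non-explosion/Lyapunov criterion for $X$ under the (local) measure with density $Q$, which is what forces $\expec[Q_T]=1$. That is the content of \cite[Lemma~B.2]{Xing}; your uniform-integrability route would need to be rewritten to work under the $Q$-dynamics, not under $\prob$.
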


\begin{rem}\label{rem:relax cond}
 The optimality of $(\pi^*, c^*)$ has been verified in \cite[Theorem 2.14]{Xing} under more restrictive conditions. First, \cite{Xing} restricts strategies to a \emph{permissible} class which is smaller than the current admissible class $\cA$. It is the duality inequality \eqref{eq:dual-ineq} that allows us to make this extension.
 Second \cite[Assumption 2.11]{Xing} is needed to ensure $c^*$ satisfying the integrability condition in Proposition \ref{prop:EZ-existence} (ii). This integrability condition translates to model parameter restrictions, see \cite[Proposition 3.2 ii)]{Xing} for Heston model and \cite[Proposition 3.4 ii)]{Xing} for Kim-Omberg model. Rather than forcing $c^*$ to satisfy this integrability condition, which is a sufficient condition for the existence of Epstein-Zin utility, we show that Epstein-Zin utility exists for $c^*$, hence $c^*$ belongs to $\C_a$, which abstractly envelops all Epstein-Zin utilities and, in particular, contains those ones satisfying the integrability condition. As a result the aforementioned model parameter restrictions for Heston model and Kim-Omberg model can be removed.
\end{rem}

\begin{exa}[Stochastic volatility]
 Consider a $1$-dimensional process $X$ following
 \[
  dX_t = b(\ell- X_t) dt + a\sqrt{X_t} dW_t,
 \]
 where $b, \ell\geq 0, a>0$, and $b\ell >\tfrac12 a^2$.
 Given $r_0, r_1\in \Real$, $\sigma: (0,\infty)\rightarrow \Real^{n\times n}$ and $\lambda: (0,\infty)\rightarrow \Real^n$, which are locally Lipschitz continuous on $(0,\infty)$ and $\Sigma(x):= \sigma \sigma(x)'>0$,
 let $r(X) = r_0 + r_1 X$ be the interest rate, $\sigma(X)$ be the volatility of risky assets,  $\mu(X)= \sigma(X) \lambda(X)$ be the excess return, and the dynamics of assets follow \eqref{eq:SDE S} with $\rho \in \Real^n$. This class of models encapsulate 1) Heston model studied in \cite{Kraft} and \cite{Liu} where $n=1$, $\lambda(x)=\lambda \sqrt{x}$ for a $\lambda \in \Real$ and $\sigma(x) = \sqrt{x}$, and 2) an inverse Heston model studied in \cite{Chacko-Viceira} where $n=1$, $\lambda(x) = \lambda \sqrt{x}$ for a $\lambda \in \Real$ and $\sigma(x) = \tfrac{1}{\sqrt{x}}$. Set $\Theta(x):= \sigma(x)' \Sigma(x)^{-1} \sigma(x)$.
 The following result specifies Assumptions \ref{ass:gen-coeff}, \ref{ass:over P}, and \ref{ass: phi} to explicit model parameter restriction.
 \begin{prop}\label{prop:heston}
  Assume that $\lambda(x)= \lambda \sqrt{x}$, for some $\lambda\in \Real^n$, and $r_1 + \frac{1}{2\gamma} \lambda' \Theta(x) \lambda\geq 0$. Then for $\gamma, \psi>1$ the statements of Theorem \ref{thm:opt str bdd} and Corollary \ref{cor:utility graident} hold when either $r_1>0$ or $\lambda' \Theta(x) \lambda > 0$.
 \end{prop}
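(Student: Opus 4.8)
The plan is to verify, for this stochastic volatility model with state space $E=(0,\infty)$, the three abstract hypotheses of Theorem~\ref{thm:opt str unbdd} — Assumptions~\ref{ass:gen-coeff}, \ref{ass:over P}, and \ref{ass: phi} — and then invoke that theorem verbatim. Two reductions are worth recording first: since $\sigma$ is square and invertible, $\Theta=\sigma'(\sigma\sigma')^{-1}\sigma\equiv I_n$, so $q:=r_1+\tfrac{1}{2\gamma}\lambda'\Theta\lambda$ and $M=1+\tfrac{1-\gamma}{\gamma}\rho'\Theta\rho=1-\tfrac{\gamma-1}{\gamma}\rho'\rho$ are in fact \emph{constants}, with $q\geq 0$ by hypothesis and $M\in[\tfrac1\gamma,1]$ because $\gamma>1$ and $\rho'\rho\leq 1$; and the felicity drift reduces to $h(x)=(1-\gamma)r_0-(\gamma-1)q\,x$, which is bounded above and linear in $x$ with nonpositive slope. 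Also set $\tilde b:=b-\tfrac{1-\gamma}{\gamma}a\,\rho'\Theta\lambda$.

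\emph{Assumption~\ref{ass:gen-coeff}.} Local Lipschitz continuity of $b(x)=b(\ell-x)$, $a(x)=a\sqrt x$, $r(x)=r_0+r_1x$, $\mu(x)=\sigma(x)\lambda\sqrt x$, and $\rho$ on $(0,\infty)$ is immediate; $A=a^2x$ and $\Sigma=\sigma\sigma'$ are positive definite on compact subdomains by continuity and pointwise positivity; the CIR condition $b\ell>\tfrac12 a^2$ is the strict Feller condition, so $X$ stays in $(0,\infty)$ and never reaches $0$, while the affine drift and sublinear diffusion coefficient preclude explosion to $+\infty$ on $[0,T]$; finally $r+\tfrac{1}{2\gamma}\mu'\Sigma^{-1}\mu=r_0+qX\geq r_0$ is bounded below since $q\geq 0$. \emph{Assumption~\ref{ass:over P}.} The Girsanov integrand $\tfrac{1-\gamma}{\gamma}\mu'\Sigma^{-1}\sigma\rho(X_s)$ is proportional to $\sqrt{X_s}$, so a Novikov check would require a finite exponential moment of $\int_0^T X_s\,ds$, which holds only below a critical threshold; instead I would argue directly that under the tentative measure $\overline\prob$ the process $X$ obeys $dX_t=(b\ell-\tilde b X_t)\,dt+a\sqrt{X_t}\,d\overline W_t$ — again a square-root diffusion, with the same near-origin drift $b\ell>0$ (hence positive) and linear growth (hence non-explosive on $[0,T]$). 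The standard criterion (non-explosion of the SDE under the tentative measure forces the density to be a true martingale) then makes $\overline\prob$ a well-defined probability equivalent to $\prob$, giving (i). For (ii), $X$ has finite first moment under $\overline\prob$ uniformly on $[0,T]$, so $\expec^{\overline{\prob}}[\int_0^T h(X_s)\,ds]=(1-\gamma)r_0T-(\gamma-1)q\,\expec^{\overline{\prob}}[\int_0^T X_s\,ds]>-\infty$.

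\emph{Assumption~\ref{ass: phi}.} Taking $E_n=(\tfrac1n,n)$, I would use the Lyapunov function $\phi(x)=-\varepsilon\log x+\log(1+x)$ for a small $\varepsilon>0$: it tends to $+\infty$ both as $x\to 0$ and as $x\to\infty$, so (i) holds. Substituting $\phi$ into $\fF$ of \eqref{eq: fF}: near $x=0$ the only unbounded contribution is the $1/x$-term, whose coefficient works out to $\varepsilon\big(\tfrac{a^2}{2}(1+M\varepsilon)-b\ell\big)$, strictly negative for $\varepsilon$ small because $M\leq 1$ and $b\ell>\tfrac12 a^2$ (this is where the Feller condition is used); near $x=\infty$, $\phi'\to 0$ kills the second-order and gradient-squared terms, the drift term converges to $\tilde b(\varepsilon-1)$, and $h$ is bounded above (and $h\to-\infty$ when $q>0$), so $\fF[\phi]$ is bounded above near $\infty$; by continuity $\fF[\phi]$ is then bounded above on all of $E$. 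Having verified all three assumptions under the stated parameter restrictions, Theorem~\ref{thm:opt str unbdd} applies and delivers the statements of Theorem~\ref{thm:opt str bdd} and Corollary~\ref{cor:utility graident}.

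The main obstacle is Assumption~\ref{ass:over P}(i): because the market price of risk grows like $\sqrt X$ and $\int_0^T X_s\,ds$ carries only a threshold exponential moment, Novikov's condition is unavailable, and one must instead exploit that the change of measure keeps $X$ within the square-root class with a perturbed mean-reversion speed — still non-explosive and still non-absorbing at $0$ — to obtain the true-martingale property of the density. A lesser subtlety is selecting a Lyapunov function that works uniformly over the whole admissible parameter range, in particular in the degenerate regime $q=0$ (possible when $\lambda\neq 0$ and $r_1=-\tfrac{1}{2\gamma}|\lambda|^2$), where $h$ no longer contributes a negative linear term at infinity; choosing $\phi$ with gradient vanishing at $\infty$, as $\log(1+x)$ does, circumvents this.
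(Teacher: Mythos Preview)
Your proof is correct and follows the same strategy as the paper's own proof: verify Assumptions~\ref{ass:gen-coeff}, \ref{ass:over P}, and \ref{ass: phi}, then invoke Theorem~\ref{thm:opt str unbdd}. The verification of Assumptions~\ref{ass:gen-coeff} and \ref{ass:over P} is essentially identical; for \ref{ass:over P}(i) the paper cites \cite[Remark 2.6]{Cheridito-Filipovic-Yor}, which is precisely the ``non-explosion under the candidate measure implies the density is a true martingale'' criterion you describe.

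The one genuine difference is the Lyapunov function. The paper takes $\phi(x)=-\underline{c}\log x+\overline{c}\,x$ with small $\underline{c},\overline{c}>0$ and shows that both the $1/x$- and the $x$-coefficient in $\fF[\phi]$ are negative. Your choice $\phi(x)=-\varepsilon\log x+\log(1+x)$ treats the boundary $x\downarrow 0$ the same way (via the Feller condition $b\ell>\tfrac12 a^2$), but at $x\uparrow\infty$ your $\phi'\to 0$ kills the second-order, first-order, and gradient-squared contributions outright, leaving only $h$, which is bounded above. This buys you something: in the borderline case $q=r_1+\tfrac{1}{2\gamma}\lambda'\Theta\lambda=0$ (admissible under the hypotheses when $\lambda'\Theta\lambda>0$ and $r_1<0$), the paper's $x$-coefficient reduces to $-\tilde b\,\overline{c}+\tfrac12 a^2\tilde M\,\overline{c}^2$, which is negative for small $\overline{c}$ only if $\tilde b>0$ --- not guaranteed by the stated assumptions. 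Your construction is insensitive to the sign of $\tilde b$ and to whether $q$ is strictly positive, so it covers this edge case cleanly.
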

\end{exa}

\begin{exa}[Linear diffusion]
 Consider a $1$-dimensional Ornstein-Uhlenbeck process $X$ following
 \[
  dX_t = -bX_t dt + a dW_t,
 \]
where $a,b>0$. Given $\lambda_0, \lambda_1 \in \Real^n$ and $\sigma\in \Real^{n\times n}$ with $\Sigma := \sigma \sigma'>0$, let $r(X) = r_0 + r_1 X$ be the interest rate, $\sigma(X) = \sigma$ be the volatility of risky assets, and $\mu(X) = \sigma(\lambda_0+ \lambda_1 X)$ be the excess return, and the dynamics of assets follow \eqref{eq:SDE S} with $\rho \in \Real^n$. This model has been studied by \cite{Kim-Omberg} and \cite{Wachter-port} for time separable utility, and by \cite{Campbell-Viceira} for recursive utility in discrete time. Set $\Theta:= \sigma' \Sigma^{-1} \sigma$.
 The following result from \cite[Proposition 3.4]{Xing} specifies Assumptions \ref{ass:gen-coeff}, \ref{ass:over P}, and \ref{ass: phi} to explicit model parameter restriction.
 \begin{prop}\label{prop:Kim-Omberg}
 Assume that either of the following parameter restrictions hold:
 \begin{enumerate}
 \item[(i)] $r_1=0$ and $-b+ \tfrac{1-\gamma}{\gamma} a \lambda'_1 \Theta \rho <0$;
 \item[(ii)] $\lambda'_1 \Theta \lambda_1>0$.
 \end{enumerate}
  Then for $\gamma, \psi>1$ the statements of Theorem \ref{thm:opt str bdd} and Corollary \ref{cor:utility graident} hold.
 \end{prop}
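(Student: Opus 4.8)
The plan is to verify, under either parameter restriction, the three abstract hypotheses of Theorem~\ref{thm:opt str unbdd} --- Assumptions~\ref{ass:gen-coeff}, \ref{ass:over P} and \ref{ass: phi} --- for this one-dimensional linear-diffusion model, so that the conclusion follows at once from that theorem; this mirrors \cite[Proposition~3.4]{Xing}. Since $\sigma$ is a constant invertible matrix, $\Theta=\sigma'\Sigma^{-1}\sigma=I_n$, hence $\mu'\Sigma^{-1}\mu=(\lambda_0+\lambda_1 x)'\Theta(\lambda_0+\lambda_1 x)$ is a polynomial of degree at most two in the scalar $x$ with leading coefficient $\lambda_1'\Theta\lambda_1\ge 0$, which vanishes iff $\lambda_1=0$ (as $\Theta=I_n$ is positive definite); moreover $\mu'\Sigma^{-1}\sigma\rho=\rho'\Theta(\lambda_0+\lambda_1 x)$ and the drift of $X$ are affine in $x$, and $h$ from~\eqref{eq:func def} is again quadratic in $x$. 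Throughout $\gamma,\psi>1$. For Assumption~\ref{ass:gen-coeff}: the coefficients $r,\mu,\sigma,b,a,\rho$ are affine in $x$, hence locally Lipschitz; $A=a^2>0$ and $\Sigma=\sigma\sigma'>0$ are constant; the Ornstein--Uhlenbeck equation~\eqref{eq:SDE X} has a non-exploding global strong solution; and $r+\tfrac{1}{2\gamma}\mu'\Sigma^{-1}\mu$, a quadratic in $x$ with leading coefficient $\tfrac{1}{2\gamma}\lambda_1'\Theta\lambda_1$, is bounded below --- in case (ii) because that leading coefficient is strictly positive, and in case (i) because $r_1=0$, so it is either a genuine upward parabola (if $\lambda_1\ne 0$) or a constant (if $\lambda_1=0$).

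For Assumption~\ref{ass:over P}, the Girsanov kernel $\tfrac{1-\gamma}{\gamma}\mu'\Sigma^{-1}\sigma\rho=\tfrac{1-\gamma}{\gamma}\rho'\Theta(\lambda_0+\lambda_1 X)$ is an affine functional of the Gaussian process $X$. The Dol\'eans--Dade exponential of the stochastic integral of an affine functional of a linear diffusion is a true martingale on the finite horizon $[0,T]$: one splits $[0,T]$ into subintervals short enough that $\exp\bigl(\tfrac12\int_s^t|\kappa(X_u)|^2\,du\bigr)$ is integrable --- possible since $X$ is Gaussian and $\int_s^t|\kappa(X_u)|^2\,du$ has finite exponential moments for $t-s$ small --- and concatenates the resulting Novikov estimates. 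Thus $\overline{\prob}$ in~\ref{ass:over P}(i) is a probability equivalent to $\prob$, under which $X$ again solves a non-exploding linear SDE and therefore stays Gaussian with mean and variance bounded on $[0,T]$. Since $h(X_s)$ is a quadratic polynomial in the Gaussian variable $X_s$, $\expec^{\overline{\prob}}\bigl[\int_0^T|h(X_s)|\,ds\bigr]<\infty$, which yields~\ref{ass:over P}(ii).

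For Assumption~\ref{ass: phi}, take $E_n=(-n,n)$ and the Lyapunov candidate $\phi(x)=\varepsilon x^2$ with $\varepsilon>0$ to be chosen; part (i) is immediate since $\inf_{|x|\ge n}\phi(x)=\varepsilon n^2\to\infty$. Evaluating $\fF$ from~\eqref{eq: fF} with $k=1$, $\fF[\phi]$ is a polynomial of degree at most two in $x$ whose coefficient of $x^2$ equals
\[
 -2b\varepsilon+\tfrac{2(1-\gamma)}{\gamma}\,a\,(\rho'\Theta\lambda_1)\,\varepsilon+2a^2 M\varepsilon^2+\tfrac{1-\gamma}{2\gamma}\,\lambda_1'\Theta\lambda_1 ,
\]
where $M:=1+\tfrac{1-\gamma}{\gamma}\,\rho'\Theta\rho$ is strictly positive because $\gamma>1$ forces $\tfrac{1-\gamma}{\gamma}\in(-1,0)$ while $\rho'\Theta\rho=|\rho|^2\le 1$. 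In case (ii) the last, $\varepsilon$-independent term $\tfrac{1-\gamma}{2\gamma}\lambda_1'\Theta\lambda_1$ is strictly negative, so a sufficiently small $\varepsilon$ makes the whole coefficient negative; in case (i), the same applies when $\lambda_1\ne 0$ (whence $\lambda_1'\Theta\lambda_1>0$), while if $\lambda_1=0$ the coefficient reduces to $2\varepsilon(a^2M\varepsilon-b)$, which is negative for $\varepsilon<b/(a^2M)$. For such $\varepsilon$, $\fF[\phi]$ is a quadratic in $x$ with negative leading coefficient, hence bounded above on $E=\Real$, giving part (ii). With all three assumptions in hand, Theorem~\ref{thm:opt str unbdd} delivers the statements of Theorem~\ref{thm:opt str bdd} and Corollary~\ref{cor:utility graident}.

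The only genuinely delicate point is the true-martingale property of the Girsanov density in~\ref{ass:over P}(i): its kernel is unbounded (affine in the Gaussian $X$), so the global Novikov condition fails and one must exploit the linearity of the dynamics --- via the short-interval localization above, or by computing the density's moment generating function explicitly --- to conclude. Everything else is routine, the only care being to keep track of the signs of the various quadratic coefficients across the two parameter regimes, which is exactly where the hypotheses of cases (i) and (ii) are used.
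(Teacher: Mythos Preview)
Your proof is correct and follows exactly the template the paper itself uses: the paper does not prove this proposition directly but defers to \cite[Proposition~3.4]{Xing}, and your argument---verifying Assumptions~\ref{ass:gen-coeff}, \ref{ass:over P}, \ref{ass: phi} and then invoking Theorem~\ref{thm:opt str unbdd}---parallels precisely the paper's own proof of the companion Heston result (Proposition~\ref{prop:heston}). Your choice of Lyapunov function $\phi(x)=\varepsilon x^2$ is the natural analogue of the $-\underline{c}\log x+\overline{c}x$ used there, and your computation of the $x^2$-coefficient of $\fF[\phi]$ is correct.

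Two minor remarks. First, for Assumption~\ref{ass:over P}(i) the paper's Heston proof appeals to the well-posedness of the martingale problem under $\overline{\prob}$ via \cite{Cheridito-Filipovic-Yor} rather than localized Novikov; for a linear SDE this route is arguably cleaner (the $\overline{\prob}$-dynamics of $X$ are again Ornstein--Uhlenbeck, hence globally defined), but your short-interval Novikov argument is equally valid since $\sup_{[0,T]}|X_s|$ has Gaussian tails by Borell--TIS. Second, you correctly note that with $\sigma$ invertible one has $\Theta=I_n$, so $\lambda_1'\Theta\lambda_1=0$ forces $\lambda_1=0$; this makes the drift condition in case~(i) collapse to $-b<0$ when $\lambda_1=0$, which is automatic---so in this specific setup your Lyapunov argument does not need the full strength of the case~(i) hypothesis, but that is a feature of the simplification $\Theta=I_n$ rather than a gap.
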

\end{exa}

\appendix
 \section{Proofs}
 \begin{proof}[Proof of Lemma \ref{lem:var-rep}]
  It suffices to check the statement in part (i) for $\nu$ with $U_0^{c, \nu}>-\infty$. When $U_0^{c, \nu}$ has finite value,
  \eqref{eq:sdu} and part (i) of Definition \ref{def: dualvar} imply that $U^c + \int_0^\cdot f(c_s, U^c_s) ds$ and $\kappa^\nu_{0, \cdot}U^{c, \nu} + \int_0^\cdot \kappa^\nu_{0, s} F(c_s, \nu_s)ds$ are both martingales. Then $U^{c, \nu} + \int_0^\cdot F(c_s, \nu_s) - \nu_s U^{c, \nu}_s ds$ is a local martingale by It\^{o}'s formula. Therefore there exists a local martingale $L$ such that
  \[
   d(U^c_t- U^{c, \nu}_t) - \nu_t (U^c_t - U^{c, \nu}_t) = - dA_t + dL_t,
  \]
  where $A_t = \int_0^t f(c_s, U^c_s) - (F(c_s, \nu_s) - \nu_s U^c_s) ds$ is an increasing process due to \eqref{eq:dual-fF}. As a result, $\kappa^\nu_{0, \cdot}(U^c - U^{c, \nu})$ is a local super-martingale. On the other hand, Definition \ref{def: dualvar} part (ii) and (iii), together with $U^c \in \cU$, imply that $\kappa^\nu_{0, \cdot}(U^c - U^{c, \nu})$ is of class (D), hence it is a supermartingale. Therefore
  \begin{equation}\label{eq:Uc-Ucnu}
   U^c_t - U^{c, \nu}_t \geq \expec_t \big[\kappa^\nu_{t, T} (U^c_T - U^{c, \nu}_T)\big]=0.
  \end{equation}
  Taking supremum in $\nu$, we confirm the claim in part (i). For the statement in (ii), for $\nu^c \in \cV$, we have $A\equiv 0$, hence $\kappa^{\nu^c}_{0, \cdot}(U^c - U^{c, \nu^c})$ is a local martingale, and a martingale, due to its class (D) property. As a result, the inequality in \eqref{eq:Uc-Ucnu} is an identity for $\nu= \nu^c$.
 \end{proof}

 \begin{proof}[Proof of Proposition \ref{prop:sdd-existence}]
  Let the filtration be generated by some Brownian motion $B$. Solving \eqref{eq:sdd-eq} is equivalent to solve the following BSDE
  \begin{equation}\label{eq: dual BSDE EZ}
 V^{yD}_t = \tfrac{\gamma}{1-\gamma} (yD_T)^{\frac{\gamma-1}{\gamma}} + \int_t^T \tfrac{ \delta^\psi}{\psi-1} (y D_s)^{1-\psi} \pare{\tfrac{1-\gamma}{\gamma} V^{yD}_s}^{1-\tfrac{\gamma \psi}{\theta}} - \tfrac{\delta \theta}{\gamma} V^{yD}_s \, ds - \int_t^T Z^{yD}_s dB_s.
\end{equation}
Set $Y_t = \tfrac{1-\gamma}{\gamma} e^{-\frac{\delta \theta}{\gamma} t} V^{yD}_t$ and $Z_t = \tfrac{1-\gamma}{\gamma} e^{-\frac{\delta \theta}{\gamma} t} Z^{yD}_t$. The previous BSDE translates to
\begin{equation}\label{eq: dual BSDE trans}
 Y_t = e^{-\frac{\delta \theta}{\gamma} T} (y D_T)^{\frac{\gamma-1}{\gamma}} + \int_t^T \delta^\psi \tfrac{\theta}{\gamma \psi} e^{-\delta \psi s} (y D_s)^{1-\psi} Y_s^{1-\tfrac{\gamma \psi}{\theta}} ds - \int_t^T Z_s dB_s.
\end{equation}

(i) Define $\cY= Y^{\tfrac{\gamma \psi}{\theta}}$ and $\cZ = \tfrac{\gamma \psi}{\theta} Y^{\tfrac{\gamma \psi}{\theta}-1} Z$. Then $(\cY, \cZ)$ satisfies
\[
 \cY_t = e^{-\delta \psi T} (y D_T)^{1-\psi} + \int_t^T \delta^\psi e^{-\delta \psi s} (y D_s)^{1-\psi} + \tfrac12 \pare{\tfrac{\theta}{\gamma \psi}-1} \tfrac{\cZ^2_s}{\cY_s} \, ds -\int_t^T \cZ_s dB_s.
\]
This is exactly the type of BSDE studied in \cite[Equation (A7)]{Schroder-Skiadas-JET}. It then follows from \cite[Theorem A2]{Schroder-Skiadas-JET} that the previous BSDE admits a unique solution $(\cY, \cZ)$ with $\expec[\esssup_t |\cY_t|^\ell]<\infty$ for any $\ell>0$. (To treat the terminal condition $e^{-\delta \psi T} (yD_T)^{1-\psi}$, we consider an approximated terminal condition $\epsilon + e^{-\delta \psi T} (yD_T)^{1-\psi}$ with $\epsilon>0$ and its associated solution $(\cY^\epsilon, \cZ^\epsilon)$. Proceed as the proof of \cite[Theorem A2]{Schroder-Skiadas-JET}, $\cY$ is constructed as $\lim_{\epsilon \downarrow 0} \cY^\epsilon$.) Coming back to $(Y,Z)$, the statement in (i) is confirmed.

(ii) Our assumption on $D$ implies the integrability of $e^{-\tfrac{\delta \theta}{\gamma} T}(y D_T)^{\tfrac{\gamma-1}{\gamma}}$ and $\int_0^T e^{-\delta \psi s} (y D_s)^{1-\psi} ds$. Moreover, since $\gamma, \psi>1$, we have $\theta<0$, therefore the generator of \eqref{eq: dual BSDE trans} is decreasing in the $Y$-component. This is exactly the type of BSDEs studied in \cite[Proposition 2.2]{Xing}. Then the statement in (ii) is confirmed following the proof of \cite[Proposition 2.2]{Xing}.
 \end{proof}

\begin{proof}[Proof of Theorem \ref{thm:EZ-dual}]

 (ii) When $\gamma,\psi>1$, then $\theta<0$. Therefore $\tfrac{\delta\theta-\nu}{\theta-1}>0$ for any $\nu \in \cV$. Lemma \ref{lem:EZ-fun} part (iii) and (iv) yield $F, G, V_T\leq 0$, implying that $U^{c, \nu}$ and $V^{yD,\nu}$ are both nonpositive, hence $U^{c, \nu}, V^{yD, \nu}<\infty$, for any $y>0, c\in \C_a, D\in \cD_a$, and $\nu\in \cV$. When $U_0^{c,\nu}>-\infty$, $F\leq 0$ yields
 \[
  \expec_t \Big[\kappa^\nu_{0,T} U_T(c_T) + \int_0^T \kappa^\nu_{0,s} F(c_s, \nu_s) ds\Big]\leq \kappa^\nu_{0, t} U^{c, \nu}_t \leq \expec_t [\kappa^{\nu}_{0,T} U_T(c_T)].
 \]
 implying the class (D) property of $\kappa^\nu_{0,\cdot} U^{c,\nu}$. The boundedness of $\kappa^\nu$ also implies $\kappa^\nu_{0,\cdot} U$ is of class (D) for any $U\in \cU$. Similar properties can be verified for $V^{yD, \nu}$ as well. Therefore, our choice of $\cV$ satisfies Definition \ref{def: dualvar}. On the other hand, since $(1-\gamma)U^c\geq 0$ and $\theta<0$, Lemma \ref{lem:EZ-fun} part (ii) yields $\nu^c = -f_u(c, U^c)\geq \delta \theta$, implying $\nu^c\in \cV$. Now Assumptions \ref{ass:convex}, \ref{ass:homothetic}, and \ref{ass:G-concave} are verified in Lemma \ref{lem:EZ-fun}, the statement then follows from Proposition \ref{prop:duality-bd}.

 (i) The proof in this case is more involved. When $0<\gamma<1$ and $\gamma \psi>1$, we have $0<\theta<1$.
 Therefore $\tfrac{\delta \theta - \nu}{\theta -1}>0$ for any $\nu \in \cV$. Lemma \ref{lem:EZ-fun} part (iii) and (iv) yield $F,G, V_T\geq 0$. Therefore more argument is needed to ensure the existence of $U^{c, \nu}$ and $V^{yD, \nu}$. To this end, for $c\in \C_a$, let $U^c$ be the associated stochastic differential utility, and define an increasing process $A^\nu= \int_0^\cdot f(c_s, U^c_s) - (F(c_s, \nu_s) -\nu_s U^c_s) ds$. Equation \eqref{eq:sdu} then implies that $U^c + \int_0^\cdot F(c_s, \nu_s) - \nu_s U^c_s ds + A$ is a martingale for any $\nu\in \cV$, hence It\^{o}'s formula implies that $\kappa^\nu_{0, \cdot} U^c + \int_0^\cdot \kappa^\nu_{0,s} F(c_s, \nu_s) ds$ is a local supermartingale. Taking a localization sequence $(\tau_n)_{n}$, we have
 \[
   U^c_t \geq \expec_t \bra{\kappa^\nu_{t,\tau_n\wedge T} U^c_{\tau_n\wedge T} + \int_t^{\tau_n\wedge T} \kappa^\nu_{t,s} F(c_s, \nu_s) ds} \quad \text{on } \{\tau_n \geq t\}.
 \]
 Sending $n\rightarrow \infty$ on the right-hand side, the class (D) property of $U^c$ and monotone convergence theorem implies
 \[
  \expec_t \bra{\int_t^T \kappa^\nu_{t,s} F(c_s, \nu_s) ds} <\infty, \quad \text{for any } t\in [0,T], \nu\in \cV.
 \]
 Combined with $\expec_t[\kappa^\nu_{t,T} U_T(c_T)]<\infty$, it follows $U^{c, \nu}$ in Definition \ref{def: dualvar} part (i) is well defined for any $c\in \C_a$ and $\nu\in \cV$. Similar argument applied to the dual side ensures that $V^{yD, \nu}$ is also well define for any $y>0, D\in \cD_a,$ and $\nu\in \cV$.
 The statement is then confirmed by following similar argument as in the previous case.
\end{proof}

\begin{proof}[Proof of Lemma \ref{lem:mart-prin}]
 The statement for the primal problem is proved in \cite{Xing}, see the argument leading to equation (2.14) therein. In particular, since all investment opportunities are driven by $W$, it suffices to consider the martingale part of $Y$ in \eqref{eq:BSDE Y} as a stochastic integral with respect to $W$. Let us outline the argument for the primal problem.  Parameterize $c$ by $c= \overline{c} \cW$. Calculation shows
 \begin{equation}\label{eq:Ito primal}
 \begin{split}
  d\cW_t^{1-\gamma} &= (1-\gamma)\cW^{1-\gamma}_t \big[r_t - \overline{c}_t + \pi'_t \mu_t - \tfrac{\gamma}{2} \pi'_t \Sigma_t \pi_t\big] dt + (1-\gamma) \cW^{1-\gamma}_t \pi'_t \sigma_t dW^\rho_t,\\
  de^{Y^p_t} &= e^{Y^p_t} \big[-H(Y^p_t, Z^p_t) + \tfrac12 |Z^p_t|^2\big] dt + e^{Y^p_t} Z^p_t dW_t.
 \end{split}
 \end{equation}
 Therefore the drift of $\tfrac{\cW^{1-\gamma}}{1-\gamma} e^{Y^p} +\int_0^\cdot f(c_s, \tfrac{\cW_s^{1-\gamma}}{1-\gamma} e^{Y^p_s})ds$ reads (after suppressing the subscript $t$)
 \[
 \begin{split}
  &\tfrac{\cW^{1-\gamma}}{1-\gamma} e^{Y^p} \Big\{(1-\gamma) r -\delta \theta + \tfrac12 |Z^p|^2 + (1-\gamma)\big[- \overline{c} + \delta e^{-\tfrac{1}{\theta} Y^p} \tfrac{1}{1-\tfrac{1}{\psi}} \overline{c}^{1-\tfrac{1}{\psi}}\big]\\
  & \hspace{2cm} + (1-\gamma) \big[-\tfrac{\gamma}{2} \pi' \Sigma \pi + \pi' (\mu + \sigma \rho (Z^p)')\big] -H(Y^p, Z^p)\Big\}.
 \end{split}
 \]
 The martingale principle then yields the previous drift to be non-positive, leading to the maximization problem in \eqref{eq:H-rep}, whose maximizer is obtained by calculation.

 The minimization problem in \eqref{eq:H-rep} is obtained after Lemma \ref{lem:mart-prin}. The dynamics of $D^*$ follows from plugging $(\xi^*, \eta^*)$ into \eqref{eq:SDE D}. It then remains to obtain the minimizer $(\xi^*, \eta^*)$. To this end, consider the unconstrained problem
 \[
  \tfrac{1}{2\gamma} (|\xi|^2 + |\eta|^2) - \tfrac{1}{\gamma} \xi z' + \lambda \sigma \rho \xi' + \lambda \sigma \rho^\bot \eta'.
 \]
 The first order condition yields
 \[
  \xi^* = z - \gamma \lambda \sigma \rho \quad \text{and} \quad \eta^* = -\gamma \lambda \sigma \rho^\bot.
 \]
 Plugging these optimizers into the constraint $\mu + \sigma \rho \xi' + \sigma \rho^\bot \eta'=0$ yields the Lagrangian multiplier $\lambda = \tfrac{1}{\gamma} (\mu' + z \rho' \sigma') \Sigma^{-1}$ and confirms $\xi^*$ and $\eta^*$ in \eqref{eq:optimizer}.
\end{proof}

\begin{proof}[Proof of Lemma \ref{lem:BSDE Y bdd}]
Since $\mu' \Sigma^{-1} \mu$ is bounded, $|\mu' \Sigma^{-1} \sigma \rho|^2 \leq \mu' \Sigma^{-1} \mu$ implies that $\tfrac{1-\gamma}{\gamma} \mu'\Sigma^{-1}\sigma \rho$ is bounded as well. Therefore, $\tfrac{d\overline{\prob}}{d\prob} = \mathcal{E}(\int \tfrac{1-\gamma}{\gamma} \mu'_s \Sigma^{-1}_s \sigma_s \rho_s dW_s)_T$\footnote{$\mathcal{E}(\int \alpha_s dW_s)_T:= \exp (-\tfrac12 \int_0^T |\alpha_s|^2 ds + \int_0^T \alpha_s dW_s)$ denotes the stochastic exponential for $\int_0^T \alpha_s dW_s$.} defines a probability measure $\overline{\prob}$ equivalent to $\prob$, hence \eqref{eq:BSDE Y} can be rewritten as
\begin{equation}\label{eq:BSDE Y bdd}
 Y_t = \int_t^T \mathcal{H}(Y_s, Z_s)ds -\int_t^T Z_s d\overline{W}_s,
\end{equation}
where $\overline{W} = W -\int_0^\cdot \tfrac{1-\gamma}{\gamma} \rho_s' \sigma_s' \Sigma^{-1}_s \mu_s ds$ is a $\overline{\prob}$-Brownian motion by the Girsanov theorem, and
\[
 \mathcal{H}(y,z) := \tfrac12 z M_t z' + \theta \tfrac{\delta^\psi}{\psi} e^{-\tfrac{\psi}{\theta} y} + h_t -\delta\theta.
\]
Here since eigenvalues of $\sigma' \Sigma^{-1} \sigma$ are either $0$ or $1$, we have $0\leq z \rho' \sigma' \Sigma^{-1} \sigma \rho z'\leq z \rho' \rho z' \leq |z|^2$. This inequality implies that
\begin{equation}\label{eq:M quad}
 0< |z|^2 \leq z M_t z' \leq \tfrac{1}{\gamma} |z|^2, \quad \text{when } 0<\gamma<1, \quad 0< \tfrac{1}{\gamma} |z|^2 \leq z M_t z' \leq |z|^2, \quad \text{when } \gamma>1.
\end{equation}
Therefore the $z$-term in $\mathcal{H}$ is positive and has quadratic growth. On the other hand, Assumption \ref{ass:bdd-coeff} implies that $h$ is bounded. We denote $h_{\min}= \essinf_{t\in[0,T]} h_t$ and $h_{\max}=\esssup_{t
\in [0,T]} h_t$. Due to the exponential term in $y$, we introduce a truncated version of \eqref{eq:BSDE Y bdd}
\begin{equation}\label{eq:BSDE Y bdd trunc}
 Y^n_t = \int_t^T \mathcal{H}^n(Y^n_s, Z^n_s) ds -\int_t^T Z^n_s d\overline{W}_s, \quad \text{ for } n>0,
\end{equation}
where the truncated generator
\[
 \mathcal{H}^n(y,z) := \tfrac12 z M_t z' + \theta \tfrac{\delta^\psi}{\psi} \big(e^{-\tfrac{\psi}{\theta}y} \wedge n\big) + h_t -\delta \theta
\]
is Lipschitz in $y$, quadratic growth in $z$, and $\mathcal{H}^n(0,0)$ is bounded. This is the \define{quadratic BSDE} studied in \cite{Kobylanski} and Theorem 2.3 therein implies that \eqref{eq:BSDE Y bdd trunc} admits a solution $(Y^n, Z^n)$ with $Y^n$ bounded and $Z^n\in H^2(\overline{\prob})$\footnote{$Z\in H^2(\overline{\prob})$ if $\expec^{\overline{\prob}}[\int_0^T |Z_s|^2 ds]<\infty$.}.

\vspace{2mm}

\noindent \textit{\underline{Case $0<\gamma<1, \gamma \psi>1$:}} The parameter specification on $\gamma$ and $\psi$ implies that $0<\theta<1$. Therefore the second term in $\mathcal{H}^n$ is positive, moreover $\mathcal{H}^n(y,z) \geq h_{\min} -\delta \theta$ for all $n$. Comparison theorem for quadratic BSDE (cf. \cite[Theorem 2.6]{Kobylanski}) yields that $Y^n_t \geq (h_{\min} -\delta \theta)(T-t)\geq -(h_{\min}-\delta \theta)_- T$, for all $t$ and $n$, where $f_- = -\min\{f, 0\}$. As a result, $\exp(-\tfrac{\psi}{\theta} Y^n) \leq \exp(\tfrac{\psi}{\theta} (h_{\min}-\delta\theta)_- T)$ for all $n$. Take $N:= \exp(\tfrac{\psi}{\theta} (h_{\min}-\delta)_- T)$. For any $n\geq N$, $\mathcal{H}(Y^n, Z^n) = \mathcal{H}^n(Y^n, Z^n)$, therefore, $(Y, Z) := (Y^n, Z^n)$ is a solution to \eqref{eq:BSDE Y bdd}.

\vspace{2mm}

\noindent \textit{\underline{Case $\gamma, \psi>1$:}} The parameter specification and $\gamma$ and $\psi$ implies that $\theta<0$. Therefore, the second term in $\mathcal{H}^n$ is negative, moreover $\mathcal{H}^n(y,z) \leq \tfrac12 |z|^2 + h_{\max} -\delta \theta$. Consider a BSDE
\[
 \overline{Y}^n_t = \int_t^T \big(\tfrac12 |\overline{Z}^n_s|^2 + h_{\max} -\delta \theta \big) ds -\int_t^T \overline{Z}^n_s d\overline{W}_s,
\]
which has the solution $\overline{Y}^n_t = (h_{\max}-\delta \theta)(T-t)$ and $\overline{Z}^n_t =0$. Then comparison theorem for quadratic BSDE yields that $Y^n_t \leq \overline{Y}^n_t\leq (h_{\max}-\delta \theta)_+ T$, for all $t$ and $n$, where $f_+ = \max\{f,0\}$. As a result, $\theta<0$ implies that $\exp(-\tfrac{\psi}{\theta} Y^n)\leq \exp(-\tfrac{\psi}{\theta}(h_{\max}-\delta \theta)_+ T)$ for all $n$. Take $N:= \exp(-\tfrac{\psi}{\theta}(h_{\max}-\delta \theta)_+ T)$. For any $n\geq N$, $\mathcal{H}(Y^n, Z^n) = \mathcal{H}^n(Y^n, Z^n)$, therefore, $(Y,Z):= (Y^n, Z^n)$ is a solution to \eqref{eq:BSDE Y bdd}.

\vspace{2mm}

Finally, we will show $Z\in H_{\BMO}$ in both cases. For any stopping time $\tau$, \eqref{eq:BSDE Y bdd} and $Z\in H^2(\overline{\prob})$ imply
\[
 \tfrac12 \expec^{\overline{\prob}}_{\tau} \Big[\int_\tau^T Z_s M_s Z_s' ds\Big] = Y_\tau - \expec^{\overline{\prob}}_{\tau}\Big[\int_\tau^T \theta \tfrac{\delta^\psi}{\psi} e^{-\tfrac{\psi}{\theta} Y_s} + h_s -\delta \theta \, ds\Big].
\]
Since $Y$ and $h$ are bounded. The right-hand side of the previous identity is bounded by some constant $C$, which does not depend on $\tau$. Therefore $\expec_\tau [\int_\tau^T Z_s M_s Z'_s ds]\leq 2C$ for any stopping time $\tau$. Combining the previous inequality with \eqref{eq:M quad}, we confirm $Z\in H_{\BMO}(\overline{\prob})$. Since $\mu' \Sigma^{-1} \sigma \rho$ is bounded, hence it also belongs to $H_{\BMO}(\prob)$. It then follows from \cite[Theorem 3.6]{Kazamaki} that $Z\in H_{\BMO}(\prob)$.
\end{proof}

\begin{proof}[Proof of Theorem \ref{thm:opt str bdd}]
 For the solution $(Y, Z)$ of \eqref{eq:BSDE Y} constructed in Lemma \ref{lem:BSDE Y bdd}, and $\pi^*, c^*, D^*$ defined in \eqref{eq:opt str}, let us define
 \begin{equation}\label{eq:UV*}
  \bU^*_t = \tfrac{1}{1-\gamma} (\mathcal{W}^{\pi^*, c^*}_t)^{1-\gamma} e^{Y_t} \quad \text{and} \quad \bV^{y*}_t = \tfrac{\gamma}{1-\gamma} (y D^*_t)^{\tfrac{\gamma-1}{\gamma}} e^{Y_t/\gamma}.
 \end{equation}
 We will prove $\bU^*, \bV^*\in \cU$, and
 \begin{align}
  & \bU^*_t = \expec_t \Big[\int_t^T f(c^*_s, \bU^*_s) ds + U_T(\mathcal{W}^{\pi^*, c^*}_T)\Big],\label{eq:U*eqn}\\
  & \bV^{y*}_t = \expec_t \Big[\int_t^T g(y D^*_s, \tfrac{1}{\gamma} \bV^{y*}_s) ds + V_T(y D^*_T)\big],\label{eq:V*eqn}
 \end{align}
 for any $y>0$ and $t\in[0,T]$. Therefore the previous statements imply that $(\pi^*, c^*)\in \mathcal{A}$ and $D^*\in \cD_a$. Take $y=y^*= w^{-\gamma} e^{Y_0}$ and denote $\bV^* = \bV^{y^**}$. We have from $\mathcal{W}^{\pi^*, c^*}_0=w$ and $D^*_0=1$ that
 \[
  \bU^*_0= \tfrac{1}{1-\gamma} w^{1-\gamma}e^{Y_0} = \tfrac{\gamma}{1-\gamma} (y^*)^{\tfrac{\gamma-1}{\gamma}} e^{Y_0/\gamma} + w y^*  = \bV^*_0 + w y^*= \inf_{y>0} (\bV_0^{y*} + wy).
 \]
 Combining the previous identity with \eqref{eq:dual-ineq}, we confirm \eqref{eq:dual-eq}.

 \vspace{2mm}

 \noindent\textit{\underline{$\bU^*\in \mathcal{U}$ and it satisfies \eqref{eq:U*eqn}:}} We denote $\mathcal{W}^{\pi^*, c^*}$ by $\mathcal{W}^*$. Using \eqref{eq:Ito primal}, where $(Y^p, Z^p)$ is replaced by $(Y,Z)$, $H$ from \eqref{eq:H-rep}, and $(\pi^*, c^*)$ from \eqref{eq:opt str}, we obtain
 \begin{align*}
  &d\pare{\mathcal{W}^*_t}^{1-\gamma} e^{Y_t} \\
  &= - \pare{\mathcal{W}^*_t}^{1-\gamma} e^{Y_t} \Big(\delta \theta \pare{c^*_s}^{1-\frac{1}{\psi}} \big(\pare{\mathcal{W}^*_t}^{1-\gamma} e^{Y_t}\big)^{-\frac{1}{\theta}}- \delta \theta\Big) dt + \pare{\mathcal{W}^*_t}^{1-\gamma} e^{Y_t} \bra{(1-\gamma) (\pi^*_t)' \sigma_t dW^\rho_t + Z_t dW_t}\\
  &= -\pare{\mathcal{W}^*_t}^{1-\gamma} e^{Y_t} \big(\theta \delta^\psi e^{-\frac{\psi}{\theta}Y_t} -\delta \theta\big) dt + \pare{\mathcal{W}^*_t}^{1-\gamma} e^{Y_t} \bra{(1-\gamma) (\pi^*_t)' \sigma_t dW^\rho_t + Z_t dW_t}.
 \end{align*}
 This implies
 \begin{equation}\label{eq:WeY}
  \pare{\mathcal{W}_t^*}^{1-\gamma} e^{Y_t} = w^{1-\gamma} e^{Y_0} \exp \Big(-\int_0^t \big(\delta^\psi \theta e^{-\frac{\psi}{\theta} Y_s} -\delta \theta\big)ds\Big) Q_t,
 \end{equation}
 where
 \begin{equation}\label{eq:Q}
 \begin{split}
  Q_t &= \mathcal{E} \Big(\int (1-\gamma) (\pi^*_s)' \sigma_s dW^\rho_s + \int Z_s dW_s\Big)_t= \mathcal{E}\Big(\int L_s dW_s + \int L^\bot_s dW_s\Big)_t,\\
  L &= \tfrac{1-\gamma}{\gamma} \mu' \Sigma^{-1} \sigma \rho + Z M, \quad L^\bot = \tfrac{1-\gamma}{\gamma} (\mu' + Z \rho' \sigma') \Sigma^{-1} \sigma \rho^\bot.
 \end{split}
 \end{equation}
 Since $Y$ is bounded, the first three terms on the right-hand side of \eqref{eq:WeY} are bounded uniformly for $t\in[0,T]$. For the exponential local martingale $Q$, note that $\mu' \Sigma^{-1} \sigma \rho \rho' \sigma' \Sigma^{-1} \mu\leq \mu' \Sigma^{-1} \mu$ and $Z M M' Z' \leq 2 [1+ (\tfrac{1-\gamma}{\gamma})^2] |Z|^2$. Therefore the boundedness of $\mu' \Sigma^{-1}\mu$ and $Z\in H_{\BMO}$ imply $L\in H_{\BMO}$ as well. Similar argument yields $L^\bot \in H_{\BMO}$. It then follows from \cite[Theorem 2.3]{Kazamaki} that $Q$ is a martingale, hence is of class (D). Coming back to \eqref{eq:WeY}, we have confirmed that $(\mathcal{W}^*)^{1-\gamma} e^{Y}$ is of class (D), hence $\bU^* \in \mathcal{U}$.

 To verify \eqref{eq:U*eqn}, we note that $\bU^* + \int_0^\cdot f(c^*_s, \bU^*_s) ds$ is a local martingale. Taking a localizing sequence $(\sigma_n)_{n\geq 1}$, we obtain
 \[
  \bU^*_t + \delta \theta \expec_t\Big[\int_t^{T\wedge \sigma_n} \bU^*_s ds\Big] = \expec_t \Big[\bU^*_{T\wedge \sigma_n} + \int_t^{T\wedge \sigma_n} \delta \frac{(c^*_s)^{1-\tfrac{1}{\psi}}}{1-\tfrac{1}{\psi}}((1-\gamma) \bU^*_s)^{1-\tfrac{1}{\theta}} ds\Big], \quad \text{ on } \{t<\sigma_n\}.
 \]
 Sending $n\rightarrow \infty$, the monotone convergence theorem and the class (D) property of $\bU^*$ yield
 \[
  \bU^*_t + \delta \theta \expec_t\Big[\int_t^{T} \bU^*_s ds\Big] = \expec_t \Big[U_T(\mathcal{W}^*_T) + \int_t^T \delta \frac{(c^*_s)^{1-\tfrac{1}{\psi}}}{1-\tfrac{1}{\psi}}((1-\gamma) \bU^*_s)^{1-\tfrac{1}{\theta}} ds\Big].
 \]
 The class (D) property of $\bU^*$ implies that $\delta \theta\expec_t[\int_t^T \bU^*_s ds]$ is finite almost surely. Subtracting it from both sides of the previous equation, we confirm \eqref{eq:U*eqn}.

 \vspace{2mm}

 \noindent\textit{\underline{$\bV^{y*}\in \mathcal{U}$ and it satisfies \eqref{eq:V*eqn}:}} Using \eqref{eq:Ito dual} together with $(\xi^*, \eta^*)$ from \eqref{eq:optimizer}, where $(Y^d, Z^d)$ is replaced by $(Y,Z)$, we obtain
 \begin{align*}
  d(D^*_t)^{\tfrac{\gamma-1}{\gamma}} e^{\tfrac{Y_t}{\gamma}} =& -\tfrac{\theta}{\gamma \psi} \delta^\psi (D^*_t)^{\tfrac{\gamma-1}{\gamma}} e^{(1-\tfrac{\gamma \psi}{\theta}) \tfrac{Y_t}{\gamma}} dt + \tfrac{\delta \theta}{\gamma} (D^*_t)^{\tfrac{\gamma-1}{\gamma}} e^{\tfrac{Y_t}{\gamma}} dt \\
  &+ (D^*_t)^{\tfrac{\gamma-1}{\gamma}} e^{\tfrac{Y_t}{\gamma}}[(1-\gamma) (\pi^*_t)' \sigma_t dW^\rho_t + Z_t dW_t]
 \end{align*}
 The previous SDE for $(D^*)^{\tfrac{\gamma-1}{\gamma}} e^{Y_t/\gamma}$ has the following solution
 \begin{equation}\label{eq:DeY}
  (D^*_t)^{\tfrac{\gamma-1}{\gamma}} e^{\tfrac{Y_t}{\gamma}} = e^{\tfrac{Y_0}{\gamma}} \exp\Big(-\tfrac{\theta}{\gamma \psi} \delta^\psi \int_0^t e^{-\tfrac{\psi}{\theta} Y_s}ds +\tfrac{\delta \theta}{\gamma}t\Big) Q_t,
 \end{equation}
 where $Q_t$ comes from \eqref{eq:Q}. Since $Y$ is bounded, the second term on the right-hand side is bounded uniformly for $t\in[0,T]$. Moreover, we have seen from the previous step that $Q$ is of class (D). Therefore $(D^*)^{\tfrac{\gamma-1}{\gamma}} e^{Y/\gamma}$ is of class (D), and so is $\bV^{y*}$ for any $y>0$. Note that $\bV^{y*} + \int_0^\cdot g(y D^*_s, \tfrac{1}{\gamma}\bV^{y*}_s) ds$ is a local martingale. The similar localization argument as the previous step confirms \eqref{eq:V*eqn}.
\end{proof}

\begin{rem}\label{rem:-Y/theta bdd}
 A careful examination reveals that the previous proof only requires $-Y/\theta$ to be bounded from above and $Q$ to be a martingale. Indeed, when $-Y/\theta$ is bounded from above, both the third term on the right-hand side of \eqref{eq:WeY} and the second term on the right-hand side of \eqref{eq:DeY} are bounded. Combined with the class (D) property of $Q$, we reach the same conclusion. We record this observation here for future reference.
\end{rem}

\begin{proof}[Proof of Corollary \ref{cor:utility graident}]
 We will prove that $D^*$ given in \eqref{eq:utility gradient} satisfies the SDE of $D^*$ in \eqref{eq:opt str}. Since this SDE clearly admits an unique solution, $D^*$ must beg given by \eqref{eq:utility gradient}. We denote $\mathcal{W}^{\pi^*, c^*}$ by $\mathcal{W}^*$ and $U^{c^*}$ by $U^*$. Combining \eqref{eq:EZ agg}, \eqref{eq:opt str} and \eqref{eq:UV*}, we obtain from calculation that
 \begin{align*}
   D^*_t &= w^{\gamma} e^{-Y_0} \exp\Big[\int_0^t \delta (\theta-1) ((1-\gamma) U^*_s)^{-\frac1\theta} (c^*_s)^{1-\frac1\psi} ds -\delta \theta t\Big] \delta ((1-\gamma) U^*_t)^{1-\frac1\theta}(c^*_t)^{-\frac1\psi}\\
  &=\exp\Big[\int_0^t (\theta-1) \delta^{\psi} e^{-\frac{\psi}{\theta} Y_s} ds - \delta \theta t\Big] \frac{(\mathcal{W}_t^*)^{-\gamma} e^{Y_t}}{w^{-\gamma} e^{Y_0}}.
 \end{align*}
 On the other hand, set $\overline{c}^* = c^*/\mathcal{W}^*$. Calculation using \eqref{eq:BSDE Y} and \eqref{eq:opt str} yield
 \begin{align*}
  d (\mathcal{W}^*)^{-\gamma} =& (\mathcal{W}^*)^{-\gamma} \Big[-\gamma(r  - \overline{c}^* + (\pi^*)' \mu) + \tfrac{\gamma(\gamma+1)}{2} (\pi^*)' \Sigma \pi^*\Big]dt - \gamma (\mathcal{W}^*)^{-\gamma} (\pi^*)' \sigma dW^\rho\\
  =& (\mathcal{W}^*)^{-\gamma} \Big[-\gamma (r -\tilde{c}^*) + \tfrac{1-\gamma}{2\gamma} \mu' \Sigma^{-1} \mu + \tfrac{1}{\gamma} \mu' \Sigma^{-1} \sigma \rho Z' + \tfrac{1+\gamma}{2\gamma} Z \rho' \sigma' \Sigma^{-1} \sigma \rho Z'\Big] dt \\
  &- \gamma (\mathcal{W}^*)^{-\gamma} (\pi^*)' \sigma dW^\rho\\
  de^Y =& e^Y \Big[-H(Y, Z) + \tfrac12 ZZ'\Big] dt + e^Y Z dW.
 \end{align*}
 Combining the previous three identities, we confirm
 \begin{align*}
  dD^* = & D^* \Big[-\gamma (r -\overline{c}^*) + (\theta -1) \delta^\psi e^{-\frac{\psi}{\theta} Y} - \delta \theta\\
  & \hspace{7mm} + \tfrac{1-\gamma}{\gamma} \mu' \Sigma^{-1}\mu + \tfrac{1-\gamma}{\gamma} \mu' \Sigma^{-1} \sigma \rho Z' + \tfrac12 Z M Z' - H(t, Y, Z)\Big] dt\\
  &+ D^* [-\gamma (\pi^*)' \sigma dW^\rho + Z dW]\\
  =& D^* \Big[-r + \big(\theta -1 - \tfrac{\theta}{\psi} + \gamma\big) \delta^\psi e^{-\frac{\psi}{\theta} Y}\Big] dt + D^* \bra{-\gamma (\pi^*)' \sigma dW^\rho + Z dW}\\
  =& -r D^* dt + D^* \bra{-\gamma (\pi^*)' \sigma dW^\rho + Z dW},
 \end{align*}
 where the third identity follows from $\theta + \gamma -1 - \frac{\theta}{\psi} = 0$.

 For the second statement, when \eqref{eq:dual-eq} holds, the first inequality in \eqref{eq:dual-bdd} must be an identity. Hence $\expec\big[\mathcal{W}^*_T D^*_T + \int_0^T D^*_s c^*_s ds\big] = w$, which implies the martingale property of $D^* \mathcal{W}^* +\int_0^\cdot D^*_s c^*_s ds$, since this process is already a supermartingale by the definition of state price density.
\end{proof}

\begin{proof}[Proof of Theorem \ref{thm:opt str unbdd}]
 Since $Y$ is bounded from above and $\theta<0$, we have $-Y/\theta$ to be bounded from above. On the other hand, \cite[Lemma B.2]{Xing} proved that $Q$ from \eqref{eq:Q} is a martingale. Therefore the statement readily follows from Remark \ref{rem:-Y/theta bdd}.
\end{proof}

\begin{proof}[Proof of Proposition \ref{prop:heston}]
 This proof is a minor generalization of \cite[Proposition 3.2]{Xing}, whose Assumption 2.11 is no longer needed here, see Remark \ref{rem:relax cond}. For the rest assumptions,
 Assumption \ref{ass:gen-coeff} follows from the fact that $r(x) + \tfrac{1}{2\gamma} \mu(x)' \Sigma(x)^{-1} \mu(x) = r_0 + (r_1 + \tfrac{1}{2 \gamma} \lambda' \Theta(x) \lambda) x$ which is bounded from below on $(0,\infty)$. Assumptions \ref{ass:over P} and \ref{ass: phi} are verified in what follows.

 \vspace{2mm}
 \noindent\underline{Assumption \ref{ass:over P}:} Note $\tfrac{1-\gamma}{\gamma} \mu(x)' \Sigma(x)^{-1} \sigma(x) \rho(x) = \tfrac{1-\gamma}{\gamma} \lambda' \Theta(x) \rho \sqrt{x}$. Consider the martingale problem associated to $\overline{\mathcal{L}}:= \bra{b\ell - \pare{b-\frac{1-\gamma}{\gamma} a \lambda' \Theta(x)\rho} x}\partial_x + \frac12 a^2 x \partial^2_x$ on $(0,\infty)$. Since $\Theta(x)$ is bounded and $b\ell > \frac12 a^2$, Feller's test of explosion implies that the previous martingale problem is well-posed. Then \cite[Remark 2.6]{Cheridito-Filipovic-Yor} implies that the stochastic exponential in Assumption \ref{ass:over P} (i) is a $\prob-$martingale, hence $\overline{\prob}$ is well defined. For Assumption \ref{ass:over P} (ii), $h(x) = (1-\gamma) r_0 + \bra{(1-\gamma) r_1 + \frac{1-\gamma}{2\gamma} \lambda'\Theta\lambda}x$. Since $X$ has the following dynamics under $\overline{\prob}$:
  \[
   dX_t = \Big[b\ell - \Big(b- \tfrac{1-\gamma}{\gamma}a \lambda' \Theta(x)\rho\Big)X_t\Big] + a \sqrt{X_t} d\overline{W}_t,
  \]
  where $\overline{W}$ is a $\overline{\prob}-$Brownian motion. Then $\expec^{\overline{\prob}}[\int_0^T h(X_s) ds]>-\infty$ follows from the fact that $\Theta(x)$ is bounded hence $\expec^{\overline{\prob}}[X_s]$ is bounded uniformly for $s\in[0,T]$.

  \vspace{2mm}

  \noindent{\underline{Assumption \ref{ass: phi}}:} The operator $\fF$ in \eqref{eq: fF} reads
  \[
   \fF[\phi] = \tfrac12 a^2 x \partial^2_x \phi + \Big(b\ell - bx + \tfrac{1-\gamma}{\gamma} a \lambda' \Theta(x)\rho x\Big) \partial_x \phi +\tfrac12 \tilde{M} a^2 x (\partial_x \phi)^2 + (1-\gamma) (r_0+ r_1 x) + \tfrac{1-\gamma}{2\gamma} \lambda'\Theta(x)\lambda x,
  \]
  where $\tilde{M} = 1+ \tfrac{1-\gamma}{\gamma} \rho' \Theta(x)\rho>0$. Consider $\phi(x) = -\underline{c} \log x + \overline{c} x$, for two positive constants $\underline{c}$ and $\overline{c}$ determined later. It is clear that $\phi(x)\uparrow \infty$ when $x\downarrow 0$ or $x\uparrow \infty$. On the other hand, calculation shows
  \begin{align*}
   \fF[\phi] =& C +  \Big[\tfrac12 a^2\underline{c} + \tfrac12 a^2\underline{c}^2 \tilde{M} - b\ell \underline{c}\Big] \frac1x \\
   &+ \Big[-\Big(b- \tfrac{1-\gamma}{\gamma} a \lambda' \Theta(x)\rho\Big) \overline{c} + \tfrac12 a^2 \overline{c}^2 \tilde{M} + (1-\gamma) r_1 + \tfrac{1-\gamma}{2\gamma} \lambda' \Theta(x)\lambda\Big]x,
  \end{align*}
  where $C$ is a constant. Since $b \ell >\frac12 a^2$, the coefficient of $1/x$ is negative for sufficiently small $\underline{c}$. When $r_1$ or $\lambda'\Theta(x)\lambda>0$, since $\gamma>1$ and $\Theta(x)$ is bounded, the coefficient of $x$ is negative for sufficiently small $\overline{c}$. Therefore, these choices of $\underline{c}$ and $\overline{c}$ imply that $\fF[\phi](x) \downarrow -\infty$ when $x\downarrow 0$ or $x\uparrow \infty$, hence $\fF[\phi]$ is bounded from above on $(0,\infty)$, verifying Assumption \ref{ass: phi}.
\end{proof}

\bibliographystyle{abbrvnat}
\bibliography{biblio}
\end{document}